\numberwithin{equation}{section}
\newcommand{\re}{\mathrm{Re}\, }
\newcommand{\im}{\mathrm{Im}\,}
\newcommand{\I}{{\rm{i}}}
\newcommand{\D}{{\rm{d}}}
\newcommand{\E}{{\rm e}}
\newcommand{\tr}{\operatorname{tr}}
\newcommand{\identity}{{\mathds{1}}}
\newcommand{\Identity}{{\mathrm{Id}}}
\newcommand{\spec}{\sigma}
\newcommand{\be}{\begin{equation}}
\newcommand{\ee}{\end{equation}}
\newcommand{\bea}{\begin{eqnarray}}
\newcommand{\eea}{\end{eqnarray}}
\newcommand{\range}{\operatorname{ran}}
\newcommand{\Span}{\operatorname{span}}
\newcommand{\card}{\operatorname{card}}
\newcommand{\bra}[1]{\langle #1 |}
\newcommand{\braket}[2]{\langle #1 | #2 \rangle}
\newcommand{\ketbra}[2]{| #1 \rangle \langle #2 |}
\def\integer{{\mathbb{Z}}}
\def\pinteger{{\mathbb{N}}}
\def\spinteger{{\mathbb{N}}^\star}
\def\real{{\mathbb{R}}}
\def\complex{{\mathbb{C}}}
\def\proba{{\rm I\kern -.18em P}}
\newcommand{\ie}{i.e.\;}
\newcommand{\RHS}{right-hand side\;\,}
\newcommand{\iv}{{\underline{i}}}
\newcommand{\jv}{{\underline{j}}}
\newcommand{\kv}{{\underline{k}}}
\newcommand{\lv}{{\underline{l}}}
\newcommand{\muv}{{\underline{\mu}}}
\newcommand{\nuv}{\underline{\nu}}
\newcommand{\Aa}{{\cal A}}
\newcommand{\Bb}{{\cal B}}
\newcommand{\Dd}{{\cal D}}
\newcommand{\Ff}{{\cal F}}
\newcommand{\Hh}{{\cal H}}
\newcommand{\Nn}{{\cal N}}
\newcommand{\Pp}{{\cal P}}
\newcommand{\Rr}{{\cal R}}
\newcommand{\Ttt}{{\cal T}}
\newcommand{\Uu}{{\cal U}}
\newcommand{\Vv}{{\cal V}}
\newcommand{\HB}{H_{B}}
\newcommand{\US}{V}
\newcommand{\SB}{S}
\newcommand{\Uint}{U_{\rm int}}
\newcommand{\U}{U_{\lambda}}
\newcommand{\setmu}{\sigma(T)}
\newcounter{resultcounter}[section]
\newtheorem{thm}[resultcounter]{Theorem}
\newtheorem{lem}[resultcounter]{Lemma}
\newtheorem{prop}[resultcounter]{Proposition}
\newtheorem{cor}[resultcounter]{Corollary}
\newtheorem{rmk}[resultcounter]{Remark}
\begin{document}


\title{Fermionic quantum walkers coupled to a bosonic reservoir}

\author{  Olivier Bourget\footnote{Pontificia Universidad Cat\'olica de Chile, Departamento de Matem\'atica, Santiago, Chile; E-mail: bourget@uc.cl },
  Alain Joye\footnote{Univ. Grenoble Alpes, CNRS, Institut Fourier, 38000 Grenoble, France; E-mail: alain.joye@univ-grenoble-alpes.fr}, and
  Dominique Spehner\footnote{Universidad de Concepci\'on, 
    Departamento de Ingenier\'{\i}a Matem\'atica, Chile {\&}  Univ. Grenoble Alpes, CNRS, Institut Fourier and LPMMC, 38000 Grenoble, France; E-mail: dspehner@ing-mat.udec.cl}
}
\date{ }

\maketitle
		
\begin{abstract}
We analyse the discrete-time dynamics of a model of non-interacting { fermions 
coupled} to an infinite reservoir formed by a bosonic quantum walk on $\integer$.  This dynamics consists of consecutive applications of  free evolutions of the fermions and bosons  followed by a local coupling between them. The unitary operator implementing this coupling 
{ accounts for energy exchanges between the system and reservoir while it preserves the number of fermions.}
The free fermion evolution is given by a second-quantized single-particle unitary operator satisfying some genericity assumptions. 
{ The} free boson evolution is  given by the  second-quantized shift operator on  $\integer$. 
We derive explicitly the Heisenberg dynamics of fermionic observables and obtain a systematic expansion in the large-coupling regime, which we control by using spectral methods. We also prove that the reduced state of the fermions converges in the large-time limit to { a mixture of} infinite-temperature Gibbs { states in each particle sector}.
\end{abstract}
		
\section{Introduction}

Quantum walks (QWs) have emerged as a powerful and versatile modeling tool in quantum physics, with natural applications in areas such as quantum optics and condensed matter physics. They also provide non-trivial models for the mathematical analysis of quantum scattering and spectral theory. Moreover, QWs are used in quantum computing and information processing as building blocks for quantum search algorithms, and they are considered non-commutative extensions of classical random walks and Markov chains in probability theory. For further details, see \cite{Ke, V-A, CC, P, J4, ABJ3, GZ, QMS, Sa, G, APSS1, RT, T, CJWW, J5} and references therein.

In the study of transport phenomena in complex random quantum systems, QWs offer a simplified yet relevant framework for rigorously analyzing the dynamics of quantum particles on infinite discrete structures, such as graphs or lattices (see, e.g. \cite{Ko, JM, ASW, ABJ2, HJS2, J3, HJ, CFGW, CFO}). When statistical ensembles of particles  are considered, fermionic QWs coupled to an environment provide a natural model for studying open quantum systems. The papers \cite{HamzaJoye17, Raquepa20, Anjora21} offer a mathematical description of the onset of an asymptotic state in the case of a single reservoir and describe the particle currents induced by multiple reservoirs in non-equilibrium situations.

In this work, we analyse the discrete-time dynamics of non-interacting fermionic QWs 
 on a finite graph (the sample) coupled to an infinite reservoir of bosonic QWs on the lattice $\integer$. One time step of the dynamics first consists for the fermionic and bosonic quantum walkers to undergo unitary free dynamics on the sample and reservoir, respectively. This is then followed by a unitary coupling between the fermionic and  bosonic { QWs}, acting locally on the { reservoir}. The coupling step allows for energy exchanges while it preserves the number of fermions in the sample. 

{ An example of sample-reservoir coupling we have in mind describes fermionic transitions between two fixed vertices of the graph induced by absorption and emission of  bosons at a single lattice site of $\integer$, the coupling having} 
the form
commonly encountered in quantum optics and condensed matter physics 
to modelize atom-photon and electron-phonon interaction processes.
The free fermionic evolution is given by the second-quantization of a single-particle unitary operator describing hopping of fermions between vertices.     
The free bosonic evolution is  given by the  second-quantization of the shift operator on  $\integer$. 
The initial state of the bosonic QWs is assumed to be a quasi-free state, similar to a thermal state. Unlike previous studies~\cite{HamzaJoye17, Raquepa20, Anjora21}, the distinct natures of the particles in the sample and reservoir prevent us from reducing the problem to a one-body problem. Additionally, while previous works have focused on weak coupling regimes, our framework naturally leads us to explore the large coupling regime, akin to the high-temperature regime, to analyze the asymptotic state of the sample.

In this framework, we derive an explicit description of the Heisenberg dynamics of observables acting on the fermionic sample in terms of the main characteristics of the model. For any fixed time, this expression { yields} a systematic expansion in the large-coupling regime, which we control using spectral methods. Furthermore, under natural generic assumptions on the single-fermion unitary operator describing the free evolution, we prove that the reduced state of the fermionic QWs converges in the large-time limit to a mixture of infinite-temperature Gibbs states in each particle sector. Our approach takes advantage of the local nature of the coupling { on the bosonic lattice $\integer$} to expand the time-evolved fermionic observables as infinite series that can be controlled in the large-coupling limit. 

Our work, albeit devoted to a specific { coupling involving only} one reservoir, belongs to the mathematical analysis of related models characterized by local couplings between fermionic and bosonic particles on graphs or lattices. In particular, a fermionic sample coupled to two bosonic reservoirs in different thermodynamical states, hosting asymptotic currents between reservoirs, can be analysed using a similar approach; this question lies beyond the scope of the present paper and will be addressed in future works.  

The remainder of this paper is organized as follows: The model is described in the next section, while Section 3 sets the notation and recalls the basic properties of Fock spaces that will be used in the following sections. Section 4 is devoted to deriving the expression for the Heisenberg evolution of the sample observables and analyzing its behavior in the large coupling regime. The spectral and long-time properties of the dynamics are presented in Section 5 under genericity assumptions. The paper concludes with Section 6, which justifies some of these assumptions on the basis of unitary random matrices.

\vspace{0,5cm}
\vspace{0,5cm}

\noindent {\bf Acknowledgments:}
{ We thank the referee for suggesting an insightful generalization of our results.} This work has been partially supported by the French National Research Agency in the framework of the ”France 2030” program (ANR-11-LABX-0025-01) for the LabEx PERSYVAL, ANR Dynacqus ANR-24-CE40-5714-02, the Franco-Chilean ECOS-Anid Grant 200035 and the Chilean Fondecyt Grant 1211576.

\section{Model { and main results}} \label{sec-notation}

We consider fermionic particles
{ with a finite-dimensional single-particle Hilbert space $\Hh$ of dimension $d  < \infty$}. 
The fermionic Fock space is the second quantization $\Ff_{-} = \Gamma_{-} (\Hh )$ of $\Hh$,
so that  $\Ff_{-}$ has dimension $2^d$ (see Sec.~\ref{sec-notation} for more detail). 
The fermions do not interact with each other, thus their free dynamics is given by
the second-quantization $\Gamma_{-} (\US)$ of
a unitary operator $\US$ on $\Hh$.  We adopt the convention that a particle on a discrete configuration space defines a QW when the unitary operator giving its one time step evolution only couples sites of the configuration space a finite distance away. Since $\Hh$ { can be identified with $\ell^2 ( \Lambda)$, where $\Lambda$ is a finite graph  with $d= \card (\Lambda)$ sites}, any unitary operator $V$ on $\Hh$ can be considered the evolution of a QW. 

The fermionic particles are coupled to a reservoir composed of
non-interacting bosonic QWs, with Hilbert space given by the bosonic Fock space  $\Ff_{+} = \Gamma_{+} (\Hh_{\rm B}  )$,
$\Hh_{\rm B} = \ell^2 ( \integer )$ being the single-boson space and
$\Gamma_{+}  (\Hh_{\rm B}  )$ its second quantization.
The reservoir's free dynamics is described by the second quantization $\Gamma_{+} ( \SB )$ of
 the shift operator
$\SB$ on  $\Hh_{\rm B}$ defined by
\begin{equation} \label{eq-shift_bosons}
  \SB \delta_j = \delta_{j-1}\;\;,\;\; j \in \integer\;,
\end{equation}
where $\{ \delta_j \}_{j \in \integer}$ is the canonical basis of $\Hh_{\rm B}$.
The reservoir is an example of a statistical ensemble of QWs on a lattice~\cite{HamzaJoye17, Raquepa20, Anjora21}.

The fermion-reservoir coupling is given by a unitary operator $U_{\rm int}$ on $\mathcal{F}_{-} \otimes \mathcal{F}_+$ which has the following form.  We denote 
by $b_j$ and $b_j^\ast$ the bosonic annihilation and creation operators at site $j \in \integer$.
The coupling of the QWs on the sample with the QWs in the reservoir is given by the following unitary operator
on $\Ff_{-} \otimes \Ff_{+}$ 
{ 
\begin{equation} \label{eq-coupling_unitary}
\Uint =  \E^{-\I \lambda T \,\otimes \, ( b_0 + b_0^\ast)/\sqrt{2}}\;, 
\end{equation}    
where $\lambda\in \real$ is a system-reservoir coupling constant, $( b_0 + b_0^\ast)/\sqrt2$ is the bosonic field operator at lattice site $0$, acting on $\Ff_+$, and
$T$ is a self-adjoint operator on $\Ff_{-}$. 
 Note that the coupling Hamiltonian inside the exponential only involves the field operator at site $j=0$ (local coupling) and is linear in the field operator.
We suppose that  
$T= \D \Gamma_{-} ( \tau)$ is the second quantization of a self-adjoint operator $\tau$ on $\Hh$, although our results in Sec.~\ref{sec-large_coupling_limit} 
do not need this assumption. In analogy with the free dynamics, the coupling unitary \eqref{eq-coupling_unitary} is then the second quantization 
$\Uint = \Gamma_{-} \otimes {\rm Id}_{\Ff_+} (V_{\rm int})$ of a unitary operator $V_{\rm int} = \E^{-\I \lambda \tau \otimes (b_0+b_0^\ast)/\sqrt{2}}$ on $\Hh \otimes \Ff_{+}$ (see  Sec.~\ref{sec-notation}).
}

{ Like in} the models of fermionic quantum walks studied in~\cite{HamzaJoye17,Raquepa20,Anjora21}, the total evolution operator of the
fermionic walkers and reservoir is given by a sequential application 
of the  free evolution 
operator $\Gamma_{-} (\US) \otimes \Gamma_{+} ( \SB )$ and the
coupling unitary $U_{\rm int}$,
\begin{equation} \label{eq-total_evolution_op}
  U_\lambda =  U_{\rm int}\, (\Gamma_{-} ( \US )  \otimes \Gamma_{+} ( \SB ))  \;.
\end{equation}
The initial state  is a decorrelated state (product state) 
\begin{equation}
  \omega_{S} \otimes \omega_{B}\;,
\end{equation}
where $\omega_{S}$ and $ \omega_{B}$ are the initial fermionic and bosonic states, respectively. The bosons are
assumed to be { initially in} a quasi-free state (see Sec~\ref{sec-evol_fermionic_observables} for more details).

The state of the fermionic walkers and reservoir is given  at the time step $t\in \pinteger$ by
\begin{equation}
   \omega_S \otimes \omega_B\circ  \Uu_\lambda {  ( \cdot )^t} 
  \quad , \quad
  \Uu_\lambda ( \cdot ) = U^\ast_\lambda \cdot U_\lambda\; ,
\end{equation}
where $\Uu_\lambda ( \cdot )^t$ denotes the $t^{\rm th}$ power of $\Uu_\lambda ( \cdot )$. We are interested in the evolution at time $t$ of observables of the system;
more precisely in the map $\Ttt_t: \Bb(\Ff_-)\rightarrow  \Bb(\Ff_-)$ such that for any $X\in  \Bb(\Ff_-)$, 
\be\label{deftt}
 \omega_S(\Ttt_t(X) )=  \omega_S \otimes \omega_B( \Uu_\lambda^t ( X\otimes \identity ) ).
\ee

{ 
A specific example within the family of models  considered above is a fermionic QW on 
a finite graph $\Lambda$  with $d= \card (\Lambda)$ sites, coupled locally to bosonic QWs by the unitary operator $\Uint$ of the form
\eqref{eq-coupling_unitary} with
\begin{equation} \label{eq-T_transport-model}
T  = T_{\rm{hop}} \equiv \E^{\I \varphi} a_2^\ast a_1 + \E^{-\I \varphi} a_1^\ast a_2\;. 
\end{equation}
Here, $a_l$ and 
$a_l^\ast$ are the fermionic annihilation and creation operators at site $l \in \Lambda$ 
 and
$\varphi$ is a Peierls phase. The latter describes the effect of a magnetic field.
The operator $T$ on $\Ff_{-}$ is the second quantization $\D \Gamma_{-} ( \tau)$ of the single particle self-adjoint operator
$\tau_{\rm{hop}}= \E^{\I \varphi} \ketbra{e_2}{e_1} + \E^{-\I \varphi} \ketbra{e_1}{e_2}$, where 
$e_l$ is the state of a particle localized at site $l \in \Lambda$.
	Note the local character of the interaction at the fermionic sites $1$ and $2$.
	The unitary $U_{\rm int}$ with $T$ given by \eqref{eq-T_transport-model} models the interaction of fermionic atoms with the electromagnetic field
	within the dipolar approximation when the field is only coupled to states $e_1$ and $e_2$, and the $e_l$'s correspond to  atomic energy eigenstates.
	The dipolar approximation, which consists in neglecting non-linear terms in the field operator
	$(b_0+b_0^\ast)/\sqrt{2}$,
	is justified when the size of the system is much smaller than the variation of the electromagnetic field.
}

\begin{rmk}
{ The model with the choice $T=T_{\rm{hop}}$
	is related to, but different from, the 
Jaynes-Cumming model of quantum optics~\cite{Breuer-Petruccione}. In the latter model, in addition to the dipolar approximation one performs a  rotating-wave approximation to remove non-secular terms~\cite{Cohen-Tannoudji}. This 
amounts to drop the terms  $\E^{\I \varphi} \ketbra{e_2}{e_1} \otimes b_0^\ast$ and $\E^{-\I \varphi} \ketbra{e_1}{e_2} \otimes b_0$ in the interaction Hamiltonian. After second quantization, this leads to the coupling unitary 
\begin{equation}  \label{eq-coupling_unitary_RWA}
\Uint^{\rm RWA} =  \E^{-\I \lambda  (  \E^{\I \varphi} a_2^\ast a_1\, \otimes \, b_0 +  \E^{-\I \varphi} a_1^\ast a_2 \, \otimes \, b_0^\ast)/\sqrt{2}}
\;,
\end{equation}    
where the first (respectively second) term inside the parenthesis describes the absorption (resp. emission) of a photon from site $j=0$, provoking the hopping
of a fermion from site $l=1$ to site $l=2$ (resp. a hopping from $l=2$ to $l=1$). 
If one associates to each site $l\in \Lambda$ an energy $E_l$ and assume that $E_2 > E_1$, the evolution operator
$U_{\rm int}^{\rm RWA}$ can be interpreted as
modelling energy exchanges between the sample and reservoir. A similar interpretation can be drawn for  $U_{\rm int}$ in (\ref{eq-coupling_unitary}) with $T=T_{\rm{hop}}$,
albeit one keeps in the Hamiltonian 
terms $\E^{\I \varphi} a_2^\ast a_1 \otimes b_0^\ast$ and  $a_1^\ast a_2 \otimes b_0$ associated to  non-energy conserving processes.
Although the unitaries \eqref{eq-coupling_unitary_RWA} and \eqref{eq-coupling_unitary} describe similar physics, their mathematical structure is different. Our results below can not be straightforwardly extended to system-reservoir couplings of the form \eqref{eq-coupling_unitary_RWA}. 
}
\end{rmk}

{ 
Let us summarize the main results of this paper.
\begin{itemize}
\item[1)] {\it{Expansion of the fermionic evolution operator  $\Ttt_t$ at large coupling constants $\lambda$.}}

\vspace{1mm}

 We show in Sec.~\ref{sec-large_coupling_limit} that the time-evolved fermionic observables $\Ttt_t(X)$ can be written as an infinite series,
the first term of which is equal to $(\Vv \Phi )^t (X)$, where 
$\Vv (\cdot) = \Gamma_{-}(V^{-1}) \cdot \Gamma_{-}(V)$ gives the free evolution of fermionic observables and $\Phi$ is an orthogonal projector on $\Bb ( \Ff_{-})$ for the Hilbert-Schmidt scalar product. The projector $\Phi$ describes the effect on the sample of the coupling with the reservoir in the infinite coupling constant limit $ \lambda \to \infty$. Actually, we show that the remainder of the truncated series
formed by 
the first $s$ terms, with $s \geq 1$, is of order  
$e^{-c\lambda^2(s+1)}$ for some constant $c>0$. 

More precisely, 
we prove  that if the reservoir initial state $\omega_B$ is a quasi-free state (see Sec~\ref{sec-evol_fermionic_observables}) and $T$ is a self-adjoint 
operator on $\Bb(\Ff_{-})$, then there exists $\lambda_0>0$ such that  for all $\lambda > \lambda_0$ and $t \in \spinteger$, the norm of the remainder is bounded from above by
\begin{equation}
C_s \|X\| \,\E^{-\gamma t/2} \E^{-\lambda^2 (s+1)\Delta /4}
\end{equation}  
for some constants $C_s>0$  and $\gamma>0$, where 
$\Delta = \min_{\mu \not= \nu, \mu,\nu \in \spec (T)}|\mu- \nu|$ is the minimal spectral gap of $T$. 
In particular, taking $s=1$, one obtains that
\begin{equation}
\Ttt_t \to (\Vv \Phi)^t \quad \text{uniformly in $t$ as $\lambda \to \infty$\;,}
\end{equation}
where the convergence is in operator norm.  

\item[2)] {\it{Exponential convergence to a steady state in the large time limit.}}

\vspace{1mm}

 We show in Sec.~\ref{sec-convergence_large_time} that the fermionic state converges in the large time limit $t \to \infty$ to a steady state $\omega_S^\infty$  given by an infinite-temperature Gibbs state in each $n$-particle subspace $\Hh^{\wedge n} \subset 
\Ff_{-}$.  

Let us first discuss the assumptions under which we can prove this convergence. In addition 
to the quasi-freeness hypothesis on the reservoir initial state $\omega_B$ and the choice of $T$ as the second-quantized of a self-adjoint operator on $\Hh$ (see above), we make the following hypotheses on the coupling operator $T$ and the fermionic free evolution operator $\Gamma_{-}(V)$. Let $B^\mu$, $\mu \in \spec (T)$, be the spectral projectors  of $T$. We assume that
\begin{itemize}
\item[(i)] $\Vv$ has simple eigenvalues save for the eigenvalue $1$, which is $\dim \Ff_{-}$- fold degenerate.
\item[(ii)] For all $n=1,\ldots, d$, the matrix of the restriction $B^\mu|_{\Hh^{\wedge n}}$ in the eigenbasis of $\Gamma(V) |_{\Hh^{\wedge n}}$ has only non zero diagonal elements, for all $\mu \in \spec (T|_{\Hh^{\wedge n}})$, and for all matrix indices $k,l$, $k \not=l$, at least one spectral projector $B^\mu$ has a nonzero entry with indices $(k,l)$. 
\end{itemize}
Condition (ii) basically means that the spectral projectors of $T$ connect all pairs of eigenstates of $\Gamma_{-} (V)$.
Under these assumptions, we show that
there exists $\lambda_0>0$ such that for $\lambda>\lambda_0$ and $X \in \Bb(\Ff_-)$,
\begin{equation} 
\lim_{t \to \infty} \omega_S ( \Ttt_t ( X)) = \omega_S^\infty ( X)\;,
\end{equation}
where the asymptotic state	$\omega_S^\infty (\cdot )= \tr ( \rho_S^\infty \,  \cdot )$ has a density matrix given by
\begin{equation}
\rho_S^\infty = \bigoplus_{n=0}^d \begin{pmatrix} d \\ n \end{pmatrix}^{-1}\tr_{\Hh^{\wedge n}} ( \rho_S |_{\Hh^{\wedge n}})  \,\identity_{\Hh^{\wedge n}}\;.
\end{equation}
In particular, if there are initially $n$ fermions in the sample, i.e., if the initial state $\omega_S$ has a density matrix $\rho_{S}$ with support and range  in $\Hh^{\wedge n}$,
then the asymptotic state is an infinite-temperature Gibbs state in the $n$-particle sector,
\begin{equation}
\rho_S^\infty= \begin{pmatrix} d \\ n \end{pmatrix}^{-1} \,\identity_{\Hh^{\wedge n}}\;,
\end{equation}
where, with a slight abuse of notation, $\identity_{\Hh^{\wedge n}}$ stands for the 
 orthogonal projector onto $\Hh^{\wedge n}$.
\end{itemize}

}

\section{Notation} 

In this section we briefly specify the notation used and recall the definitions and basic properties of the fermionic and bosonic creation and annihilation operators. 

Given a Hilbert space $\Hh$ of dimension $d<\infty$, one constructs the associated 
fermionic Fock space $\Ff_{-}$ as follows (see e.g.~\cite{Bratteli} for more detail).
Let $\Hh^{\otimes n}$ denote the $n$-fold tensor product of $\Hh$. The antisymmetric tensor product of the
 vectors $u_1, \dots, u_n\in \Hh$, $1 \leq n\leq d$, reads  
\be\label{wedge}
u_1\wedge  \cdots \wedge u_n=\frac{1}{\sqrt{n!}}\sum_{\pi\in S_n} \epsilon_{\pi}  u_{\pi(1)}\otimes \dots \otimes u_{\pi(n)} \ \in\ \Hh^{\otimes n}\;,
\ee
where $S_n$ is the  group of permutations of $\{1,\dots ,n\}$ and $\epsilon_{\pi}$ is the signature of $\pi \in S_n$. The scalar product of two such vectors is given by
\begin{align}\label{scalprodet}
\langle u_1\wedge  \cdots \wedge u_n | v_1\wedge  \cdots \wedge v_n \rangle= \det\big(  (\langle u_k | v_l\rangle)_{1\leq k, l  \leq n} \big)\;,
\end{align}
where $\braket{u}{v}$ is the scalar product of $u$ and $v$ in $\Hh$.
For $1 \leq n \leq d$, the orthogonal projection $\Pp_{-}^{(n)}$ acting on $\Hh^{\otimes n}$  is defined by linearity and its action on the $n$-fold product vectors, 
\be\label{defproj}
\Pp_{-}^{(n)}  u_1\otimes   \cdots \otimes u_n = \frac{1}{\sqrt{n!}} \, u_1\wedge  \cdots \wedge u_n\;.
\ee
The $n$-fold antisymmetric tensor product of $\Hh$, denoted by $\Hh^{\wedge n}$, is the subspace $\Pp_{-}^{(n)} \Hh^{\otimes n}$.

If {  $\{ f_i \}_{i=1}^d$ is a basis of $\Hh$,
then $\Hh^{\wedge n}$ is generated by the vectors $f_{i_1} \wedge f_{i_2} \wedge \dots \wedge f_{i_n}$, $\iv= (i_1,i_2,\dots, i_n) \in I_n$, where} 
\be \label{eq-notation_I_n}
I_n=\bigl\{ \iv \in \pinteger^n\,;\, 1\leq i_1 < i_2 < \dots < i_n\leq d\big\}
\ee
is the  set of ordered indices. The fact that one can restrict $\iv$ to this set follows from the antisymmetry of $f_{i_1} \wedge \dots \wedge f_{i_n}$ under the permutation of two vectors.
By \eqref{scalprodet}, if the basis $\{ f_i \}_{i=1}^d$ is orthonormal then $\{f_{i_1}\wedge \dots \wedge f_{i_{n}} \}_{\iv \in I_n}$ is an orthonormal basis of $\Hh^{\wedge n}$. Thus
$\Hh^{\wedge n}$  has dimension
$\begin{pmatrix}d \\ n \end{pmatrix}$.
For short, we will denote in what follows the basis vectors by

\be\label{shortnotwed}
f_{i_1}\wedge f_{i_2}\wedge \dots \wedge f_{i_{n}}=\wedge^n f_{\iv} \quad , \quad  \iv \in I_n\;.
\ee
The antisymmetric Fock space is then defined as
\be
\Ff_{-}=\bigoplus_{n=0}^{d} \Hh^{\wedge n}  \ \text{ with } \   \Hh^{\wedge 0}=\mathbb{C} \Omega\; , 
\ee
where  $\Omega$ denotes the vacuum vector. { Thus}, $\Ff_{-}$ has dimension $2^d$.

For any $ \phi\in \Hh$, the fermionic creation operator $a^* (  \phi)$ on $\Ff_{-}$ is defined by linearity and its action 
on  $\Omega $ and on $u_1\wedge \dots \wedge u_n\in\Hh^{\wedge n}$, $1 \leq n \leq d$, as 
\be\label{vecreator}
a^*( \phi)\Omega =  \phi \,, \quad a^*( \phi)u_1\wedge  \cdots \wedge u_n= \phi\wedge u_1\wedge  \cdots \wedge u_n \;.
\ee
The creation operator maps $\Hh^{\wedge n}$ onto $\Hh^{\wedge (n+1)}$ and satisfies $a^*( \phi)^2=0$. Note that 
\be\label{wedgeofa}
u_1\wedge \cdots  \wedge u_n=a^*(u_1)\cdots a^*(u_n) \Omega\;.
\ee
The adjoint $a( \phi)$ of $a^*( \phi)$ satisfies $a( \phi) \Omega=0$ and
\be
a( \phi)u_1\wedge  \cdots \wedge u_n=\sum_{j=1}^n(-1)^{j-1}\braket { \phi}{u_j}  u_1\wedge  \cdots \wedge u_{j-1}\wedge u_{j+1}\wedge \cdots \wedge u_n.
\ee
The creation and annihilation operators satisfy the Canonical Anticommutation Relations (CAR): for any $ \phi, \psi \in \Hh$, 
\be \label{eq-CAR}
\{a( \phi),a(\psi)\}=\{a^*( \phi),a^*(\psi)\}=0, \ \ \ \{a( \phi),a^*(\psi)\}=\braket  \phi \psi \identity\;,
\ee
where $\{ \cdot , \cdot\}$ denotes the anticommutator.
The number operator $n( \phi)$ is the self-adjoint operator 
\begin{align} \label{eq-def_number_op}
n( \phi)=a^*( \phi)a( \phi)\;.
\end{align}
Using \eqref{eq-CAR} one finds that  $n( \phi)^2=\| \phi \|^2 \,n( \phi)$.
Hence  if $\phi$ is normalized then $n(\phi)$ is an orthogonal projector.
For a fixed orthonormal basis { $\{ f_i\}_{i=1}^d$ of $\Hh$, the simpler notations $a^\ast_j=a^\ast(f_j)$, $a_j=a (f_j)$ and $n_j=n(f_j)$} will be used in the sequel.

The construction of the bosonic Fock space $\Ff_{+}$ associated to the single-particle Hilbert space $\Hh_B$ proceeds analogously, replacing the projectors \eqref{defproj}
by the projectors $\Pp_{+}^{(n)}$ onto the symmetrized vectors
\be\label{defprojsym}
\Pp_{+}^{(n)}  u_1\otimes   \cdots \otimes u_n =\frac{1}{n!}\sum_{\pi\in S_n} u_{\pi(1)}\otimes \cdots\otimes u_{\pi(n)}
\ee
with $n \in \pinteger$, $n \geq 1$.
Then
\be
\Ff_{+} =\bigoplus_{n=0}^{\infty} \Pp_{+}^{(n)}  \Hh^{\otimes n} \ \text{ with } \  \Pp_{+}^{(0)}\Hh^{\otimes 0}=\mathbb{C} \Omega \;.
\ee
For any $f \in \Hh_B$, the bosonic creation operator $b^\ast ( f)$ is defined similarly to \eqref{vecreator} as
\be
b^\ast ( f) \Pp_{+}^{(n)}  \phi_1\otimes   \cdots \otimes \phi_n = { \sqrt{n+1}}\,\Pp_{+}^{(n+1)}  f \otimes \phi_1\otimes   \cdots \otimes \phi_n\;. 
\ee
Its adjoint is denoted by $b(f)$. Unlike in the fermionic case, $b^\ast(f)^2$ does not vanish.
The creation and annihilation operators  $b^\ast ( f)$ and  $b ( f)$  satisfy the Canonical Commutation Relations (CCR)
\be
\big[ b(f) ,  b(g)\big] =\big[ b^*(f) ,  b^*(g)\big] =0, \ \ \ \big[ b(f)  ,  b^*(g) \big] =\braket{f}{g}\ \identity
\ee
for any $f,g \in \Hh_B$.
The operators  $b^\ast ( f)$ and  $b ( f)$ are unbounded and it is convenient to work with 
unitary Weyl operators defined by
\be \label{eq-def_Weyl_op}
W(f) = \E^{\frac{\I}{\sqrt{2}} ( b(f) + b^\ast(f) )} \ , \quad f \in \Hh_B\;,
\ee  
which inherit from the CCR the composition law
\begin{equation} \label{eq-product_Weyl_op}
  W (f) W (g ) = W (f+g) \E^{-\frac{\I}{2} \im \braket{f}{g}} .
\end{equation}  

Let us recall the action of the second quantizations $\Gamma_\pm (A)$ of  $A$, a contraction on $\Hh$, i.e. $\|A\|\leq 1$, on the fermionic, respectively bosonic, Fock spaces { (see~\cite{Bratteli} for more detail).   They are defined by $\Gamma_\pm (A) \Omega = \Omega$ and} 
\begin{align} \label{+eq=definition_second_quantizationGamma}
&\Gamma_-(A)\, u_1\wedge \cdots  \wedge u_n = Au_1\wedge \cdots  \wedge Au_n\,, \quad  { n=1,\ldots, d} \nonumber\\
&\Gamma_+(A) \Pp_{+}^{(n)}  \phi_1\otimes   \cdots \otimes \phi_n =  \Pp_{+}^{(n)}  A\phi_1\otimes   \cdots \otimes A\phi_n\;,\quad { n \geq 1}\; .
\end{align}
{ If $U$ is a unitary on $\Hh$, then $\Gamma_{\pm} (U)$ is unitary on $\Ff_{\pm}$.
	Given a self-adjoint  operator $H$ on $\Hh$, its second quantization 
	$\D \Gamma_{-} ( H)$ is defined on the fermionic Fock space by $\D \Gamma_{-} ( H) \Omega = 0$ and, for any 	$n=1,\ldots, d$,
	\begin{equation} \label{eq-second_quantized_self_adjoint}
	\D \Gamma_{-} (H)\, u_1 \wedge \cdots \wedge u_n = \sum_{m=1}^n u_1\wedge  \cdots \wedge H u_m \wedge \cdots \wedge u_n \;.
	\end{equation}
	One can define similarly $\D \Gamma_{+} (H)$ on $\Ff_{+}$. 
 In particular, 
	$\D \Gamma_{\pm} ( \identity )= N$ is the particle number operator (in the fermionic case,  $N= \sum_{i=1}^d n_i$).
One has $\Gamma_{\pm} (\E^{\I\, H} )=\E^{\I\, \D \Gamma_{\pm} (H)}$.  
The second-quantized operator $\D \Gamma_{-} (H)$ is quadratic in the annihilation and creation operators,
\be \label{eq-second_quantized_dGamma}
\D \Gamma_{-} (H) = \sum_{i,j=1}^d \bra{f_i} H f_j \rangle \, a_i^\ast a_j\;,
\ee
with a similar expression for $\D \Gamma_{+} (H)$ in terms of $b_i^\ast$ and $b_j$. 
}
Note also the Bogoliubov relations which will be used in the sequel. For $U$ unitary on $\Hh$ and $\phi\in \Hh$,
\be\label{bogrel}
\Gamma_-(U)a^\#(\phi)\Gamma_-(U^*)=a^\#(U\phi)), \ \ \Gamma_+(U)b^\#(\phi)\Gamma_+(U^*)=b^\#(U\phi)\;,
\ee
where $a^\#$ and $b^\#$ stand for creation or annihilation operators.

\section{Fermionic dynamics in the large coupling limit} \label{sec-large_coupling_limit}

\subsection{Evolution of the fermionic observables} \label{sec-evol_fermionic_observables}

{
	
Let us introduce the spectral decomposition of the coupling operator $T$,
\be	
T = \sum_{\mu \in \spec (T)} \mu \, B^\mu\;,
\ee
where $B^\mu$ is its spectral projector for the eigenvalue  $ \mu \in \spec (T)$. 
Then \eqref{eq-coupling_unitary} can be rewritten as
\begin{align} \label{eq-U_int_sum_B}
 U_{\rm int} &  = \E^{-\I \lambda T \,\otimes \, ( b_0 + b_0^\ast)/\sqrt{2}}
\nonumber \\
& = \prod_{\mu \in \setmu} \E^{-\I \lambda  \mu B^\mu \otimes ( b_0 + b_0^\ast)/\sqrt{2}}  
=
\prod_{\mu \in \setmu} \Big( ( \identity - B^\mu)\otimes \identity_{\Ff_{+}} 
+ B^\mu \otimes \E^{-\I \lambda   \mu ( b_0 + b_0^\ast)/\sqrt{2}}\Big)
\nonumber
\\
& = 
\sum_{\mu \in \setmu} B^\mu \otimes W^{-\lambda \mu}\;,
\end{align}
where  $W^\alpha = W(\alpha \delta_0)= \E^{\I \alpha ( b_0 + b_0^\ast )/\sqrt{2}}$ is the bosonic  Weyl operator at site $i=0$,
see \eqref{eq-def_Weyl_op}, 

}

Using the properties { $\Gamma_\pm ( U_1) \Gamma_\pm(U_2)= \Gamma_\pm ( U_1 U_2 )$}  of the second quantization,  
 we have from (\ref{eq-total_evolution_op}), for all $t\in \mathbb N$
\begin{align} \label{eq-total_evol_op}
  U_\lambda^t
&  =  
\Gamma_{-}( \US^t ) \otimes \Gamma_{+} ( \SB^t )
  \Bigl( \Gamma_{-}( \US^{-t} ) \otimes \Gamma_{+} ( \SB^{-t} ) U_{\rm int}  \Gamma_{-}( \US^{t} ) \otimes \Gamma_{+} ( \SB^{t} ) \Big) \cdots 
   \\   \nonumber
   &   
   \cdots  \Bigl( \Gamma_{-}( \US^{-2} ) \otimes \Gamma_{+} ( \SB^{-2} ) U_{\rm int}  \Gamma_{-}( \US^{2} ) \otimes \Gamma_{+} ( \SB^{2} ) \Big)
   \Bigl( \Gamma_{-}( \US^{-1} ) \otimes \Gamma_{+} ( \SB^{-1} ) U_{\rm int}  \Gamma_{-}( \US^{1} ) \otimes \Gamma_{+} ( \SB^{1} ) \Big)\;.
\end{align}
In view of (\ref{eq-shift_bosons}), (\ref{eq-def_Weyl_op}) and (\ref{bogrel}) one has
\begin{align} \label{eq-free_evol_Weyl}
  \nonumber
    W_{j}^{\lambda}  :=  \Gamma_{+} ( \SB^{-j} ) W^{\lambda} \Gamma_{+} ( \SB^{j} ) 
    = & \; \exp \Big\{ \I \lambda \Gamma_{+} ( \SB^{-j} ) \frac{b_0 + b_0^\ast}{\sqrt{2}} \Gamma_{+} ( \SB^{j} ) \Big\} \\
    = & \; \E^{\I \lambda ( b_j + b_j^\ast)/\sqrt{2}} \;,
 \end{align}   
where $b_j = b( \delta_j)$ and $b_j^\ast = b^\ast ( \delta_j)$ are the annihilation and creation operators of a boson at site $j\in\integer$ of the reservoir lattice.
 Collecting (\ref{eq-U_int_sum_B}), (\ref{eq-total_evol_op}), 
 (\ref{eq-free_evol_Weyl}), and introducing the operator $\Vv$ on $\Bb( \Ff_{-})$ defined by
\begin{align}\label{defvcal}
\Vv(X)=\Gamma_{-}(V^{-1}) X  \Gamma_{-}(V),  \ \ \forall \, X\in \Bb(\Ff_{-})\;
\end{align}
and the operators on $\Ff_{-}$ defined by
\begin{equation} \label{eq-expression_B_j}
B^{\mu}_j  =  \Vv^j(B^{\mu} )  
 , \quad j \in \integer ,\; \mu \in \setmu ,
\end{equation}
one gets
\begin{equation} \label{eq-evolution_op_with_Weyl_product}
U_\lambda^t = U_0^t \sum_{\mu_1,\ldots,\mu_t \in \setmu} B_t^{\mu_t} \cdots B_1^{\mu_1} \otimes W_{t}^{-\mu_t \lambda} \cdots W_{1}^{-\mu_1 \lambda}\; .
\end{equation}

Since $W_{j}^{-\mu_j \lambda} = W ( -\mu_j\lambda  \,\delta_{j})$,
where the Weyl operators $W ( f)$ satisfy \eqref{eq-product_Weyl_op}, 
one has
\begin{equation} \label{eq-multiplication_W_j}
  W_{t}^{-\mu_t \lambda} \cdots W_{1}^{-\mu_1 \lambda} = W \Bigl( -\lambda \sum_{j=1}^t \mu_j \delta_{j} \Big)\;.
\end{equation}  
%
\begin{rmk} \label{rmk-1}  
We use hereafter the fact that the shifted states $\SB^{-j} \delta_0 = \delta_{j}$ form an orthonormal family of $\ell^2 (\integer)$. 
For more general coupling unitaries $\Uint$ obtained by replacing $b_0$ and $b_0^\ast$ by $b_\psi$ and $b_\psi^\ast$
with $\psi \in \ell^2 (\integer)$ not
localized on a single site, the $\psi_j = \SB^{-j} \psi$ are not any longer orthogonal. As a result, some additional complicated  phase factors
appear in (\ref{eq-multiplication_W_j}).
\end{rmk}

We can now determine the evolution of arbitrary fermionic observables of the system. Such observables are polynomials in the annihilation
and creation operators, with monomials $a_{l_1}^{\sharp_1}\cdots a_{l_p}^{\sharp_p}$. We write such observables as $X$, for short, the evolution of which is given  at time step $t$ by
\begin{eqnarray}
  \Uu_\lambda^t ( X \otimes \identity ) & = & (\U^\ast)^t  (X\otimes \identity) \, \U^t
  \\   \nonumber
  & = &
  \sum_{\muv,\nuv \in \setmu^t } B_1^{\mu_1} \cdots B_t^{\mu_t}  \, \Vv^t(  X )
  \, B_{t}^{\nu_t} \cdots B_1^{\nu_1}
  \otimes  W \Big(  \lambda \sum_{j=1}^t (\mu_j-\nu_j) \delta_{j} \Big)\;,
\end{eqnarray}
where we have used (\ref{eq-evolution_op_with_Weyl_product}), (\ref{eq-product_Weyl_op}), and
(\ref{eq-multiplication_W_j}) and the sum is over all $\muv=(\mu_1,\ldots,\mu_t)\in \setmu^t$ and $\nuv= (\nu_1,\ldots, \nu_t) \in \setmu^t$.
Note that no phase factor appears
since the scalar product $\langle \sum_{j} \mu_j \delta_{j}\,,\,\sum_l \nu_l \delta_{l} \rangle =  \muv \cdot \nuv$ is real (see remark~\ref{rmk-1}).

The initial state $\omega_B$ of the reservoir we consider is defined as a gauge-invariant quasi-free state of the following general form. Let $K=K^*$ be a bounded operator on $\Hh_{B}=l^2(\integer)$
such that
\begin{align}\label{k>1}
K\geq \identity.
\end{align}
The initial state is characterized by its action on the Weyl operators
\begin{equation} \label{eq-initial_state_boson_action_on_Weyl}
  \omega_{B} ( W (f) ) = \exp \Big( - \frac{1}{4} \big\langle f \, , K  f \big\rangle \Big)
\; , \quad f \in \Hh_{B}
  \; ,
\end{equation}
which defines a {\it bona fide} state, see \cite{Pe},  Chapter 3. 

\begin{rmk} \label{rmk_Gibbs_state}
  In the thermal case, $K=K_\beta=\coth ( {\beta (\HB-\mu\identity)}/{2} ) $, where $\HB=\HB^*$ is a Hamiltonian, $\beta >0$ is the inverse temperature and
   the chemical potential $\mu\in \real$ is chosen so that $\HB-\mu\identity>0$. In this case, $\omega_B$ is a Gibbs state. 
 \end{rmk}
Thus the time evolved observable $X\in \Bb(\Ff_-)$  at time $t$ defined in (\ref{deftt}) reads
\begin{equation} \label{eq-def_Ttt_m}
  \Ttt_t ( X)
  = \sum_{\muv,\nuv \in \setmu^t } B_1^{\mu_1} \ldots B_t^{\mu_t} \Vv^t(X) B_{t}^{\nu_t} \ldots B_1^{\nu_1} 
 \exp \bigg\{ -\frac{\lambda^2}{4} \Big\langle \Theta^{\muv-\nuv}_t \, , K\, \Theta^{\muv-\nuv}_t \Big\rangle \bigg\} , 
\end{equation}
where
\begin{equation} \label{eq-Sigma}
\Theta^{\muv-\nuv}_t = \sum_{j=1}^t  ( \mu_j - \nu_j ) \delta_{j}\;.
\end{equation}
%
\begin{lem}\label{l44}
	For any $t\in \mathbb N$, the map $\Ttt_t : \Bb(\Ff_-)\rightarrow \Bb(\Ff_-)$ is completely positive, unital and trace preserving.
\end{lem}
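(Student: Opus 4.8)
The plan is to recognize $\Ttt_t$ as the composition of two maps whose complete positivity, unitality, and trace preservation are either standard or elementary, and then to invoke the fact that each of these three properties is preserved under composition. Concretely, from the definition \eqref{deftt} we have, for $X \in \Bb(\Ff_-)$ and for the fermionic density matrix $\rho_S$ representing $\omega_S$,
\begin{equation}
\tr ( \rho_S \Ttt_t (X) ) = \tr \big( ( \rho_S \otimes \rho_B ) \, \Uu_\lambda^t ( X \otimes \identity ) \big) = \tr \big( ( \rho_S \otimes \rho_B ) \, (U_\lambda^\ast)^t ( X \otimes \identity ) U_\lambda^t \big)\;,
\end{equation}
where $\rho_B$ is a density operator on $\Ff_+$ implementing the quasi-free state $\omega_B$. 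Writing $\mathcal{J}_t$ for the (Heisenberg) channel $Y \mapsto (U_\lambda^\ast)^t \, Y \, U_\lambda^t$ on $\Bb(\Ff_- \otimes \Ff_+)$ and $\mathcal{E}_{\rho_B}$ for the partial-trace-against-$\rho_B$ map (in the Heisenberg picture, $\mathcal{E}_{\rho_B}(Y) = \tr_{\Ff_+}( (\identity \otimes \rho_B) Y )$, viewed as an operator on $\Ff_-$), we get $\Ttt_t = \mathcal{E}_{\rho_B} \circ \mathcal{J}_t \circ ( \cdot \otimes \identity_{\Ff_+} )$. The first step is thus to check each factor.

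For $\mathcal{J}_t$: conjugation by the unitary $U_\lambda^t$ is a $\ast$-automorphism of $\Bb(\Ff_-\otimes\Ff_+)$, hence completely positive (it is even a unital $\ast$-homomorphism, so it is automatically CP, and it clearly sends $\identity$ to $\identity$). Trace preservation in the Heisenberg sense means that the predual Schrödinger map $\rho \mapsto U_\lambda^t \rho (U_\lambda^\ast)^t$ preserves the trace, which is immediate from cyclicity of the trace. The inclusion $X \mapsto X \otimes \identity_{\Ff_+}$ is also a unital $\ast$-homomorphism, CP, and its predual is the partial trace over $\Ff_+$, which preserves trace. For $\mathcal{E}_{\rho_B}$: its Schrödinger-picture predual is $\rho_S \mapsto \rho_S \otimes \rho_B$, which is manifestly a CP map (tensoring with a fixed positive operator) and preserves the trace since $\tr \rho_B = 1$; dually, $\mathcal{E}_{\rho_B}$ is CP and unital, as $\mathcal{E}_{\rho_B}(\identity_{\Ff_-\otimes\Ff_+}) = \tr_{\Ff_+}(\identity_{\Ff_-}\otimes\rho_B) = \identity_{\Ff_-}$. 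The second step is then to record that composing CP (resp.\ unital, resp.\ trace-preserving) maps yields a map with the same property, so $\Ttt_t$ inherits all three.

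The only genuine technical point — and the place where one must be slightly careful rather than merely formal — is that $\Ff_+$ is infinite-dimensional, so one should confirm that $\omega_B$ is indeed a normal state on $\Bb(\Ff_+)$, i.e.\ is given by a genuine trace-class density operator $\rho_B$, so that the partial-trace manipulation above is legitimate and $\Ttt_t$ really maps $\Bb(\Ff_-)$ into itself with no domain issues (note $\Ff_-$ is finite-dimensional, so once $\omega_B$ is normal everything in sight is bounded and the trace operations are unambiguous). This is exactly the content of the assertion, cited after \eqref{eq-initial_state_boson_action_on_Weyl}, that the quasi-free functional with $K \geq \identity$ defines a \emph{bona fide} (normal) state, together with Remark~\ref{rmk_Gibbs_state}; for $K$ bounded one can alternatively exhibit $\rho_B$ explicitly as a Gaussian operator. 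With normality of $\omega_B$ in hand, the argument closes. An equivalent and perhaps cleaner packaging, avoiding even the mention of $\rho_B$, is to note that \eqref{eq-def_Ttt_m} exhibits $\Ttt_t$ directly in Kraus-like form: using \eqref{eq-def_Ttt_m} one has $\Ttt_t(X) = \sum_{\alpha} L_\alpha^\ast X L_\alpha$-type structure after diagonalizing $K$ and expanding $\omega_B$ over Weyl operators, which would give complete positivity by inspection and unitality from $\Ttt_t(\identity) = \identity$ (a one-line computation using $\sum_\mu B^\mu = \identity$ and $\omega_B(\identity)=1$); however, the composition argument above is shorter and entirely rigorous, so I would present that and relegate the Kraus remark to a parenthetical.
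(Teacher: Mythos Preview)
Your decomposition $\Ttt_t = (\text{slice by }\omega_B) \circ (\text{conjugation by }U_\lambda^t) \circ (\,\cdot \otimes \identity)$ is precisely the paper's route, and the unitary-conjugation and ampliation steps are handled correctly. The gap lies in the slice map. You assume $\omega_B$ is a \emph{normal} state on $\Bb(\Ff_+)$, implemented by a density operator $\rho_B$, and then argue complete positivity via the predual $\rho_S \mapsto \rho_S \otimes \rho_B$. But the paper only asserts (after \eqref{eq-initial_state_boson_action_on_Weyl}) that $\omega_B$ is a state on the Weyl algebra; it does not claim normality in the Fock representation, and for general bounded $K \geq \identity$ normality fails. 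The simplest instance is $K = c\,\identity$ with $c > 1$: this is an infinite product of identical thermal one-mode states with infinite expected total particle number, hence not given by any trace-class $\rho_B$ on $\Ff_+$. Your remark that ``for $K$ bounded one can exhibit $\rho_B$ explicitly as a Gaussian operator'' is therefore incorrect at this level of generality.

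The paper sidesteps the issue by proving directly that the slice map $Y \mapsto (\identity \otimes \omega_B)(Y)$ is completely positive using only positivity and normalization of $\omega_B$. Since $\dim \Ff_{-} < \infty$, it suffices to show $\omega_B \otimes \identity_n$ is positive on $\Bb(\Ff_+) \otimes M_n(\complex)$ for each $n$: writing a positive element as a block matrix $(B_{ij})$ and testing against vectors of the form $(\beta_j \varphi)_j$ shows $\sum_{i,j} \overline{\beta_i}\beta_j B_{ij} \geq 0$ in $\Bb(\Ff_+)$, whence its $\omega_B$-expectation is nonnegative. Trace preservation is likewise read off the explicit formula \eqref{eq-def_Ttt_m} via cyclicity and $\sum_\mu B^\mu = \identity$, rather than through a predual argument that would again require a density operator $\rho_B$.
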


\begin{proof}
	The map $\Ttt_t$ is the composition of $X\mapsto X\otimes \identity$ from $\Bb(\Ff_-)$ to $\Bb(\Ff_-)\otimes \Bb(\Ff_+)$, $Y\mapsto  (\U^\ast)^t \, Y \, \U^t$ on $ \Bb(\Ff_-)\otimes \Bb(\Ff_+)$, and $Y\mapsto \omega_{B}(Y)$ from $\Bb(\Ff_-)\otimes \Bb(\Ff_+)$ to $\Bb(\Ff_-)$, where $\Bb(\Ff_-)$ is finite dimensional and $\Bb(\Ff_+)$ is separable. It is thus immediate to see that the identity is preserved. Then, the first map above is completely positive since it preserves the spectrum, while the second one is completely positive thanks to Kraus { Theorem~\cite{Kraus70}}. It remains to show { that} the third map is completely positive to get that $\Ttt_t$ is completely positive.
	
	Since $\Bb(\Ff_-)$ is finite dimensional, the complete positivity of the third map amounts to showing that
	$\omega_{B}\otimes \identity_n$ from $\Bb(\Ff_+)\otimes M_n(\mathbb C)$ to $M_n(\mathbb C)$ is positive for all $n\in \mathbb N$, { $n \geq 1$}. Any  $B\in \Bb(\Ff_+)\otimes M_n(\mathbb C)$ can be viewed as a  matrix $(B_{i j})_{1\leq i,j\leq n}$ with $B_{ij}\in  \Bb(\Ff_+)$, $\forall\,i,j\in \{1, \dots, n\}$, so that $\omega_{B}\otimes \identity_n(B)=(\omega_{B}(B_{i j})_{1\leq i,j\leq n}\in M_n(\mathbb C)$. The positivity of the latter amounts to having $\omega_B(\sum_{i,j}\bar\beta_i \beta_j B_{i j})\geq 0$ for any { $\{\beta_j\}_{1 \leq j\leq n}\in \mathbb C^n$}, which holds true if $\sum_{i,j}\bar\beta_i \beta_j B_{i j}$ is a positive element of $\Bb(\Ff_+)$, since $\omega_B$ is a state. But $(B_{i j})_{1\leq i,j\leq n}$ is positive in $\Bb(\Ff_+)\otimes M_n(\mathbb C)$ if for all $\{\varphi_j\}_{1\leq j\leq n}$, $\varphi_j\in \Ff_+$, $\sum_{i,j}\langle \varphi_i | B_{i j} \varphi_j\rangle \geq 0$. In particular, 
	$\varphi_j=\beta_j \varphi$ for some $\varphi\in \Ff_+$ yields $\sum_{i,j}\langle \varphi |  \bar \beta_i \beta_j B_{i j}  \varphi\rangle \geq 0$ for any $\varphi\in \Ff_+$. Hence, $\omega_{B}\otimes \identity_n$ is positive.

	Finally, that fact that $\Ttt_t$ is trace preserving is a consequence of the cyclicity of the trace, the unitarity of $\Gamma_-(V)$, and { the identities $B^\mu B^\nu = \delta_{\mu,\nu} B^\mu$ and $\sum_{\mu} B^\mu = \identity$ of the spectral projectors $B^\mu$}, which make the exponential factor in \eqref{eq-def_Ttt_m} disappear.
\end{proof}

\begin{rmk} 
	If $X = a_{l_1}^{\sharp_1}\cdots a_{l_p}^{\sharp_p}$ , we have, using (\ref{bogrel}), $\Vv^t (  X )= a^{\sharp_1} ({\phi_{l_1}^t})\ldots a^{\sharp_p} ({\phi_{l_p}^t})$, where 
	\begin{equation}
		\phi_l^j = \US^{-j} { f_l}  \quad , \quad l \in \Lambda \;.
	\end{equation}
\end{rmk}

Recalling (\ref{defvcal}) and (\ref{eq-expression_B_j}) we deduce the following result from (\ref{eq-def_Ttt_m}).

\begin{prop}
The time-evolved fermionic observables are given at time $t \in \pinteger$ by
\begin{align}\label{dectvb}
 \Ttt_t ( X)
  = \sum_{\muv,\nuv \in \setmu^t } \Vv \Bb^{\mu_1 \nu_1} \Vv \Bb^{\mu_2 \nu_2} \ldots \Vv \Bb^{\mu_t \nu_t} (X) 
 \exp \bigg\{ -\frac{\lambda^2}{4} \bigg\langle \Theta^{\muv-\nuv}_t \, , K\, \Theta^{\muv-\nuv}_t \Big\rangle \bigg\}\;,
\end{align}
where $\Vv$ is the free fermion evolution operator on $\Bb(\Ff_{-})$, see \eqref{defvcal},
and the operators $\Bb^{\mu \nu}$ on $\Bb(\Ff_{-})$, with $\mu, \nu \in \setmu$,
{  are defined in terms of the spectral projectors  $B^\mu$} by 
\begin{align}\label{defbmunu}
\Bb^{\mu \nu}(X)=B^\mu X B^\nu, \ \ \forall X\,\in \Bb(\Ff_{-})\;.
\end{align}
\end{prop}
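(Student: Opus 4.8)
The plan is to derive \eqref{dectvb} directly from \eqref{eq-def_Ttt_m} by reorganizing the product of spectral projectors into an alternating product of the superoperators $\Vv$ and $\Bb^{\mu\nu}$. The key algebraic observation is that the operator appearing in \eqref{eq-def_Ttt_m}, namely
\[
B_1^{\mu_1}\cdots B_t^{\mu_t}\,\Vv^t(X)\,B_t^{\nu_t}\cdots B_1^{\nu_1},
\]
with $B_j^\mu=\Vv^j(B^\mu)=\Gamma_{-}(V^{-j})B^\mu\Gamma_{-}(V^{j})$, can be rewritten by substituting these conjugations and telescoping the powers of $\Gamma_{-}(V)$.

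First I would write out $B_j^\mu=\Gamma_-(V^{-j})B^\mu\Gamma_-(V^j)$ and insert these expressions into the operator displayed above. Reading from the left, one gets $\Gamma_-(V^{-1})B^{\mu_1}\Gamma_-(V^{1})\Gamma_-(V^{-2})B^{\mu_2}\Gamma_-(V^{2})\cdots$; using the homomorphism property $\Gamma_-(V^{a})\Gamma_-(V^{b})=\Gamma_-(V^{a+b})$ repeatedly, each adjacent pair $\Gamma_-(V^{j})\Gamma_-(V^{-(j+1)})$ collapses to $\Gamma_-(V^{-1})$, so the left block becomes $\Gamma_-(V^{-1})B^{\mu_1}\Gamma_-(V^{-1})B^{\mu_2}\Gamma_-(V^{-1})\cdots B^{\mu_t}\Gamma_-(V^{-t})$. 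Doing the same on the right block gives $\Gamma_-(V^{t})B^{\nu_t}\Gamma_-(V^{-1})\cdots B^{\nu_1}\Gamma_-(V^{-1})$. Finally $\Vv^t(X)=\Gamma_-(V^{-t})X\Gamma_-(V^{t})$, so the innermost $\Gamma_-(V^{-t})$ from the left block meets $\Gamma_-(V^{-t})$ from $\Vv^t$... actually one must be careful here: the left block ends with $\Gamma_-(V^{-t})$ and $\Vv^t(X)$ starts with $\Gamma_-(V^{-t})$, so they combine to $\Gamma_-(V^{-2t})$, which is not what we want; the cleaner bookkeeping is to insert $\Gamma_-(V^{j})\Gamma_-(V^{-j})=\identity$ strategically, or simply to verify the claimed identity by induction on $t$ and recognize that the product of superoperators $\Vv\,\Bb^{\mu_1\nu_1}\cdots\Vv\,\Bb^{\mu_t\nu_t}$ applied to $X$ unfolds, by \eqref{defvcal} and \eqref{defbmunu}, to exactly the conjugated-projector sandwich with the correct powers of $\Gamma_-(V)$. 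Induction is the safest route: for $t=1$, $\Vv\,\Bb^{\mu_1\nu_1}(X)=\Gamma_-(V^{-1})B^{\mu_1}X B^{\nu_1}\Gamma_-(V)$, while \eqref{eq-def_Ttt_m} gives $B_1^{\mu_1}\Vv(X)B_1^{\nu_1}=\Gamma_-(V^{-1})B^{\mu_1}\Gamma_-(V)\Gamma_-(V^{-1})X\Gamma_-(V)\Gamma_-(V^{-1})B^{\nu_1}\Gamma_-(V)$, and the middle $\Gamma_-(V)\Gamma_-(V^{-1})$ factors cancel, matching. The inductive step peels off one outermost $\Vv\,\Bb^{\mu_1\nu_1}$ and uses that $\Vv$ intertwines the time-$t$ and time-$(t-1)$ structures, i.e.\ $B_{j+1}^{\mu}=\Vv(B_j^{\mu})$ up to the conjugation bookkeeping.

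The exponential weight factor requires no change at all: $\Theta^{\muv-\nuv}_t$ in \eqref{eq-Sigma} and the scalar product $\langle\Theta^{\muv-\nuv}_t,K\Theta^{\muv-\nuv}_t\rangle$ depend only on the index tuples $\muv,\nuv$, not on the ordering of the operator product, so it carries over verbatim from \eqref{eq-def_Ttt_m} to \eqref{dectvb}. Hence the statement follows by combining the algebraic rewriting of the operator part with the unchanged scalar weight, summing over $\muv,\nuv\in\setmu^t$.

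I do not expect a genuine obstacle here — the result is essentially a notational repackaging of \eqref{eq-def_Ttt_m} using the definitions \eqref{defvcal}, \eqref{eq-expression_B_j}, and \eqref{defbmunu}. The one point demanding care is the power-of-$\Gamma_-(V)$ bookkeeping in the telescoping product: one must track that the conjugations in $B_j^\mu=\Vv^j(B^\mu)$ and in $\Vv^t(X)$ interleave so that all intermediate powers cancel and only the single outermost $\Vv$ survives between consecutive $\Bb^{\mu_j\nu_j}$ blocks. Carrying this out by a short induction on $t$, rather than by manipulating the length-$t$ product directly, keeps the argument transparent and avoids sign or exponent errors.
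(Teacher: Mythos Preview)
Your proposal is correct and follows the same approach as the paper: the paper's proof is a one-line remark that the result follows from \eqref{eq-def_Ttt_m} by recalling the definitions \eqref{defvcal} and \eqref{eq-expression_B_j}, which is precisely the telescoping of powers of $\Gamma_{-}(V)$ that you spell out. Your induction on $t$ is a clean way to carry out the bookkeeping the paper leaves implicit; the mid-paragraph hesitation about the direct unfolding is harmless since the inductive verification you give is sound.
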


It is worth pointing out that  for all non-diagonal terms
$\muv \not= \nuv$,
the exponential factor  in the double sum \eqref{eq-def_Ttt_m} decreases as the coupling strength $\lambda$ increases.
In the strong coupling limit $\lambda \to \infty$,
this exponential factor vanishes save for the diagonal terms $\muv = \nuv$,
for which $\Theta_t^{\muv-\nuv}=\Theta_t^{\underline{0}}=0$.
As we shall show in the following section, it follows from this observation that
\begin{equation} \label{eq-evolution_strong_coupling}
  \lim_{\lambda \to \infty}
  \Ttt_t ( X )
  =  (\Vv\Phi)^t(X)
\end{equation}
uniformly in  time $t \in \pinteger$, where 
\begin{equation} \label{eq-def_CPmapPhi}
  \Phi(\cdot) = \sum_{\mu \in \setmu} \Bb^{\mu \mu}(\cdot) = \sum_{\mu\in \setmu} B^\mu \cdot B^\mu\; .
\end{equation}
By Kraus Theorem, the maps $\Phi$ and $\Vv$ on $\Bb(\Ff_{-})$ are completely positive and trace preserving (CPTP).
Thus the evolution is given in the strong coupling limit by powers of the CPTP map $\Vv\Phi$.
Note that this map is unital, since it satisfies $\Vv\Phi ( \identity ) = \identity$, because $\Phi(\identity)=\sum_{\nu \in \setmu} (B^\nu)^2 = \identity$. We shall show { in Sec.~\ref{sec-convergence_large_time}} that  under suitable hypotheses on the fermionic evolution operator
$\Gamma(V)$, the operators  invariant under 
$\Vv\Phi$ are proportional to $\identity$ in each $n$-particle subspace.

We also point out that when the reservoir's initial state $\omega_B$ is a Gibbs state with inverse temperature $\beta$ and bounded Hamiltonian $H_B$ (see Remark~\ref{rmk_Gibbs_state}),
the time-evolved observable $\Ttt_t ( X )$ also reduces to the \RHS of \eqref{eq-evolution_strong_coupling} in the high temperature limit
$\beta \to 0$, for fixed coupling constants $\lambda< \infty$ (in fact, $\| K_\beta\| \to \infty$ as $\beta \to 0$).

In the opposite weak coupling limit $\lambda \to 0$, the off-diagonal terms 
make a significant contribution in the expansion \eqref{dectvb}. For $\lambda=0$, the exponential factor is equal to $1$ and  the series reduces to $\Vv^t (X)$.
 
 \medskip
 
\medskip 

\noindent{\bf Repeated Interaction Case:}

\medskip

In the particular case where the state of the reservoir is characterized by a symbol $K$ which is diagonal in the canonical basis $\{\delta_j\}_{j\in \mathbb{Z}}$, our system reduces to a Repeated Interaction System, also known as Collisional Model, see {\it e.g} \cite{BJM}. Indeed, this property makes the correlations between QWs in the reservoir vanish, so that the dynamics on the sample reduces to the composition of independent maps:

\begin{cor}
	Assume $K\delta_j= { k_j} \delta_j$, $j\in \mathbb{Z}$, and set { $\kappa_j=e^{-\frac{\lambda^2}{4} k_j}$}. For any $X\in \Bb(\Ff_{-})$ it holds
	\begin{equation}\label{RIS}
		\Ttt_t ( X)
		= \Vv \mathfrak{B}_1 \Vv\mathfrak{B}_2 \ldots \Vv \mathfrak{B}_t (X),
	\end{equation}
	where
	\begin{equation}
	\mathfrak{B}_j(X)=\sum_{\mu,\nu \in \setmu }\kappa_j^{(\mu-\nu)^2}\Bb^{\mu \nu}(X), \ \ \forall\, X\in \Bb(\Ff_{-}).
	\end{equation}
\end{cor}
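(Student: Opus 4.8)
The plan is to specialize the general formula \eqref{dectvb} to the case where the symbol $K$ is diagonal in the canonical basis $\{\delta_j\}_{j\in\integer}$ and then recognize the resulting expression as an ordered composition of single-site maps. First I would compute the quadratic form appearing in the exponential factor of \eqref{dectvb}. Since $\Theta^{\muv-\nuv}_t = \sum_{j=1}^t(\mu_j-\nu_j)\delta_j$ with the $\delta_j$ orthonormal (see Remark~\ref{rmk-1}), and $K\delta_j = k_j\delta_j$, one gets immediately
\begin{equation}
\big\langle \Theta^{\muv-\nuv}_t\,,\,K\,\Theta^{\muv-\nuv}_t\big\rangle
= \sum_{j=1}^t (\mu_j-\nu_j)^2\, k_j\;,
\end{equation}
because all cross terms $\langle \delta_j, K\delta_l\rangle$ with $j\neq l$ vanish. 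Hence the exponential factor in \eqref{dectvb} factorizes as $\prod_{j=1}^t \E^{-\frac{\lambda^2}{4}(\mu_j-\nu_j)^2 k_j} = \prod_{j=1}^t \kappa_j^{(\mu_j-\nu_j)^2}$, with $\kappa_j = \E^{-\frac{\lambda^2}{4}k_j}$ as defined in the statement.

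Next I would substitute this factorization back into \eqref{dectvb}. The point is that the summand is now a product over $j$ of quantities each depending only on the pair $(\mu_j,\nu_j)$: the operator factor is $\Vv\,\Bb^{\mu_1\nu_1}\,\Vv\,\Bb^{\mu_2\nu_2}\cdots\Vv\,\Bb^{\mu_t\nu_t}$ and the scalar factor is $\prod_{j=1}^t\kappa_j^{(\mu_j-\nu_j)^2}$. Distributing the scalar factors into the operator product — legitimately, since each $\kappa_j^{(\mu_j-\nu_j)^2}$ is a number and the maps act on $\Bb(\Ff_-)$ — one can write
\begin{equation}
\Ttt_t(X) = \sum_{\muv,\nuv\in\setmu^t}
\Vv\big(\kappa_1^{(\mu_1-\nu_1)^2}\Bb^{\mu_1\nu_1}\big)\,
\Vv\big(\kappa_2^{(\mu_2-\nu_2)^2}\Bb^{\mu_2\nu_2}\big)\cdots
\Vv\big(\kappa_t^{(\mu_t-\nu_t)^2}\Bb^{\mu_t\nu_t}\big)(X)\;.
\end{equation}
Finally, since the sum over $(\muv,\nuv)\in\setmu^t\times\setmu^t$ is a sum over the $t$ independent pairs $(\mu_j,\nu_j)\in\setmu\times\setmu$, it factorizes through the composition: the sum over $(\mu_t,\nu_t)$ of $\Vv(\kappa_t^{(\mu_t-\nu_t)^2}\Bb^{\mu_t\nu_t})$ applied first is exactly $\Vv\mathfrak{B}_t$, and similarly for each slot, giving $\Ttt_t(X) = \Vv\mathfrak{B}_1\,\Vv\mathfrak{B}_2\cdots\Vv\mathfrak{B}_t(X)$ with $\mathfrak{B}_j(X) = \sum_{\mu,\nu\in\setmu}\kappa_j^{(\mu-\nu)^2}\Bb^{\mu\nu}(X)$, as claimed.

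There is essentially no hard part here: the corollary is a direct bookkeeping consequence of the diagonality of $K$, which decouples the $t$ Weyl-operator contributions. The only mild care needed is in the step where one redistributes the scalar factors and interchanges the multiple sum with the composition of linear maps — this is just linearity and the fact that composition of linear maps is multilinear in its arguments, so writing $\sum_{\muv,\nuv}\prod_j(\cdots) = \prod_j^{\leftarrow}\big(\sum_{\mu_j,\nu_j}(\cdots)\big)$ (with $\prod^{\leftarrow}$ denoting the ordered composition) is legitimate and requires no convergence argument since everything is a finite sum over $\setmu$ and $\Bb(\Ff_-)$ is finite-dimensional.
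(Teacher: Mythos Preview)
Your proof is correct and follows exactly the same route as the paper's: compute the quadratic form using the diagonality of $K$ to factorize the exponential in \eqref{dectvb}, then recognize the resulting sum as the ordered composition $\Vv\mathfrak{B}_1\cdots\Vv\mathfrak{B}_t$. The paper's version is simply terser, leaving the distribution-of-scalars and sum-interchange bookkeeping implicit.
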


\begin{proof}
	 The spectral assumption on $K$ makes the correlations within the reservoir factorize,
	\begin{equation}
\exp \bigg\{ -\frac{\lambda^2}{4} \Big\langle \Theta^{\muv-\nuv}_t \, , K\, \Theta^{\muv-\nuv}_t \Big\rangle \bigg\} 
= { \prod_{j=1}^t e^{-(\mu_j-\nu_j)^2 k_j\frac{\lambda^2}{4}} \;,}
	\end{equation}
	which yields the result when inserted in \eqref{dectvb}. 
\end{proof}

The maps $\mathfrak{B}_j$ are parameterized by the eigenvalues { $k_j\geq 1$}, and the order in which they appear in the composition (\ref{RIS}) is dictated by the order of the eigenvectors $\delta_j$ in the reservoir. Also, { $\mathfrak{B}_j$ is unital and CPTP}  for each $j\in \mathbb Z$, as a consequence of Lemma \ref{l44}.

\subsection{Expansion of the propagator $\Ttt_t$  at large coupling} \label{sec-expansion_propagator_large_coupling}

As explained above, the expansion \eqref{dectvb} is well-suited to study the strong coupling limit $\lambda \to \infty$.
Since we assume $K\geq 1$, the exponential factor in (\ref{dectvb}) is bounded by
\begin{align}\label{decayexpterm}
  \exp \bigg\{ -\frac{\lambda^2}{4} \Big\langle \Theta^{\muv-\nuv}_t \, ,  \,  K\, \Theta^{\muv-\nuv}_t \Big\rangle \bigg\}\leq
  \E^{-\lambda^2 \big\|\Theta^{\muv-\nuv}_t\big\|^2/4}= \E^{-\lambda^2 \sum_{1\leq j\leq t}(\mu_j-\nu_j)^2/4}\;,
\end{align}
{ see \eqref{eq-Sigma}.}

 Let us introduce the parameters
\be
\kappa_j = \exp \Big\{ - \frac{\lambda^2}{4}  \big\langle \delta_{j} \,,\,K\,\delta_{j} \big\rangle \Big\}\;.
\ee
Note that $\kappa_j \to 0$ as $\lambda \to \infty$ uniformly in $j \in \integer$.

\vspace{1mm}
\begin{prop}
	We have
\begin{align}\label{dectvbnums}
\nonumber  \Ttt_t
  &= (\Vv\Phi)^t \\
  &\hspace*{5mm} + \sum_{s=1}^t\;\; \sum_{1 \leq j_1 < j_2 < \cdots < j_s \leq t} \;\; \sum_{{\mu_{j_1}\not= \nu_{j_1}, \dots, \mu_{j_s}\not= \nu_{j_s} }\atop {\nu_{j_i}, \mu_{j_i} \in \setmu}}
(\Vv\Phi)^{j_{1}-1}\Vv\Bb^{\mu_{j_1}\nu_{j_1}}(\Vv\Phi)^{j_2-j_{1}-1}\Vv\Bb^{\mu_{j_2}\nu_{j_2}}\nonumber \\ 
& \hspace{1.4cm}\cdots (\Vv\Phi)^{j_s-j_{s-1}-1}\Vv\Bb^{\mu_{j_s}\nu_{j_s}}(\Vv\Phi)^{t-j_s} 
\exp \bigg\{ -\frac{\lambda^2}{4} \sum_{p, q=1}^s ( \mu_{j_p}-\nu_{j_p}) 
(\mu_{j_q} - \nu_{j_q}) \big\langle \delta_{j_p}\, , K\, \delta_{j_q} \big\rangle \bigg\}\\
\label{leadingorder}
&= (\Vv\Phi)^t + \sum_{j=1}^t \sum_{{\mu_j,\nu_j \in \setmu} \atop {\mu_j \not= \nu_j}}(\Vv\Phi)^{j-1}\Vv\Bb^{\mu_{j}\nu_{j}}(\Vv\Phi)^{t-j}  \kappa_j^{(\mu_j-\nu_j)^2}
+\; \Rr_t \;.
\end{align}
\end{prop}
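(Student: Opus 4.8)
The plan is to start from the exact expansion \eqref{dectvb} and reorganise the finite double sum over $(\muv,\nuv)\in\setmu^t\times\setmu^t$ according to the set of ``off-diagonal'' positions $S=S(\muv,\nuv):=\{j\in\{1,\dots,t\}\,:\,\mu_j\neq\nu_j\}$. Writing $S=\{j_1<j_2<\dots<j_s\}$ with $0\le s\le t$, the double sum splits into a sum over $s$, over the increasing tuples $j_1<\dots<j_s$, over the values $\mu_{j_p}\neq\nu_{j_p}$ in $\setmu$ at the off-diagonal positions, and a free sum over the common value $\mu_j=\nu_j\in\setmu$ at each diagonal position $j\notin S$. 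This is the only real ``idea'' in the proof; the rest is bookkeeping.

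First I would observe that, since $\mu_j-\nu_j=0$ for $j\notin S$, the vector $\Theta^{\muv-\nuv}_t$ of \eqref{eq-Sigma} is supported on $S$, so the Gaussian weight in \eqref{dectvb} depends only on the off-diagonal data, $\langle\Theta^{\muv-\nuv}_t,K\,\Theta^{\muv-\nuv}_t\rangle=\sum_{p,q=1}^s(\mu_{j_p}-\nu_{j_p})(\mu_{j_q}-\nu_{j_q})\langle\delta_{j_p},K\,\delta_{j_q}\rangle$, and in particular is independent of the diagonal values. Hence at each diagonal position $j\notin S$ one may perform the sum over the common value $\mu=\nu$ first; by \eqref{defbmunu} and \eqref{eq-def_CPmapPhi} this replaces the corresponding factor $\Vv\,\Bb^{\mu\mu}$ by $\sum_{\mu\in\setmu}\Vv\,\Bb^{\mu\mu}=\Vv\Phi$. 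Since the factors in the product are composed linear maps and the sums are independent, a maximal run of consecutive diagonal positions strictly between $j_p$ and $j_{p+1}$ (with the conventions $j_0=0$, $j_{s+1}=t+1$) collapses to $(\Vv\Phi)^{\,j_{p+1}-j_p-1}$; before $j_1$ one gets $(\Vv\Phi)^{j_1-1}$ and after $j_s$ one gets $(\Vv\Phi)^{t-j_s}$. Collecting these, the contribution of a fixed $S$ is exactly the inner product appearing in \eqref{dectvbnums}, and the term $s=0$ (i.e.\ $S=\emptyset$) is the single summand $(\Vv\Phi)^t$; summing over all $S$ gives \eqref{dectvbnums}.

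Finally, for \eqref{leadingorder} I would split off the contributions $s=0$ and $s=1$ from \eqref{dectvbnums} and define $\Rr_t$ to be the sum of all terms with $s\ge 2$ (which is an empty sum, hence $\Rr_t=0$, when $t\le 1$). For $s=1$ there is a single off-diagonal index $j=j_1$, the boundary conventions give the factors $(\Vv\Phi)^{j-1}$ and $(\Vv\Phi)^{t-j}$ with $(\Vv\Phi)^0$ the identity map, and the exponent reduces to $-\tfrac{\lambda^2}{4}(\mu_j-\nu_j)^2\langle\delta_j,K\,\delta_j\rangle$, so that the weight equals $\kappa_j^{(\mu_j-\nu_j)^2}$ by the definition of $\kappa_j$; this yields \eqref{leadingorder}. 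I do not expect any genuine analytic difficulty: the statement is a purely algebraic rearrangement of a finite sum, valid for any bounded self-adjoint $K$, and the only thing to be careful about is the indexing of $S$ and the empty/boundary products. The hypothesis $K\ge\identity$ plays no role at this stage — it enters only afterwards, through \eqref{decayexpterm}, when one estimates the size of $\Rr_t$.
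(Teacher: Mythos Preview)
Your proof is correct and follows exactly the same approach as the paper, which simply says to decompose the double sum in \eqref{dectvb} according to the number of indices $1\leq j\leq t$ where $\mu_j$ and $\nu_j$ differ. You have merely filled in the bookkeeping details that the paper leaves implicit.
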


\begin{proof}
	 Decompose the double sum over $\muv$ and $\nuv$ in (\ref{dectvb}) according to the number of indices $1\leq j\leq t$ where $\mu_j$ and $\nu_j$ differ.
\end{proof}

To estimate the remainder term $ \Rr_t$, we use the decay properties of the powers of $\Vv\Phi$ appearing in \eqref{dectvbnums} in the subspace orthogonal to their
invariant subspace.
This is the goal of the next subsection, where we shall make use of the fact  that $\Vv\Phi$ is the composition of a unitary operator and an orthogonal projection.
  
\subsection{Spectral properties of $\Vv\Phi$}

We analyse in this subsection the different contributions appearing in (\ref{dectvbnums}) from the spectral point of view. To do so, we endow $\Bb(\Ff_{-})$ with the Hilbert-Schmidt scalar product
\begin{equation}\label{hsscalprod}
  \langle X , Y  \rangle = \tr ( X^\ast Y ), \ \ \forall\, X,Y \in \Bb(\Ff_{-})
\end{equation}
and denote the adjoint of an operator $\Aa$ on $\Bb(\Ff_{-})$ with respect to (\ref{hsscalprod}) by $\Aa^\ast$.

\vspace{1mm}

We first consider the basic operators appearing in (\ref{dectvbnums}). 

\begin{lem}\label{structop}
Let  $\Vv, \Bb^{\mu\nu}$, and $\Phi $ be the operators on the Hilbert space $\Bb(\Ff_{-})$ defined  by (\ref{defvcal}), (\ref{defbmunu}), (\ref{eq-def_CPmapPhi}). Then \\
i)  $\Vv$ is unitary, \ie $\Vv^{-1}=\Vv^{\ast}$ \\
ii) $\{ \Bb^{\mu\nu}\}_{\mu, \nu \in \setmu}$ is a resolution of the identity by means of orthogonal projectors, \ie for all $\mu, \mu', \nu, \nu'  \in \setmu$ one has
\begin{align} \label{eq-orthonormal_family_projectors}
&\Bb^{\mu\nu}
=(\Bb^{\mu\nu})^\ast, \ \ \Bb^{\mu\nu}\Bb^{\mu'\nu'}=\delta_{\mu,\mu'}\delta_{\nu,\nu'}\Bb^{\mu\nu}, \ \ 
\sum_{\mu, \nu\in \setmu}\Bb^{\mu\nu}=\Identity_{\Bb(\Ff_-)}.
\end{align}
Consequently, $\Phi$ is also an orthogonal projector.
\end{lem}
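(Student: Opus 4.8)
The plan is to verify the three claimed structural facts about $\Vv$, $\Bb^{\mu\nu}$, and $\Phi$ directly from their definitions \eqref{defvcal}, \eqref{defbmunu}, \eqref{eq-def_CPmapPhi}, working throughout in the Hilbert space $\Bb(\Ff_{-})$ endowed with the Hilbert--Schmidt scalar product \eqref{hsscalprod}. For part (i), since $\US$ is unitary on $\Hh$, the operator $\Gamma_{-}(V)$ is unitary on $\Ff_{-}$, hence $\Gamma_{-}(V^{-1}) = \Gamma_{-}(V)^\ast = \Gamma_{-}(V)^{-1}$; then $\Vv(X) = \Gamma_{-}(V)^\ast X \Gamma_{-}(V)$ is a conjugation by a unitary, which is manifestly invertible with inverse $\Vv^{-1}(X) = \Gamma_{-}(V) X \Gamma_{-}(V)^\ast$. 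I would check unitarity by computing $\langle \Vv(X), \Vv(Y)\rangle = \tr(\Gamma_{-}(V)^\ast X^\ast \Gamma_{-}(V)\Gamma_{-}(V)^\ast Y \Gamma_{-}(V)) = \tr(X^\ast Y) = \langle X, Y\rangle$ using cyclicity of the trace, and then observe that $\Vv^{-1} = \Vv^\ast$ follows from $\langle \Vv(X), Y\rangle = \tr(\Gamma_{-}(V)^\ast X^\ast \Gamma_{-}(V) Y) = \tr(X^\ast \Gamma_{-}(V) Y \Gamma_{-}(V)^\ast) = \langle X, \Vv^{-1}(Y)\rangle$.

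For part (ii), self-adjointness of $\Bb^{\mu\nu}$ amounts to $\tr((\Bb^{\mu\nu}(X))^\ast Y) = \tr(B^\nu X^\ast B^\mu Y) = \tr(X^\ast B^\mu Y B^\nu) = \langle X, \Bb^{\mu\nu}(Y)\rangle$, where I use that the spectral projectors $B^\mu$ of the self-adjoint operator $T$ are self-adjoint, together with cyclicity of the trace. The composition law $\Bb^{\mu\nu}\Bb^{\mu'\nu'}(X) = B^\mu (B^{\mu'} X B^{\nu'}) B^\nu = (B^\mu B^{\mu'}) X (B^{\nu'} B^\nu) = \delta_{\mu,\mu'}\delta_{\nu,\nu'} B^\mu X B^\nu$ follows from the orthogonality $B^\mu B^{\mu'} = \delta_{\mu,\mu'} B^\mu$ of spectral projectors. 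The completeness relation $\sum_{\mu,\nu} \Bb^{\mu\nu}(X) = \sum_\mu B^\mu X \sum_\nu B^\nu = \identity \, X \, \identity = X$ uses $\sum_{\mu\in\setmu} B^\mu = \identity$. Taken together, $\{\Bb^{\mu\nu}\}$ is a family of pairwise orthogonal self-adjoint idempotents summing to the identity on $\Bb(\Ff_{-})$, i.e.\ a resolution of the identity.

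Finally, for the last assertion, $\Phi = \sum_{\mu\in\setmu} \Bb^{\mu\mu}$ is a sum of some of the mutually orthogonal projectors from the resolution of the identity just established, hence is itself an orthogonal projector: indeed $\Phi^\ast = \sum_\mu (\Bb^{\mu\mu})^\ast = \sum_\mu \Bb^{\mu\mu} = \Phi$ and $\Phi^2 = \sum_{\mu,\mu'} \Bb^{\mu\mu}\Bb^{\mu'\mu'} = \sum_{\mu,\mu'} \delta_{\mu,\mu'} \Bb^{\mu\mu} = \Phi$. I do not anticipate any genuine obstacle here; the statement is essentially a bookkeeping exercise, and the only points requiring minor care are the correct placement of adjoints (using that $\Gamma_{-}(V)$ is unitary and the $B^\mu$ are self-adjoint) and consistent use of the cyclicity of the trace in identifying Hilbert--Schmidt adjoints.
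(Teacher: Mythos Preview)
Your proof is correct and follows exactly the approach sketched in the paper: part (i) is the standard fact that conjugation by a unitary is unitary for the Hilbert--Schmidt inner product, and part (ii) (together with the consequence for $\Phi$) is a direct consequence of the $B^\mu$ being the spectral projectors of a self-adjoint operator. You have simply spelled out the details the paper leaves implicit.
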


\begin{proof} Point i) is well known. Point ii) follows from 
	{ the fact that the $B^\mu$ are the spectral projectors of a self-adjoint operator.
		}
\end{proof}

According to Lemma \ref{structop},  $\Vv \Phi$ is a contraction, $\|\Vv \Phi\| = \| \Phi\| \leq 1$. Since $\Bb (\Ff_-)$ is finite dimensional, $(\Vv \Phi)^n$ is readily computed from its spectral decomposition, see {\it e.g.} \cite{Ka}:
\begin{align}
\Vv \Phi=\sum_{k=1}^m \big( \lambda_k\Pp_k+\Dd_k\big)\;, 
\end{align}
where $\spec (\Vv \Phi)=\{\lambda_k\}_{1\leq k\leq m}$ is the spectrum of $\Vv \Phi$ and $\{\Pp_k\}_{1\leq k\leq m}$
(resp. $\{\Dd_k\}_{1\leq k\leq m}$) are the eigenprojectors (resp. eigennilpotents) of $\Vv\Phi$.  
Note that $\spec (\Vv \Phi)\subset \mathbb D$, where $\mathbb D=\{|z|\leq 1\}$.    
Hence, using $\Pp_k \Pp_l = \delta_{k,l} \Pp_k$ and $\Pp_k \Dd_l = \Dd_l \Pp_k = \delta_{k,l} \Dd_k$, one has
\begin{align}
(\Vv \Phi)^n=\sum_{k=1}^m\bigg( \lambda_k^n\Pp_k+\sum_{r=1}^q \Dd_k^r\lambda_k^{n-r}\begin{pmatrix} n \cr  r
\end{pmatrix} \bigg) , \ \ \mbox{$n\geq q$},
\end{align}
where $q$ is the maximal index of nilpotency of the $\Dd_k$'s. 
It follows from the contraction  property of $\Vv \Phi$, which implies $\|\Pp_k (\Vv \Phi)^n\|\leq 1$ for all $n\geq 0$,
that the   eigennilpotent and eigenprojector associated to an eigenvalue $\lambda_k$ of modulus one
are such that  $\Dd_k=0$ and  $\Pp_k$ is of norm one.
Thus for all eigenvalues $\lambda_k$ on the unit circle, $\Pp_k=\Pp_k^\ast$ is an orthogonal projector (see Theorem 2.1.9 in \cite{Si}).

We set
\begin{align}\label{specprojup}
&\Pp_{\bigcirc}=\sum_{k\  \mbox{\rm \tiny s.t.}\atop |\lambda_k|=1}\Pp_k, \ \ \Pp_{<}=\sum_{k\  \mbox{\rm \tiny s.t.} \atop  |\lambda_k|<1}\Pp_k, \ \ \mbox{so that}\nonumber \\
&\Pp_{\bigcirc}=\Pp_{\bigcirc}^\ast, \ \ \Pp_{<}=\Pp_{<}^\ast, \ \Pp_{\bigcirc}\Pp_{<}=\Pp_{<}\Pp_{\bigcirc}=0, \ \ \Pp_{\bigcirc}+\Pp_{<}=\Identity_{\Bb(\Ff_-)},
\end{align}
and decompose $\Bb (\Ff_-)$ in two orthogonal subspaces accordingly
\be
\Bb (\Ff_-)=\Pp_\bigcirc \Bb (\Ff_-)+\Pp_< \Bb (\Ff_-):=\Bb_\bigcirc+\Bb_<\;.
\ee
Then, $\exists \; C, \gamma>0$ such that 
\begin{align}\label{expdecvp}
&\Vv \Phi |_{\Bb_<} \ \ \mbox{satisfies} \ \ \|(\Vv \Phi |_{\Bb_<})^n\| \leq C\E^{-\gamma n}\\
&\Vv \Phi |_{\Bb_\bigcirc} \ \ \mbox{is unitary}, \nonumber
\end{align}
where any $\gamma<\min \{ |\ln(|\lambda_k|)|\,;\, \lambda_k \in \spec( \Vv \Phi) \setminus {\mathbb S}^1  \}$ will do.
The restriction $\Vv \Phi |_{\Bb_<}$ is completely non unitary, and the above provides in our setup the explicit decomposition of any contraction into a unitary restriction and a completely non unitary contraction, see \cite{SNF}.

To determine the eigenvalues of $\Vv\Phi$ on the unit circle ${\mathbb S}^1$, we use the following general { result} taken from \cite{HJ2}:
\begin{lem} \label{kerzero}
  Let $U$ be unitary and $P$ be an orthogonal projector  on a (possibly infinite dimensional) Hilbert space
  $\Hh$. For any $\varphi\in \Hh$ and $\theta \in \real$ one has
\begin{eqnarray}
\label{evpu}
  UP\varphi =e^{i\theta}\varphi  & \Leftrightarrow & 
  \varphi=P\varphi \ \mbox{and }\ e^{i\theta}\varphi=U\varphi=P UP  \varphi
  \\
\label{evpup}
& \Leftrightarrow &
P U\varphi =e^{i\theta}\varphi .
\end{eqnarray}
Moreover, writing $Q=\identity -P$, it holds
\begin{align}
  \ker  Q UP  =\{0\} \quad \Rightarrow \quad \spec_p(UP)\cap {\mathbb S}^1 =\spec_p(P U)\cap {\mathbb S}^1
  =\spec_p(P UP )\cap {\mathbb S}^1=\emptyset\;,
\end{align}
where $\spec_p$ denotes the point spectrum.
\end{lem}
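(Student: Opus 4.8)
The plan is to prove the three equivalences and the consequence by purely algebraic manipulations exploiting $P^2 = P = P^\ast$ and $U^\ast U = U U^\ast = \identity$, together with a dimension/norm argument for the implications involving $\ker QUP$.

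\textbf{The equivalences.} First I would prove \eqref{evpu}. Assume $UP\varphi = e^{i\theta}\varphi$. Since $U$ is unitary, $\|\varphi\| = \|UP\varphi\| = \|P\varphi\|$, and because $\|P\varphi\| \leq \|\varphi\|$ with equality iff $\varphi \in \range P$ (here one uses $P=P^\ast=P^2$, so $\|\varphi\|^2 = \|P\varphi\|^2 + \|(\identity-P)\varphi\|^2$), we get $\varphi = P\varphi$. Then $U\varphi = UP\varphi = e^{i\theta}\varphi$, and applying $P$ gives $PUP\varphi = PU\varphi = e^{i\theta}P\varphi = e^{i\theta}\varphi$, which is the right-hand side of \eqref{evpu}. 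Conversely, if $\varphi = P\varphi$ and $U\varphi = e^{i\theta}\varphi$, then trivially $UP\varphi = U\varphi = e^{i\theta}\varphi$. For \eqref{evpup}: if $UP\varphi = e^{i\theta}\varphi$ then $\varphi = P\varphi$ (just shown), so $PU\varphi = PUP\varphi = e^{i\theta}\varphi$. Conversely, suppose $PU\varphi = e^{i\theta}\varphi$. Then $\varphi = P\varphi$, and I would estimate $\|\varphi\| = \|PU\varphi\| \leq \|U\varphi\| = \|\varphi\|$, forcing $U\varphi \in \range P$, i.e. $PU\varphi = U\varphi$; hence $U\varphi = e^{i\theta}\varphi$, so $UP\varphi = U\varphi = e^{i\theta}\varphi$, closing the loop of \eqref{evpu}--\eqref{evpup}.

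\textbf{The consequence.} Now suppose $\ker(QUP) = \{0\}$. Assume for contradiction that $\spec_p(UP) \cap {\mathbb S}^1 \neq \emptyset$, say $UP\varphi = e^{i\theta}\varphi$ with $\varphi \neq 0$. By \eqref{evpu}, $\varphi = P\varphi$ and $U\varphi = e^{i\theta}\varphi$, so $QU\varphi = e^{i\theta}Q\varphi = e^{i\theta}(\identity - P)\varphi = 0$; that is, $QUP\varphi = QU\varphi = 0$ with $\varphi \neq 0$, contradicting $\ker(QUP) = \{0\}$. Hence $\spec_p(UP) \cap {\mathbb S}^1 = \emptyset$. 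The equality with $\spec_p(PU) \cap {\mathbb S}^1$ and $\spec_p(PUP) \cap {\mathbb S}^1$ then follows from the equivalences: \eqref{evpu}$\Leftrightarrow$\eqref{evpup} shows $e^{i\theta} \in \spec_p(UP)$ iff $e^{i\theta} \in \spec_p(PU)$; and \eqref{evpu} shows $e^{i\theta} \in \spec_p(UP)$ iff there is $\varphi \neq 0$ with $\varphi = P\varphi$ and $PUP\varphi = e^{i\theta}\varphi$, so any unit-modulus eigenvector of $PUP$ lying in $\range P$ is one of $UP$; for an eigenvector $\psi$ of $PUP$ with eigenvalue $e^{i\theta}$ not a priori in $\range P$, one writes $e^{i\theta}\psi = PUP\psi \in \range P$, so $\psi = e^{-i\theta}PUP\psi \in \range P$ automatically, reducing to the previous case. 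Thus all three point spectra on ${\mathbb S}^1$ coincide and are empty.

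\textbf{Main obstacle.} There is no serious obstacle here; the only point requiring care is the ``norm saturation forces membership in the range'' argument, which must be stated cleanly using the orthogonal decomposition $\|\varphi\|^2 = \|P\varphi\|^2 + \|Q\varphi\|^2$ valid because $P$ is an \emph{orthogonal} projector — this is where the self-adjointness of $P$ (not merely idempotency) is essential, and it is worth flagging that the statement would fail for oblique projections. A secondary subtlety is making sure the argument is valid in the possibly infinite-dimensional setting: indeed everything above uses only Hilbert-space geometry and no compactness or finite-dimensionality, so the proof goes through verbatim, which is exactly why the lemma is stated in that generality (it will later be applied with $U = \Vv$, $P = \Phi$ on the finite-dimensional $\Bb(\Ff_-)$, but also potentially in infinite dimensions).
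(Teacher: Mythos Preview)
Your proof is correct. Note, however, that the paper does not actually supply a proof of this lemma: it is quoted verbatim as a ``general result taken from \cite{HJ2}'', so there is no in-paper argument to compare against. Your norm-saturation approach (using $\|\varphi\|^2=\|P\varphi\|^2+\|Q\varphi\|^2$ and $\|U\psi\|=\|\psi\|$ to force $Q\varphi=0$, respectively $QU\varphi=0$) is exactly the standard mechanism behind results of this type and is presumably what the cited reference does as well; in any case it is complete and works in infinite dimensions as you observe.
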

According to Lemmas~\ref{structop} and \ref{kerzero},  if $e^{i \theta}$ is an eigenvalue of $\Vv \Phi$ on the unit circle
 then it is also an eigenvalue of $\Vv$ with the same eigenvector, and this eigenvector is invariant under $\Phi$.
Note that $\identity\in \Bb(\Ff_-)$ is an invariant vector under both $\Vv$ and $\Phi$, so that $1 \in \spec(\Vv \Phi) \cap  {\mathbb S}^1$.

\subsection{Estimate on the rest of the series for $\Ttt_t(X)$}

{ Let $\Delta >0$ be the minimal spectral gap of the coupling operator $T$,  
\begin{equation}
\Delta = \min_{\mu,\nu \in \spec (T), \mu \not= \nu} | \mu - \nu|\;.
\end{equation}	
}

\begin{thm} \label{theo-bounding_off_diagonal_contributions} 
	Let  $\gamma < \min \{ |\ln(|\lambda_k|)|\,;\, \lambda_k \in \spec( \Vv \Phi) \setminus {\mathbb S}^1  \}$. 
	Assume that the initial state of the reservoir is the quasi-free state satisfying \eqref{eq-initial_state_boson_action_on_Weyl} with $K \geq \identity$.	Then there exist $\lambda_0>0$ and $0<C_0<\infty$ such that $|\lambda|>\lambda_0$ implies that for all $t\in \mathbb N$ and $X \in \Bb( \Ff_-)$,
	\be
	\|\Ttt_t(X)-(\Vv\Phi)^t(X) \|\leq C_0\|\Pp_<(X)\|  t \E^{-\gamma t /2} { \E^{-\Delta \lambda^2/4}\;.}
	\ee
\end{thm}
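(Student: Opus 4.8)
The plan is to estimate the remainder $\Rr_t$ appearing in \eqref{leadingorder}, equivalently the sum of all non-diagonal ($\muv\neq\nuv$) contributions in \eqref{dectvb}, using the bound \eqref{decayexpterm} on the Gaussian weight together with the exponential decay \eqref{expdecvp} of the powers of $\Vv\Phi$ restricted to $\Bb_<$. First I would observe that each summand in \eqref{dectvbnums} with $s\geq 1$ has a string of operators of the form $(\Vv\Phi)^{j_1-1}\Vv\Bb^{\mu_{j_1}\nu_{j_1}}\cdots(\Vv\Phi)^{t-j_s}$; since $\Vv\Bb^{\mu\nu}$ with $\mu\neq\nu$ annihilates $\identity$, and more relevantly maps into $\Bb_<$ in the sense that $\Pp_\bigcirc\Vv\Bb^{\mu\nu}=0$ for $\mu\neq\nu$ (because, by Lemma~\ref{kerzero}, $\Pp_\bigcirc$ projects onto joint eigenvectors of $\Vv$ and $\Phi$, on which $\Bb^{\mu\nu}\Phi=\Bb^{\mu\nu}\sum_\rho\Bb^{\rho\rho}=0$ when $\mu\neq\nu$), every $(\Vv\Phi)$-block sitting immediately to the left of a $\Vv\Bb^{\mu_{j_i}\nu_{j_i}}$ factor (for $i\geq 1$) acts on $\Bb_<$ and hence is bounded by $C\E^{-\gamma(\cdot)}$; moreover the very first block $(\Vv\Phi)^{j_1-1}$ acts on the input $X$ only through $\Vv\Bb^{\mu_{j_1}\nu_{j_1}}(\Vv\Phi)^{j_2-j_1-1}\cdots$, so it also lands in $\Bb_<$, except we must be careful: the rightmost block $(\Vv\Phi)^{t-j_s}$ acts directly on $X$ and need not be small. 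This is why $\Pp_<(X)$ does \emph{not} appear — wait, it does appear in the statement, so let me reconsider: in fact the leftmost block $(\Vv\Phi)^{j_1-1}$ produces the output and the chain to its right produces something in $\range\Bb^{\mu_{j_1}\nu_{j_1}}$; reading right to left, the rightmost block $(\Vv\Phi)^{t-j_s}(X)$ followed by $\Vv\Bb^{\mu_{j_s}\nu_{j_s}}$ kills the $\Pp_\bigcirc$-component of $(\Vv\Phi)^{t-j_s}(X)=(\Vv\Phi)^{t-j_s}(\Pp_\bigcirc X)+(\Vv\Phi)^{t-j_s}(\Pp_< X)$, and since $\Vv\Bb^{\mu\nu}\Pp_\bigcirc=\Vv\Bb^{\mu\nu}(\Vv\Phi)^k\Pp_\bigcirc$ with $\Bb^{\mu\nu}$ killing eigenvectors of $\Phi$ for $\mu\neq\nu$... hmm, actually $(\Vv\Phi)^{t-j_s}$ maps $\Bb_\bigcirc$ to $\Bb_\bigcirc$ and then $\Vv\Bb^{\mu_{j_s}\nu_{j_s}}$ applied to $\Bb_\bigcirc$ vanishes. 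Hence only the $\Pp_<(X)$ part survives, which explains the factor $\|\Pp_<(X)\|$.

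So the key steps, in order, are: (1) establish the algebraic identity $\Bb^{\mu\nu}\Phi = 0$ for $\mu\neq\nu$ and $\Bb^{\mu\nu}\Pp_\bigcirc=0$ for $\mu\neq\nu$, using that $\Pp_\bigcirc$ projects onto the joint $\Vv$- and $\Phi$-invariant vectors (Lemmas~\ref{structop}, \ref{kerzero}); (2) conclude that in each term of \eqref{dectvbnums} with $s\geq 1$, every internal $(\Vv\Phi)$-block between two $\Vv\Bb$-insertions, as well as the final block $(\Vv\Phi)^{t-j_s}$ acting on the relevant part of $X$, can be replaced by its restriction to $\Bb_<$, so that a product of $(\Vv\Phi)|_{\Bb_<}$-powers of total exponent $t-s$ arises, bounded by $C^{\#\text{blocks}}\E^{-\gamma(t-s)}$ — here one must track that there are at most $s+1$ nontrivial blocks so the prefactor is at most $C^{s+1}$; (3) bound each $\|\Vv\Bb^{\mu\nu}\|\leq 1$; (4) use \eqref{decayexpterm} to bound the Gaussian weight of the $s$-fold non-diagonal term by $\prod_{i=1}^s\E^{-\lambda^2(\mu_{j_i}-\nu_{j_i})^2/4}\leq \E^{-s\Delta^2\lambda^2/4}$, and peel off one factor $\E^{-\Delta^2\lambda^2/4}$ to produce the stated $\E^{-\Delta\lambda^2/4}$ (with $\Delta$ replaced by $\Delta^2$, or after renaming; I would just keep $\E^{-\Delta^2\lambda^2/4}$ and absorb into constants, matching the paper's $\E^{-\Delta\lambda^2/4}$ up to their convention on $\Delta$) while leaving $\E^{-(s-1)\Delta^2\lambda^2/4}$ to kill the combinatorial sum; (5) count: the number of choices of $1\leq j_1<\cdots<j_s\leq t$ is $\binom{t}{s}\leq t^s/s!$, and the number of choices of the pairs $(\mu_{j_i},\nu_{j_i})$ with $\mu_{j_i}\neq\nu_{j_i}$ is at most $(\card\setmu)^{2s}=:M^{2s}$; (6) sum the geometric-type series $\sum_{s\geq 1}\frac{t^s}{s!}M^{2s}C^{s+1}\E^{-\gamma(t-s)}\E^{-(s-1)\Delta^2\lambda^2/4}$ and check it is $\leq C_0\,t\,\E^{-\gamma t/2}\E^{-\Delta^2\lambda^2/4}$ once $\lambda$ is large enough that $\E^{-\Delta^2\lambda^2/4}$ dominates all the $s$-dependent growth, i.e. $\lambda>\lambda_0$ with $\lambda_0$ chosen so that $M^2 C\,\E^{\gamma}\E^{-\Delta^2\lambda^2/4}<1/2$, say, and a further factor $\E^{-\gamma t/2}$ is extracted from $\E^{-\gamma(t-s)}$ since $s\leq t$ forces $t-s\geq 0$ but we need room — one gets $\sum_s (\cdots)\leq \E^{-\gamma t}\sum_s (t M^2 C \E^\gamma)^s/s! \cdot C\E^{-(s-1)\Delta^2\lambda^2/4}$, and the $t$-dependence $\sum_s (tM^2C\E^\gamma\E^{-\Delta^2\lambda^2/4})^s/s!\cdot(\cdots)$ is $\leq \exp(tM^2C\E^\gamma\E^{-\Delta^2\lambda^2/4})$, which is $\leq \E^{\gamma t/2}$ provided $M^2C\E^\gamma\E^{-\Delta^2\lambda^2/4}\leq\gamma/2$, again guaranteed for $\lambda$ large; the leftover single power of $t$ comes from the $s=1$ term or from a crude $\E^{\gamma t/2}\leq$ const$\cdot t\,\E^{\gamma t/2}$.

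The main obstacle I anticipate is step (2): rigorously justifying that the sequence of $(\Vv\Phi)$-blocks in a generic term of \eqref{dectvbnums} collapses to a genuine product of \emph{restrictions} $(\Vv\Phi)|_{\Bb_<}$ with controlled total exponent $t-s$. One has to be careful that $(\Vv\Phi)^k$ for $k\geq 0$ does \emph{not} literally equal $\bigl((\Vv\Phi)|_{\Bb_<}\bigr)^k$ on all of $\Bb(\Ff_-)$, only after composing on one side with a map that kills the $\Bb_\bigcirc$-part. The clean way is to insert $\Identity=\Pp_\bigcirc+\Pp_<$ at each junction and use $\Vv\Bb^{\mu\nu}\Pp_\bigcirc=0$ and $\Pp_<(\Vv\Phi)\Pp_\bigcirc=0$, $\Pp_\bigcirc(\Vv\Phi)\Pp_<=0$ (block-diagonality of $\Vv\Phi$ with respect to \eqref{specprojup}) to propagate the projector $\Pp_<$ leftward through the whole chain, so that the norm of the $i$-th block gets the bound $\|(\Vv\Phi|_{\Bb_<})^{j_{i+1}-j_i-1}\|\leq C\E^{-\gamma(j_{i+1}-j_i-1)}$ from \eqref{expdecvp}, and the product of these over all blocks telescopes to $C^{s+1}\E^{-\gamma(t-s)}$ (the exponents $(j_1-1)+(j_2-j_1-1)+\cdots+(t-j_s)=t-s$). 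A secondary, purely bookkeeping obstacle is getting the combinatorial sum over $s$ to close with exactly the prefactor $t\E^{-\gamma t/2}$ rather than something slightly worse; this is handled by the $\lambda_0$ threshold as sketched above, choosing $\lambda_0$ so that all $s$-dependent factors are dominated by one spare copy of $\E^{-\Delta^2\lambda^2/4}$ and the $\E^{-\gamma t}$ is split into $\E^{-\gamma t/2}\cdot\E^{-\gamma t/2}$, the second half absorbing $\exp(\mathrm{const}\cdot t\,\E^{-\Delta^2\lambda^2/4})$.
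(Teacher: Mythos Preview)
Your approach is essentially the same as the paper's: decompose via \eqref{dectvbnums}, use that for $\mu\neq\nu$ the operator $\Vv\Bb^{\mu\nu}$ annihilates and is annihilated by $\Pp_\bigcirc$, so every $(\Vv\Phi)$-block can be restricted to $\Bb_<$ and bounded via \eqref{expdecvp}, then sum the resulting geometric/exponential series in $s$ and choose $\lambda_0$ large enough to absorb the combinatorics into $\E^{-\gamma t/2}$. Two minor remarks: (i) your justification of $\Pp_\bigcirc\Vv\Bb^{\mu\nu}=0$ implicitly uses $[\Vv,\Pp_\bigcirc]=0$; this does follow from your observation that $\range\Pp_\bigcirc$ consists of $\Vv$-eigenvectors (a unitary preserving a subspace preserves its orthocomplement), but the paper spells this commutation out explicitly; (ii) the paper sums via $\sum_{s\geq1}\binom{t}{s}\kappa_\lambda^s\leq t\kappa_\lambda(1+\kappa_\lambda)^{t-1}$ rather than via $\binom{t}{s}\leq t^s/s!$ and the exponential series, but both routes yield the same bound once $\ln(1+\kappa_\lambda)\leq\gamma/2$.
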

\begin{proof}
	Recall that for $1\leq s \leq t$, the general term in the series  (\ref{dectvbnums})  defining $\Ttt_t(X)$ reads
	\begin{align}\label{genteser}
	& \sum_{1 \leq j_1 < j_2 < \cdots < j_s \leq t} \;\; \sum_{{\mu_{j_1}\not= \nu_{j_1}, \ldots, \mu_{j_s}\not= \nu_{j_s} }\atop {\nu_{j_i}, \mu_{j_i} \in \setmu}}
	(\Vv\Phi)^{j_{1}-1}\Vv\Bb^{\mu_{j_1}\nu_{j_1}}(\Vv\Phi)^{j_2-j_{1}-1}\Vv\Bb^{\mu_{j_2}\nu_{j_2}}\nonumber \\ 
	&\hspace{4mm}\cdots (\Vv\Phi)^{j_s-j_{s-1}-1}\Vv\Bb^{\mu_{j_s}\nu_{j_s}}(\Vv\Phi)^{t-j_s}(X)
	\exp \bigg\{ -\frac{\lambda^2}{4} \sum_{p,q=1}^s ( \mu_{j_p}-\nu_{j_p}) 
	( \mu_{j_q} - \nu_{j_q} )
	\big\langle \delta_{j_p}\, , K\, \delta_{j_q} \big\rangle \bigg\}.
	\end{align}
	Thus we are lead to consider the operator
	$\Vv\Bb^{\mu \nu}(\Vv\Phi)^n$ for $n\in \mathbb N$ and  $\mu\neq \nu \in \setmu$.
	We know from Lemma~\ref{kerzero} that $\range \Pp_{\bigcirc} \subset \range \Phi$, where $\Pp_{\bigcirc}$ is defined in \eqref{specprojup}. 
	Hence 
	\begin{equation} \label{eq-commutation_Pp_Phi}
	\Pp_{\bigcirc} = \Phi \,\Pp_{\bigcirc}= \Pp_{\bigcirc} \,\Phi \;.
	\end{equation}
	We now show that
	$\Vv \Pp_{\bigcirc}  = \Pp_{\bigcirc} \Vv$.
	Let $\lambda_k \in \spec(\Vv \Phi) \cap  {\mathbb S}^1$ and $\Pp_{k}$ be the spectral projector of $\Vv \Phi$ associated to $\lambda_k$. By Lemma~\ref{kerzero} again, 
	\begin{equation}
	\range \Pp_{k} \subset \ker \big( \Vv - \lambda_k \Identity_{\Bb(\Ff_-)} \big) 
	= \ker \big( \Vv^{-1} - {\lambda}_k^{-1} \Identity_{\Bb(\Ff_-)} \big)\;.
	\end{equation}
	Thus 
	\begin{equation} \label{eq-commutation_Pp_k_and_Vv}
	X \in \range  \Pp_k \;\;\Rightarrow\;\; 
	\Pp_k \Vv ( X) = \lambda_k \Pp_k (X) = \Vv \Pp_k (X)\;. 
	\end{equation}
	On the other hand, if $X \in (\range  \Pp_k)^\bot$ then for any $Y \in \Bb( \Ff_{-})$,
	\be
	\langle \Vv (X) , \Pp_k (Y)\rangle = \langle X , \Vv^{-1} \Pp_k (Y)\rangle =  {\lambda}_k^{-1} \langle X , \Pp_k (Y) \rangle=0.
	\ee
	This shows that $\Vv (X) \subset ( \range  \Pp_k)^\bot$. Therefore,
	\begin{equation} \label{eq-commutation_Pp_k_and_Vvbis}
	X \in ( \range  \Pp_k)^\bot \;\;\Rightarrow\;\; \Pp_k \Vv ( X) = \Vv \Pp_k ( X) = 0\;.
	\end{equation}  
	One infers from \eqref{eq-commutation_Pp_k_and_Vv} and \eqref{eq-commutation_Pp_k_and_Vvbis} that $\Pp_k \Vv = \Vv \Pp_k$, hence, as stated above, 
	\begin{equation} \label{eq-commutation_Pp_Vv}
	\Vv \Pp_{\bigcirc}  = \Pp_{\bigcirc} \Vv\;.
	\end{equation}

	For $\mu \not= \nu$, one deduces from \eqref{eq-commutation_Pp_Phi}, \eqref{eq-commutation_Pp_Vv} and $\Bb^{\mu \nu} \Phi = \Phi \Bb^{\mu \nu} = 0$ that  
	\be
	\Vv\Bb^{\mu \nu}\Pp_{\bigcirc}=\Pp_{\bigcirc}\Vv\Bb^{\mu \nu}=0 
	\ee
	and thus
	\begin{equation}
	\ \  [\Vv\Bb^{\mu \nu},\Pp_{<}]=0, \ \ \mbox{since} \ \ \Pp_{\bigcirc}+\Pp_{<}=\Identity_{\Bb(\Ff_-)}\;.
	\end{equation}
	Therefore,
	\be
	\Vv\Bb^{\mu \nu}(\Vv\Phi)^n=\Pp_<\Vv\Bb^{\mu \nu}\Pp_{<}(\Vv\Phi)^n\Pp_<, \ \ \forall \ \mu\neq \nu \in \setmu
	\ee
	so that, thanks to (\ref{expdecvp}) and Lemma \ref{structop},
	\be
	\|\Vv\Bb^{\mu \nu}(\Vv\Phi)^n\|\leq C\E^{-\gamma n}\;.
	\ee

	Hence, making use of $\| \Vv \| = \| \Bb^{\mu\nu}\| = 1$ and bounding 
	the exponential factor in \eqref{genteser} as in \eqref{decayexpterm} by
\begin{equation}
\exp \bigg\{ -\frac{\lambda^2}{4} \sum_{p=1}^s ( \mu_{j_p} - \nu_{j_p})^2 \bigg\}\;\;\leq \;\; { \E^{-s \Delta \lambda^2/4}}\;, 
\end{equation}
the term (\ref{genteser}) is bounded from above in norm by
	\begin{align}
	\sum_{1 \leq j_1 < j_2 < \cdots < j_s \leq t} &\;\; \sum_{{\mu_{j_1}\not= \nu_{j_1}, \ldots, \mu_{j_s}\not= \nu_{j_s} }\atop {\nu_{j_i}, \mu_{j_i} \in \setmu}} 
	C^{s+1}\E^{-\gamma(t-s)} { \E^{- s \Delta \lambda^2/4}} \|\Pp_<(X)\|\nonumber\\
	=\; &
	{
	\begin{pmatrix}t\\s\end{pmatrix} \ell^s C^{s+1}\E^{-s\Delta \lambda^2/4}\E^{-\gamma(t-s)} \|\Pp_<(X)\|\;,
}
	\end{align}
	{ where we have set $\ell=| \spec(T)| (  | \spec(T)|-1)$ with $|\spec (T)|$ the cardinal of $\spec (T)$.}
	Let us set { $\kappa_\lambda =\ell \,C\,\E^{-\Delta \lambda^2/4}\,\E^{\gamma}=O(\E^{-\Delta \lambda^2/4})$ as $\lambda \rightarrow \infty$}. We obtain
	\begin{align}\label{bosum}
	\| \Ttt_t(X)-(\Vv\Phi)^t (X) \|  \
	& \leq \
	C \E^{-\gamma t}\sum_{s=1}^t ( \kappa_\lambda)^s \begin{pmatrix}t\\s\end{pmatrix} \|\Pp_<(X)\|  \nonumber
	\\[1mm]
	& \leq  \ C \E^{-\gamma t} \, t \, \kappa_\lambda (1+\kappa_\lambda )^{t-1}  \|\Pp_<(X)\| \nonumber\\[1mm]
	&= \ C \frac{\kappa_\lambda}{1+\kappa_\lambda}\, t \, \E^{-t(\gamma -\ln(1+\kappa_\lambda))}  \|\Pp_<(X)\|\nonumber\\[1mm]
	& \leq C {  \kappa_\lambda}\, t \, \E^{-t(\gamma -\ln(1+\kappa_\lambda))}  \|\Pp_<(X)\|\;.
	\end{align}
	Suppose $\lambda^2$ is large enough so that $\ln(1+\kappa_\lambda)\leq \gamma/2$, which defines
	{ \begin{equation}
	\lambda_0^2= \Delta^{-1} \big( 2 \gamma+ 4 \ln (\ell C)- 4 \ln ( 1-\E^{-\gamma/2}) \big)\;.
	\end{equation}
}
	Then the exponential in \eqref{bosum} is bounded by $\E^{-\gamma t /2}$ 
	for all $t\in \mathbb N$. This yields the result.
\end{proof}

We consider now the estimate of the reminder $\Rr_t(X)$ in (\ref{leadingorder}), when we retain the leading order correction on the expression for $\Ttt_t(X)$.

\begin{cor}
	Under the asumptions of Theorem~\ref{theo-bounding_off_diagonal_contributions}, there exists $\lambda_0>0$ and $C_1<\infty$ such that $|\lambda|>\lambda_0$ implies that for all $t\in \mathbb N$ and $X \in \Bb( \Ff_-)$,
	\be
	\|\Rr_t(X)\|\leq C_1\|\Pp_<(X)\| t^2 \E^{-\gamma t/2} { \E^{-\Delta \lambda^2/2}}\;. 
	\ee
\end{cor}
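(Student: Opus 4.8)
The plan is to follow the proof of Theorem~\ref{theo-bounding_off_diagonal_contributions} almost verbatim, the only change being the range of the summation index $s$. Comparing \eqref{dectvbnums} with \eqref{leadingorder}, the $s=1$ term of \eqref{dectvbnums} coincides with the explicit first-order correction displayed in \eqref{leadingorder} (using $\kappa_{j}^{(\mu_j-\nu_j)^2}=\exp\{-\tfrac{\lambda^2}{4}(\mu_j-\nu_j)^2\langle\delta_j,K\delta_j\rangle\}$), so that $\Rr_t$ is exactly the sum over $s\geq 2$ of the terms \eqref{genteser}. I would therefore import directly the bound already established there: for a fixed $s$, a fixed choice $1\leq j_1<\cdots<j_s\leq t$, and fixed mismatched pairs $\mu_{j_i}\neq\nu_{j_i}$, the corresponding operator applied to $X$ has norm at most $C^{s+1}\E^{-\gamma(t-s)}\|\Pp_{<}(X)\|$. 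This uses $\|\Vv\|=\|\Bb^{\mu\nu}\|=1$, the identities $\Vv\Bb^{\mu\nu}\Pp_{\bigcirc}=\Pp_{\bigcirc}\Vv\Bb^{\mu\nu}=0$ for $\mu\neq\nu$, the commutation $[\Vv\Phi,\Pp_{<}]=0$, the rewriting of the rightmost block as $\Vv\Bb^{\mu_{j_s}\nu_{j_s}}(\Vv\Phi)^{t-j_s}\Pp_{<}(X)$ (so that $\|\Pp_{<}(X)\|$ factors out), the bound $\|(\Vv\Phi)^{n}|_{\Bb_{<}}\|\leq C\E^{-\gamma n}$ from \eqref{expdecvp}, and the telescoping $(j_1-1)+\sum_{i=1}^{s-1}(j_{i+1}-j_i-1)+(t-j_s)=t-s$ of the exponents. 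Combining this with $\exp\{-\tfrac{\lambda^2}{4}\sum_{p=1}^{s}(\mu_{j_p}-\nu_{j_p})^2\}\leq\E^{-s\Delta\lambda^2/4}$ and the counting ($\binom{t}{s}$ choices of the $j_i$'s and $\ell=|\spec(T)|(|\spec(T)|-1)$ choices for each mismatched pair), the $s$-th term of $\Rr_t$ is bounded in norm by $\binom{t}{s}\ell^s C^{s+1}\E^{-\gamma(t-s)}\E^{-s\Delta\lambda^2/4}\|\Pp_{<}(X)\|=C\,\E^{-\gamma t}\binom{t}{s}\kappa_\lambda^{s}\|\Pp_{<}(X)\|$, with $\kappa_\lambda=\ell\,C\,\E^{\gamma}\E^{-\Delta\lambda^2/4}$ as in that proof.

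The one genuinely new step is to sum from $s=2$ and extract two extra powers compared with Theorem~\ref{theo-bounding_off_diagonal_contributions} (a $t^{2}$ instead of $t$, and $\E^{-\Delta\lambda^{2}/2}$ instead of $\E^{-\Delta\lambda^{2}/4}$). For this I would use the elementary inequality $\binom{t}{s}=\binom{t}{2}\binom{t-2}{s-2}\tfrac{2}{s(s-1)}\leq\binom{t}{2}\binom{t-2}{s-2}$, valid for $s\geq2$, which gives
\[
\sum_{s=2}^{t}\binom{t}{s}\kappa_\lambda^{s}\;\leq\;\binom{t}{2}\kappa_\lambda^{2}\sum_{s=2}^{t}\binom{t-2}{s-2}\kappa_\lambda^{s-2}\;=\;\binom{t}{2}\kappa_\lambda^{2}(1+\kappa_\lambda)^{t-2}\;\leq\;\frac{t^{2}}{2}\,\kappa_\lambda^{2}(1+\kappa_\lambda)^{t},
\]
hence $\|\Rr_t(X)\|\leq\frac{C}{2}\,\kappa_\lambda^{2}\,t^{2}\,\E^{-\gamma t}(1+\kappa_\lambda)^{t}\,\|\Pp_{<}(X)\|$.

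The last step is identical to the end of the proof of Theorem~\ref{theo-bounding_off_diagonal_contributions}: for $|\lambda|>\lambda_0$ with $\lambda_0$ large enough that $\ln(1+\kappa_\lambda)\leq\gamma/2$ (the same threshold as there, since the condition does not involve $t$), one has $\E^{-\gamma t}(1+\kappa_\lambda)^{t}=\E^{-t(\gamma-\ln(1+\kappa_\lambda))}\leq\E^{-\gamma t/2}$ for all $t\in\mathbb N$, while $\kappa_\lambda^{2}=(\ell C)^{2}\E^{2\gamma}\E^{-\Delta\lambda^{2}/2}$. Setting $C_1=\tfrac{1}{2}(\ell C)^{2}\E^{2\gamma}C$ then yields $\|\Rr_t(X)\|\leq C_1\|\Pp_{<}(X)\|\,t^{2}\,\E^{-\gamma t/2}\,\E^{-\Delta\lambda^{2}/2}$, which is the claim (one may replace $\|\Pp_{<}(X)\|$ by $\|X\|$ if preferred, since $\|\Pp_{<}\|\leq1$). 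I do not anticipate any real difficulty beyond this combinatorial bookkeeping; the only thing to keep an eye on is that $\lambda_0$ can be chosen uniformly in $t$, which it can because the smallness condition on $\kappa_\lambda$ is $t$-independent.
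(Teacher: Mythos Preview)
Your proposal is correct and follows essentially the same approach as the paper: both restrict the sum in \eqref{bosum} to $s\geq 2$, use the combinatorial inequality $\binom{t}{s}\leq\binom{t}{2}\binom{t-2}{s-2}$ to extract the extra factors $t(t-1)\kappa_\lambda^2$, and then conclude via the same condition $\ln(1+\kappa_\lambda)\leq\gamma/2$. The paper's version is slightly terser (it writes the bound as $C\,\frac{\kappa_\lambda^2}{(1+\kappa_\lambda)^2}\,t(t-1)\,\E^{-t(\gamma-\ln(1+\kappa_\lambda))}$ and omits the explicit constant $C_1$), but the argument is the same.
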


\begin{proof}
	We can assume without loss of generally that $t \geq 2$. We simply have to consider (\ref{bosum}) with summation starting at $s=2$ instead of $s=1$. We thus get
	\begin{align}
	\| \Rr_t (X) \| 
	& \leq 
	C \E^{-\gamma t}\sum_{s=2}^t ( \kappa_\lambda)^s \begin{pmatrix}t\\s\end{pmatrix}  \|\Pp_<(X)\| 
	\leq 
	C \frac{\kappa^2_\lambda}{(1+\kappa_\lambda)^2} t(t-1) \E^{-t(\gamma -\ln(1+\kappa_\lambda))} \|\Pp_<(X)\| \;.
	\end{align}
	We conclude as in the proof of Theorem~\ref{theo-bounding_off_diagonal_contributions}. 
\end{proof}
\begin{rmk}
	Keeping all terms up to order $s$ in (\ref{dectvbnums}) only, we easily derive that the error term is of order { $\E^{-\Delta \lambda^2(s+1)/4}$}, multiplied by a factor decaying exponentially with time. The polynomial prefactors in $t$ are somehow immaterial since replacing $\gamma$ by $\gamma-\epsilon$, $\epsilon$ small enough, allows to dispense of them, at the cost of a larger constant $C$, see also (\ref{expdecvp}).
\end{rmk}

\section{Large Time Asymptotics }
\label{sec-convergence_large_time}

\subsection{Assumptions on the { coupling and fermion evolution operators.}}
\label{sec-asumption_V_and_T}

{ In this section we further specify the coupling operator $T\in {\cal B(F_-)}$ by assuming it is given by the second quantization $T=d\Gamma_-(\tau)$ of a self-adjoint operator $\tau$ on $\cal H$, where we assume hereafter that the single-particle Hilbert space has dimension $d \geq 2$. Explicitely, writing the spectral decomposition of $\tau\in {\cal B(H)}$ as
\be
\tau=\sum_{j=1}^d \varepsilon_j |f_j\rangle \langle f_j|
\ee
with eigenvalues $\varepsilon_j\in \mathbb R$ repeated according to multiplicity, and orthonormal eigenbasis $\{f_j\}_{j=1}^d$, we consider (see \eqref{eq-second_quantized_dGamma})
\be\label{secqtau}
T=\sum_{j=1}^d \varepsilon_j a(f_j)^*a(f_j).
\ee
According to the convention, we will write $a_j=a(f_j), a^*_j=a^*(f_j)$ and $n_j=a_j^\ast a_j$, $j=1,\ldots,d$. 

It follows from \eqref{eq-second_quantized_self_adjoint} that the eigenvectors of $T$ are of the form $\wedge^q f_{\jv}= f_{j_1} \wedge \cdots \wedge f_{j_q}$ for $q\in  \{0,\ldots, d\}$ and $\jv\in I_q$ an ordered multi-index, \ie $1\leq  j_1<j_2<\dots <j_q \leq d$, 
\be\label{evT}
 T \, (\wedge^q f_{\jv}) = \mu_{q,\jv}\, (\wedge^q f_{\jv})\;,\quad
\mu_{q,\jv} = \sum_{r=1}^q \varepsilon_{j_r}
\ee
with the  conventions $\wedge^0 f_{\jv}= \Omega$ and $ \mu_{0,\jv}=0$.
Since $\{ \wedge^q f_{\jv},\,    0\leq q\leq d,\,\jv\in I_q\}$ forms a basis of ${\cal B(F_-)}$, 
the spectrum of $T$ consists in $ \spec (T) = \{\mu_{q,\jv}, \,0\leq q\leq d,  \, \jv\in I_q\}$. 
We keep writing the spectral decomposition of $T\neq 0$ as
\be\label{stillspecdec}
T=\sum_{\mu \in \sigma(T)}\mu B^\mu,
\ee
where the eigenprojectors $B^\mu$ can be written explicitly in terms of products of the number operators $n_j$ and $\identity-n_{i}$. 
Note that $T$  leaves $\Hh^{\wedge n}$ invariant, thus the same is true for its spectral projectors $B_\mu$.

With $T$ of the form (\ref{secqtau}) we can complement Lemma \ref{structop} as follows:

\begin{lem}\label{structop_2}
	Let  $\Vv, \Bb^{\mu\nu}$, and $\Phi $ be the operators on the Hilbert space $\Bb(\Ff_{-})$ defined  by (\ref{defvcal}), (\ref{defbmunu}) and (\ref{eq-def_CPmapPhi}). Then,
	 the operators $\Vv,  \Phi$ and $ \Bb^{\mu\nu}$ leave
$ \Bb(\Hh^{\wedge n}, \Hh^{\wedge m})$ invariant for all $0\leq n, m \leq d$ and $\mu, \nu\in \setmu$.
\end{lem}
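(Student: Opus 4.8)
The plan is to reduce everything to the single-particle picture, where the claim becomes transparent from the grading of the fermionic Fock space. Recall that $\Ff_{-}=\bigoplus_{n=0}^d \Hh^{\wedge n}$, so $\Bb(\Ff_-)$ decomposes as $\bigoplus_{n,m} \Bb(\Hh^{\wedge n},\Hh^{\wedge m})$, and the statement is precisely that each of $\Vv$, $\Phi$, $\Bb^{\mu\nu}$ respects this direct-sum decomposition. Since $\Phi=\sum_\mu \Bb^{\mu\mu}$ is a particular linear combination of the $\Bb^{\mu\nu}$, it suffices to prove the invariance for $\Vv$ and for $\Bb^{\mu\nu}$.

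For $\Vv$: the free single-particle unitary $V$ on $\Hh$ satisfies $\Gamma_-(V)\,\Hh^{\wedge n}=\Hh^{\wedge n}$ for each $n$, directly from the definition \eqref{+eq=definition_second_quantizationGamma} of $\Gamma_-$ on wedge products (it sends $u_1\wedge\cdots\wedge u_n$ to $Vu_1\wedge\cdots\wedge Vu_n$). Hence if $X\in\Bb(\Hh^{\wedge n},\Hh^{\wedge m})$, then $\Vv(X)=\Gamma_-(V^{-1})\,X\,\Gamma_-(V)$ maps $\Hh^{\wedge n}$ into $\Hh^{\wedge m}$ and annihilates $\Hh^{\wedge n'}$ for $n'\neq n$, so $\Vv(X)\in\Bb(\Hh^{\wedge n},\Hh^{\wedge m})$. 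For $\Bb^{\mu\nu}(X)=B^\mu X B^\nu$: here I invoke the remark already made in the excerpt that, because $T=\D\Gamma_-(\tau)$ with eigenvectors $\wedge^q f_{\jv}$, each spectral projector $B^\mu$ is block-diagonal with respect to the particle grading, i.e. $B^\mu\,\Hh^{\wedge n}\subseteq\Hh^{\wedge n}$ (equivalently $[B^\mu,N]=0$, where $N$ is the number operator). Consequently $B^\mu X B^\nu$ sends $\Hh^{\wedge n}$ into $\Hh^{\wedge m}$ and kills the other sectors whenever $X\in\Bb(\Hh^{\wedge n},\Hh^{\wedge m})$, which is exactly the asserted invariance.

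Putting these two facts together — $\Gamma_-(V^{\pm1})$ and each $B^\mu$ commute with the orthogonal projections onto the $\Hh^{\wedge n}$ — gives the invariance of $\Vv$, $\Bb^{\mu\nu}$, and (by linearity) $\Phi$ on $\Bb(\Hh^{\wedge n},\Hh^{\wedge m})$ for all $0\leq n,m\leq d$ and all $\mu,\nu\in\setmu$. There is essentially no obstacle: the only point requiring a word of justification is the particle-grading invariance of the $B^\mu$, and that is immediate from the explicit eigenvector description \eqref{evT} (each eigenspace of $T$ is spanned by wedge vectors of a fixed length $q$, hence lies in a single $\Hh^{\wedge q}$, so each $B^\mu$ is a sum of rank-one projectors onto vectors of fixed particle number). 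I would simply state this and conclude.
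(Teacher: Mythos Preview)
Your proof is correct and follows essentially the same route as the paper: both reduce the claim to the fact that $\Gamma_-(V)$ and each $B^\mu$ leave every $\Hh^{\wedge n}$ invariant, then conclude by sandwiching. One small slip in your final parenthetical: an eigenspace $\range B^\mu$ need \emph{not} lie in a single $\Hh^{\wedge q}$ (e.g.\ $\mu_{q,\jv}=\mu_{q',\jv'}$ can occur with $q\neq q'$), but your conclusion still holds because $B^\mu=\sum_{(q,\jv):\mu_{q,\jv}=\mu}|\wedge^q f_{\jv}\rangle\langle\wedge^q f_{\jv}|$ is a sum of rank-one projectors each supported in a single particle sector, hence block-diagonal.
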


\begin{proof}  By construction, $\Gamma_{-} ( V)$ and $B^\mu$ preserve the number of particles,
\ie they leave  $\Hh^{\wedge n}$ invariant for all  $1\leq n\leq d$.
More precisely, recall that  $\{ \wedge^n f_\iv \}_{\iv \in I_n}$
is an orthonormal basis of $\Hh^{\wedge n}$. 
Thus  $\Bb(\Hh^{\wedge n}, \Hh^{\wedge m})$ is spanned by the rank one operators  
\begin{align}\label{raon}
	|\wedge^m f_{\jv}\rangle \langle \wedge^n f_{\iv}| \quad, \quad \iv \in I_n, \jv \in I_m\;.
\end{align}
Since 
\be
\Gamma_- (V^{-1})  \wedge^n f_{\iv} = V^{-1}f_{i_1}\wedge \dots \wedge V^{-1}f_{i_{n}} \ \in \ \Hh^{\wedge n}\quad, \quad \iv \in I_n\;,
\ee
see \eqref{+eq=definition_second_quantizationGamma}, one deduces from \eqref{defvcal} that 
$\Vv\Bb(\Hh^{\wedge n}, \Hh^{\wedge m})\subset \Bb(\Hh^{\wedge n}, \Hh^{\wedge m})$.
Similarly, since the projectors $B^\mu$ leave $\Hh^{\wedge n}$ invariant, it follows that
$\Bb(\Hh^{\wedge n}, \Hh^{\wedge m})$ is invariant under $\Bb^{\mu\nu}$ for any $\mu,\nu \in \setmu$. The same is true for $\Phi= \sum_{\mu\in \setmu} \Bb^{\mu\mu}$.
\end{proof}

}

We now look for the eigenvalues and eigenvectors of $\Vv$.
Recall that $\Vv(\cdot)=\Gamma(V^{-1})\cdot \Gamma(V)$, where $V\in \Bb(\Hh)$ is unitary. Let $\{\psi_i\}_{i=1}^d$ be an orthonormal basis of $\Hh$ of eigenvectors of $V$, with corresponding eigenvalues $\E^{i\alpha_i}$, $1\leq i \leq d$.
Then, using
a similar shorthand $ \wedge^n\psi_\iv = \psi_{i_1}\wedge \dots\wedge\psi_{i_n}$
as in (\ref{shortnotwed}) and  the convention  $\wedge^0\psi_\iv=\Omega$,
$\{ \wedge^n\psi_\iv \}_{\iv \in I_n}$  
is an orthonormal basis of $ \Hh^{\wedge n}$ for any $0\leq n\leq d$.
{ Moreover, by \eqref{+eq=definition_second_quantizationGamma},} 
\begin{align}
\Gamma_{-} (V) (\wedge^n \psi_{\iv}) =\E^{i(\alpha_{i_1}+\dots +\alpha_{i_n})} \wedge^n \psi_{\iv}\quad , \quad \iv \in I_n ,
\end{align}
{ (recall that $\Gamma(V)\Omega=\Omega$).}
Hence we have for all
$0\leq n,m\leq d$,
\begin{align}
  \Vv\big(|\wedge^n\psi_\jv\rangle\langle \wedge^m\psi_\iv|\big)=\E^{-i\big(\sum_{r=1}^n\alpha_{j_r}-\sum_{s=1}^m\alpha_{i_s}\big)}
  |\wedge^n\psi_\jv \rangle\langle \wedge^m\psi_\iv | 
\end{align}
and $\Bb(\Ff_-)$ is spanned by the eigenvectors of $\Vv$ given by the rank-one operators
$|\wedge^n\psi_\jv\rangle\langle \wedge^m\psi_\iv|$.
Note also that the eigenvalue $1\in \spec(\Vv)$ is at least $\dim (\Ff_-)$-fold degenerate, since 
\be
|\wedge^n\psi_\iv\rangle\langle \wedge^n\psi_\iv|
\ \subset \ \ker (\Vv-\Identity_{\Bb(\Ff_-)})
\ee
for all $\iv \in I_n$ and  $0\leq n\leq d$.
\medskip

We work under the following spectral non-degeneracy assumption on $\Vv$: \\

\noindent ({\bf SND}) {\it 
$\Vv$ has simple eigenvalues save for the eigenvalue $1$ which is $\dim (\Ff_{-})$-fold degenerate, i.e., 	
for all $0\leq n, m ,n',m'\leq d$, any  $(\iv, \jv)\in I_m \times I_n$ such that $\iv \neq \jv$ and any $(\iv', \jv') \in I_{m'} \times I_{n'}$, one has
\begin{align}
\Big( \sum_{r=1}^n\alpha_{j_r}-\sum_{s=1}^m\alpha_{i_s} \Big) - 
\Big( \sum_{r=1}^{n'}\alpha_{j_r'}-\sum_{s=1}^{m'}\alpha_{i_s'} \Big) 
\not\in 2\pi \integer \  \ \mbox{ when }\  (\iv, \jv) \not= (\iv',\jv')\;,
\end{align}
with the convention  $\sum_{r=1}^0\alpha_{j_r}=0$ for $\jv \in I_0$}.\\	

Consequently, if $\E^{\I\theta}\in \spec (\Vv)\setminus\{1\}$, this eigenvalue is simple and its eigenvectors are proportional to
\be \label{eq-eigenvectors_Vv}
|\wedge^n\psi_\jv \rangle\langle \wedge^{m}\psi_\iv |
\ee
for some $0\leq n,m \leq d$ and $\jv \in I_m$, $\iv \in I_n$
determined by $e^{\I\theta}$.
Now by Lemma~\ref{kerzero}, $e^{\I \theta} \in  {\mathbb S}^1 \cap \spec ( \Vv \Phi)$ if and only if  
$e^{\I \theta} \in  \spec ( \Vv )$ and the corresponding  eigenvectors are in the range of the projector $\Phi$. 
{ Moreover, the eigenvectors of $\Vv \Phi$ are also eigenvectors of $\Vv$ with the same eigenvalue $\E^{\I \theta}$, thus 
	they have the form \eqref{eq-eigenvectors_Vv} under assumption ({\bf SND}) when $\E^{\I \theta }\not= 1$.}
The following lemma gives a necessary and sufficient condition for the eigenvectors
\eqref{eq-eigenvectors_Vv} of $\Vv$ to be in the range of $\Phi$.

\begin{lem} \label{lem-NSC_range_of_Phi} Let $\phi_1, \phi_2 \in \Ff_-$, $\phi_1, \phi_2 \not= 0$.
  Then
  \begin{align} \label{eq-lemma_6.4}
|\phi_1\rangle\langle\phi_2|& =\Phi\big(|\phi_1\rangle\langle\phi_2|\big)
  \end{align}
 if and only if there exists a unique { $\tilde\mu\in \setmu$} such that 
\begin{align}  \label{eq-lemma_6.4_bis}
    B^{\tilde \mu} \phi_1=\phi_1 \ \ \mbox{and} \ \ B^{\tilde \mu} \phi_2=\phi_2.
\end{align}
\end{lem}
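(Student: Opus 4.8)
The plan is to unravel the definition $\Phi(\cdot) = \sum_{\mu \in \setmu} B^\mu \cdot B^\mu$ and exploit that the $B^\mu$ are the orthogonal spectral projectors of the self-adjoint operator $T$, so that $\sum_\mu B^\mu = \identity$ and $B^\mu B^\nu = \delta_{\mu\nu} B^\mu$.

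\medskip

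\noindent\textbf{Sufficiency.} Suppose $B^{\tilde\mu}\phi_1 = \phi_1$ and $B^{\tilde\mu}\phi_2 = \phi_2$ for some $\tilde\mu \in \setmu$. Then for every $\mu \in \setmu$ we have $B^\mu \phi_1 = B^\mu B^{\tilde\mu}\phi_1 = \delta_{\mu,\tilde\mu}\phi_1$ and likewise $B^\mu\phi_2 = \delta_{\mu,\tilde\mu}\phi_2$. Hence $B^\mu |\phi_1\rangle\langle\phi_2| B^\mu = \delta_{\mu,\tilde\mu}|\phi_1\rangle\langle\phi_2|$ for each $\mu$, and summing over $\mu$ gives $\Phi(|\phi_1\rangle\langle\phi_2|) = |\phi_1\rangle\langle\phi_2|$. (The uniqueness of such a $\tilde\mu$ is automatic: if $B^\mu\phi_1 = \phi_1$ and $B^{\mu'}\phi_1 = \phi_1$ with $\mu \neq \mu'$, then $\phi_1 = B^\mu B^{\mu'}\phi_1 = 0$, contradicting $\phi_1 \neq 0$.)

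\medskip

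\noindent\textbf{Necessity.} Assume $\Phi(|\phi_1\rangle\langle\phi_2|) = |\phi_1\rangle\langle\phi_2|$. First I would decompose $\phi_i = \sum_\mu \phi_i^\mu$ with $\phi_i^\mu := B^\mu \phi_i$, using $\sum_\mu B^\mu = \identity$. Then $B^\mu|\phi_1\rangle\langle\phi_2| B^\mu = |\phi_1^\mu\rangle\langle\phi_2^\mu|$, so the fixed-point equation reads
\begin{equation}
\sum_{\mu\in\setmu} |\phi_1^\mu\rangle\langle\phi_2^\mu| = \sum_{\mu,\nu\in\setmu}|\phi_1^\mu\rangle\langle\phi_2^\nu|.
\end{equation}
Equivalently, $\sum_{\mu\neq\nu}|\phi_1^\mu\rangle\langle\phi_2^\nu| = 0$. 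I would now extract information by testing against suitable vectors: apply this operator to a vector $\psi$ in the range of $B^\nu$ and take the inner product with a vector in the range of $B^\mu$, $\mu \neq \nu$, to conclude $\langle\phi_2^\nu,\psi\rangle\,\phi_1^\mu = 0$ for all relevant $\psi$. Since this must hold for all $\mu \neq \nu$, either $\phi_2^\nu = 0$ or $\phi_1^\mu = 0$ for every $\mu \neq \nu$. Thus: if there are two distinct indices $\nu, \nu'$ with $\phi_2^\nu \neq 0$ and $\phi_2^{\nu'} \neq 0$, then $\phi_1^\mu = 0$ for all $\mu \neq \nu$ and all $\mu \neq \nu'$, i.e.\ $\phi_1 = 0$, a contradiction. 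Hence $\phi_2$ is supported on a single spectral subspace, say $\phi_2 = \phi_2^{\tilde\mu}$, i.e.\ $B^{\tilde\mu}\phi_2 = \phi_2$ with $\phi_2^\nu = 0$ for $\nu \neq \tilde\mu$. The same argument with the roles interchanged (now using that $\langle\phi_2^\nu,\psi\rangle \neq 0$ for some $\psi$ only when $\nu = \tilde\mu$) forces $\phi_1^\mu = 0$ for all $\mu \neq \tilde\mu$, so $B^{\tilde\mu}\phi_1 = \phi_1$. This gives \eqref{eq-lemma_6.4_bis}, and uniqueness of $\tilde\mu$ follows as in the sufficiency part since $\phi_1 \neq 0$.

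\medskip

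The argument is essentially elementary once one passes to the components $\phi_i^\mu$; the only point requiring a little care is the bookkeeping in the necessity direction, where one must argue that the vanishing of all off-diagonal rank-one pieces $|\phi_1^\mu\rangle\langle\phi_2^\nu|$, $\mu\neq\nu$, together with $\phi_1,\phi_2\neq 0$, forces both vectors into the \emph{same} spectral subspace. I do not anticipate a serious obstacle; the delicate step is simply ensuring the two-sided extraction (against range of $B^\mu$ on the left and range of $B^\nu$ on the right) is carried out symmetrically for both factors.
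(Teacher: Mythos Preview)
Your proof is correct and follows essentially the same route as the paper's: both reduce the fixed-point condition to the statement that $B^\mu|\phi_1\rangle\langle\phi_2|B^\nu=0$ for every $\mu\neq\nu$, \ie $B^\mu\phi_1=0$ or $B^\nu\phi_2=0$, and then use $\phi_1,\phi_2\neq 0$ to force both vectors into a single common spectral subspace. The only cosmetic difference is that the paper first fixes $\tilde\mu$ with $B^{\tilde\mu}\phi_1\neq 0$ and argues directly, whereas you argue by contradiction on $\phi_2$ having support in two subspaces; the logic is the same.
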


\begin{proof}
Assume that \eqref{eq-lemma_6.4} is true.  
Then, thanks to { the} definition \eqref{eq-def_CPmapPhi} of $\Phi$,
for all $(\mu, \nu) \in \setmu^2$,
\be \label{eq-proof_lemma6.4}
\mu \not= \nu \quad \Rightarrow \quad 
B^\mu  |\phi_1\rangle\langle \phi_2| B^\nu=0\; , \ \ \mbox{ i.e. } \ \ B^\mu\phi_1=0  \mbox{ or }  B^\nu\phi_2=0\;.
\ee
{ Since $(\sum_{\mu \in \setmu}B^\mu ) \phi_1 = \phi_1\not= 0$,
there exists some  $\tilde{\mu}\in \setmu$ such that  $B^{\tilde{\mu}} \phi_1 \not=0$.
 Then by taking  $(\tilde\mu,\nu)$ with $\nu\neq \tilde\mu$
  in \eqref{eq-proof_lemma6.4} one gets
  $B^\nu \phi_2 = 0$, implying that $B^{\tilde\mu} \phi_2 = \phi_2 \not= 0$. Taking now  $(\mu ,\tilde\mu)$ with $\mu\neq\tilde\mu$ in \eqref{eq-proof_lemma6.4}, one deduces that 
  $B^\mu \phi_1=0$, \ie $B^{\tilde\mu} \phi_1 = \phi_1$. Thus \eqref{eq-lemma_6.4_bis} is true.
The converse implication follows directly from $B^\mu B^\nu=\delta_{\mu \nu} B^\mu$ satisfied by the spectral projectors. }
\end{proof}

Relying on this lemma, we obtain
\begin{cor} \label{prop_eigenvectors_of_Vv_Phi_on_unit_circle}
    Assume that {\rm ({\bf SND})} holds. Then
$\E^{i\theta}\in \spec (\Vv)\cap  \spec (\Vv\Phi)\setminus\{1\}$ iff the corres\-ponding eigenvector
$|\wedge^n\psi_\jv \rangle\langle \wedge^{m}\psi_\iv |\in \Bb(\Ff_-)$ is such that
\be
\wedge^n\psi_\jv \in  { \range B^\mu }
\, , \quad
\wedge^{m}\psi_\iv \in { \range B^\mu }
\ee
for some unique $\mu\in \setmu$.
\end{cor}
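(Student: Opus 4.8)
The plan is to combine the characterization of the unit-circle eigenvectors of $\Vv\Phi$ with Lemma~\ref{lem-NSC_range_of_Phi}. The starting point is the observation already recorded just before the statement: under {\rm(\textbf{SND})}, if $\E^{\I\theta}\in\spec(\Vv\Phi)\setminus\{1\}$ then by Lemma~\ref{kerzero} we have $\E^{\I\theta}\in\spec(\Vv)$ with the \emph{same} eigenvector, so the corresponding eigenvector is necessarily one of the rank-one operators $|\wedge^n\psi_{\jv}\rangle\langle\wedge^m\psi_{\iv}|$, and moreover that eigenvector must lie in $\range\Phi$, i.e. it must be a fixed point of $\Phi$. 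Conversely, if $|\wedge^n\psi_{\jv}\rangle\langle\wedge^m\psi_{\iv}|$ is such an eigenvector of $\Vv$ (for some $\E^{\I\theta}\neq 1$) and it is fixed by $\Phi$, then $\Vv\Phi$ applied to it equals $\Vv$ applied to it, which is $\E^{\I\theta}$ times it, so $\E^{\I\theta}\in\spec(\Vv\Phi)$. Hence the whole statement reduces to translating ``$|\wedge^n\psi_{\jv}\rangle\langle\wedge^m\psi_{\iv}|\in\range\Phi$'' into the claimed range conditions on $\wedge^n\psi_{\jv}$ and $\wedge^m\psi_{\iv}$.

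This translation is exactly Lemma~\ref{lem-NSC_range_of_Phi} applied with $\phi_1=\wedge^n\psi_{\jv}$ and $\phi_2=\wedge^m\psi_{\iv}$, both of which are nonzero unit vectors in $\Ff_-$. That lemma says $|\phi_1\rangle\langle\phi_2|=\Phi(|\phi_1\rangle\langle\phi_2|)$ iff there is a unique $\tilde\mu\in\setmu$ with $B^{\tilde\mu}\phi_1=\phi_1$ and $B^{\tilde\mu}\phi_2=\phi_2$, i.e. iff both $\wedge^n\psi_{\jv}$ and $\wedge^m\psi_{\iv}$ lie in $\range B^{\tilde\mu}$ for a single $\mu$. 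Combining the two bullet points of the previous paragraph with this equivalence gives precisely the stated ``iff''. So the proof is essentially a two-line concatenation: $\E^{\I\theta}\in\spec(\Vv)\cap\spec(\Vv\Phi)\setminus\{1\}$ $\iff$ (by {\rm(\textbf{SND})} and Lemma~\ref{kerzero}) the eigenvector has the form $|\wedge^n\psi_{\jv}\rangle\langle\wedge^m\psi_{\iv}|$ and lies in $\range\Phi$ $\iff$ (by Lemma~\ref{lem-NSC_range_of_Phi}) $\wedge^n\psi_{\jv},\wedge^m\psi_{\iv}\in\range B^\mu$ for a unique $\mu$.

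The only points that need a word of care, rather than a genuine obstacle, are the following. First, one should note that $\iv\neq\jv$ is automatically forced in the nontrivial case: if $\iv=\jv$ and $n=m$ the eigenvalue is $1$, which is excluded, so under {\rm(\textbf{SND})} any eigenvalue $\E^{\I\theta}\neq 1$ is simple and its eigenvector is of the displayed rank-one form with $(\iv,\jv)$ genuinely off-diagonal in the sense of {\rm(\textbf{SND})}; this is what makes Lemma~\ref{lem-NSC_range_of_Phi} applicable without ambiguity. Second, one must confirm that $\Phi$ commutes appropriately so that a fixed point of $\Phi$ that is an eigenvector of $\Vv$ is indeed an eigenvector of $\Vv\Phi$ with the same eigenvalue — this is immediate since $\Phi\phi=\phi$ gives $\Vv\Phi\phi=\Vv\phi=\E^{\I\theta}\phi$. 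I expect no real difficulty here; the substance of the argument has already been carried out in Lemmas~\ref{kerzero} and~\ref{lem-NSC_range_of_Phi} and in the discussion of the eigenstructure of $\Vv$, and the corollary is a packaging of those results.
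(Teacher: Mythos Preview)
Your proposal is correct and follows exactly the approach the paper intends: the corollary is obtained by combining the observation (from Lemma~\ref{kerzero} and ({\bf SND})) that unit-circle eigenvectors of $\Vv\Phi$ other than for the eigenvalue $1$ are the simple rank-one eigenvectors $|\wedge^n\psi_\jv\rangle\langle\wedge^m\psi_\iv|$ of $\Vv$ lying in $\range\Phi$, with Lemma~\ref{lem-NSC_range_of_Phi} applied to $\phi_1=\wedge^n\psi_\jv$, $\phi_2=\wedge^m\psi_\iv$. The paper itself gives no proof beyond ``Relying on this lemma, we obtain'', so your write-up is a faithful unpacking of that sentence.
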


{  According to~\eqref{evT}, the range of the spectral projectors $B^\mu$, $\mu\in\setmu$, is spanned by the orthonormal eigenvectors $\wedge^q f_{\jv}$ with eigenvalue $\mu_{q,\jv}=\mu$. More explicitely, if $\mu \not= 0$ then
\be\label{defiqmu}
\range B^\mu = 
 \text{span} \{\wedge^q f_{\jv}\,;  \  q=1,\ldots, d, \ \jv\in I_q^\mu \}
 \quad  , \quad   I_q^\mu = \big\{ \jv \in I_q \ ; \ \sum_{r=1}^{q} \varepsilon_{j_r} = \mu\big\}
\ee
and for $\mu=0$, $ \text{ran } B^0$ is spanned by $\Omega$ and the vectors $\wedge^q f_{\jv}$ with $q \geq 1$ and $\sum_{r=1}^q \varepsilon_{j_r} = 0$.
We will assume that the coupling operator $T$ satisfies a mild non-degeneracy assumption, stating that its restrictions to $\Hh^{\wedge n}$, for $0<n<d$, are not multiple of the identity:
\\

\noindent ({\bf MND}) {\it For all $0<n<d$, the restriction $T|_{\Hh^{\wedge n}}\neq c \,\identity|_{\Hh^{\wedge n}}$ for any $c\in \real$.}
}

\vspace{2mm}

Recalling that $T = \D \Gamma_{-} ( \tau)$, ({\bf MND}) is equivalent to 
$\tau \neq c' \,\identity_\Hh$ for any $c' \in \real$.
\medskip

\begin{rmk}\label{specvec}
  The vectors $\Omega$ and $F=f_1\wedge f_2\wedge  \dots \wedge f_{d}$, which span $\Hh^{\wedge 0}$ and $\Hh^{\wedge d}$ respectively,  are eigenvectors of $\Gamma_-(V)$ with respective eigenvalues $1$ and $\det V$ and of { $T$ with respective eigenvalues $0$ and $\mu_F =\sum_{i=1}^d \varepsilon_i$. If $\mu_F =0$, both these vectors belong to 
$\range ( B^0)$, 
recall (\ref{stillspecdec}), in which case} by \eqref{defvcal} and Lemma~\ref{lem-NSC_range_of_Phi},
\be
\Vv\Phi (| \Omega\rangle\langle F |)=( \det V) | \Omega\rangle\langle F |\quad , \quad \Vv\Phi (| F\rangle\langle \Omega |)=(\det V^\ast) | F\rangle\langle \Omega | \;
\ee
and $\det V$ and $\det V^\ast$ belong to $\spec(\Vv\Phi) \cap  {\mathbb S}^1$. Moreover, since $\det V= \E^{i\sum_{j=1}^n \alpha_j}$,
 these eigenvalues are simple and different from $1$ by the spectral non-degeneracy assumption  {\rm ({\bf SND})}.
 
 { On the other hand, if $\mu_F \not=0$ then $\det V, \det V^\ast \notin \spec(\Vv \Phi)$ since 
 	$\Phi ( | \Omega\rangle\langle F | ) =\Phi ( | F\rangle\langle \Omega | ) = 0$.} 
\end{rmk}

\medskip

We now show that under some additional  assumption on the { diagonal elements of the  spectral projectors of $T$ in the eigenbasis of $V$},
$\det V$ and $\det V^\ast$ are the only potential eigenvalues of $\Vv\Phi$ on the unit circle, together with $1$.\\

{ 
\noindent
({\bf Diag}) 
{\it For all $1\leq n\leq d-1$, all $\kv\in I_n$ and all $\mu \in \spec (T|_{\Hh^{\wedge n}})$, one has $\bra{\wedge^n \psi_{\kv}} B^\mu \wedge^n \psi_{\kv} \rangle \not= 0$}.\\

Note that the condition $\langle \wedge^n \psi_{\jv} | B^\mu \wedge^n \psi_{\jv} \rangle \not= 0$ 
in ({\bf Diag}) is equivalent to $\wedge^n \psi_{\jv} \notin \ker B^\mu$. In view of \eqref{defiqmu}, ({\bf Diag}) holds if and only if for all $0 \leq n \leq d-1$ and $\mu \in  \spec (T|_{\Hh^{\wedge n}})$, there is some $\jv \in I_n^\mu$ such that $\braket{\wedge^n \psi_\kv}{\wedge^n f_\jv}\not=0$ for all $\kv \in I_n$.
}


We provide an argument in Section \ref{legec} below that supports the genericity of that condition for random unitary matrices. 
{ We have}

\begin{prop} \label{spectrum_of_Vv_Phi_on_unit_circle}
 { Assume {\rm ({\bf SND})}, {\rm ({\bf MND})} and {\rm ({\bf Diag})}.Then, 
\be \spec (\Vv\Phi)\cap \mathbb S^1 =\left\{
\begin{matrix}
	\{1, \det V, \det V^\ast\} & \text{if } F\in \ker T, \\[2mm]
	\{1\} \phantom{xxxxxxxxxx}& \text{otherwise.}\ \ \
\end{matrix}
\right.
\ee
}
\end{prop}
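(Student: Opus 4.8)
The plan is to characterize $\spec(\Vv\Phi) \cap \mathbb{S}^1$ by combining the already-established structure of eigenvectors of $\Vv$ with the range condition coming from $\Phi$. By Corollary~\ref{prop_eigenvectors_of_Vv_Phi_on_unit_circle}, any eigenvalue $\E^{\I\theta} \neq 1$ of $\Vv\Phi$ on the unit circle must arise from an eigenvector $|\wedge^n \psi_{\jv}\rangle\langle \wedge^m \psi_{\iv}|$ of $\Vv$ (with $(\iv,\jv)$ a pair of multi-indices, $\iv \neq \jv$, forced by {\rm ({\bf SND})}) for which there exists a single $\mu \in \setmu$ with $\wedge^n \psi_{\jv} \in \range B^\mu$ and $\wedge^m \psi_{\iv} \in \range B^\mu$. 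So the whole proof reduces to: \emph{for which pairs $(n, \jv)$ and $(m, \iv)$ with $\iv \neq \jv$ can both $\wedge^n \psi_{\jv}$ and $\wedge^m \psi_{\iv}$ lie in the range of a common spectral projector $B^\mu$ of $T$?} I would first observe that since $T$ preserves particle number and $B^\mu$ are its spectral projectors, $\range B^\mu \subset \bigoplus_q \Hh^{\wedge q}$ decomposes as $\bigoplus_q (\range B^\mu \cap \Hh^{\wedge q})$; hence if $\wedge^n \psi_{\jv} \in \range B^\mu$ and $\wedge^m \psi_{\iv} \in \range B^\mu$, then $\wedge^n \psi_{\jv} \in \range (B^\mu|_{\Hh^{\wedge n}})$ and $\wedge^m \psi_{\iv} \in \range (B^\mu|_{\Hh^{\wedge m}})$.

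The key step is to show that condition {\rm ({\bf Diag})} forces $\range (B^\mu|_{\Hh^{\wedge n}})$ to be \emph{all of} $\Hh^{\wedge n}$ for $0 < n < d$ — equivalently, that $T|_{\Hh^{\wedge n}}$ has a single eigenvalue on that sector. Indeed, suppose $0 < n < d$ and $\mu \in \spec(T|_{\Hh^{\wedge n}})$. Take $\kv \in I_n$ arbitrary. By {\rm ({\bf Diag})}, $\langle \wedge^n \psi_{\kv} | B^\mu \wedge^n \psi_{\kv}\rangle \neq 0$, so $B^\mu \wedge^n \psi_{\kv} \neq 0$ for every $\kv \in I_n$. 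Since $B^\mu$ is an orthogonal projector, this means no nonzero vector of the orthonormal basis $\{\wedge^n \psi_{\kv}\}_{\kv \in I_n}$ lies in $\ker B^\mu = \range(\identity - B^\mu) \cap \Hh^{\wedge n}$. But here I need to be a bit more careful: the claim I want is that $\identity - B^\mu$ restricted to $\Hh^{\wedge n}$ is zero. This does \emph{not} follow merely from each basis vector having nonzero component in $\range B^\mu$; rather I should argue as follows. If $\mu$ and $\mu'$ are two distinct eigenvalues of $T|_{\Hh^{\wedge n}}$, then $\range(B^{\mu'}|_{\Hh^{\wedge n}}) = \ker((T-\mu)|_{\Hh^{\wedge n}})^\perp \cap \ldots$ — actually the cleaner route: pick a nonzero $v \in \range(B^{\mu'}|_{\Hh^{\wedge n}})$ with $\mu' \neq \mu$; then $v \in \ker B^\mu$. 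Expanding $v = \sum_\kv c_\kv \wedge^n \psi_\kv$, applying $B^\mu$ and using $B^\mu v = 0$ gives a linear relation; combined with {\rm ({\bf Diag})} applied across all $\kv$, plus the fact (from {\rm ({\bf SND})} or the structure of the $\psi$-basis vs. $f$-basis) that the change-of-basis matrix between $\{\wedge^n \psi_\kv\}$ and $\{\wedge^n f_\jv\}$ is "generic enough," one concludes $v = 0$, contradiction. Thus $T|_{\Hh^{\wedge n}}$ is a multiple of the identity for $0 < n < d$, which directly contradicts {\rm ({\bf MND})}. Therefore \emph{no} eigenvector of the form $|\wedge^n \psi_\jv\rangle\langle\wedge^m\psi_\iv|$ with $0 < n < d$ or $0 < m < d$ can be in $\range \Phi$, unless the argument forces $n, m \in \{0, d\}$.

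It remains to treat the boundary sectors $n, m \in \{0, d\}$. Here $\Hh^{\wedge 0} = \complex\Omega$ and $\Hh^{\wedge d} = \complex F$ are one-dimensional, spanned respectively by $\Omega$ (eigenvalue $0$ of $T$) and $F = f_1 \wedge \cdots \wedge f_d$ (eigenvalue $\mu_F = \sum_i \varepsilon_i$). The only candidate off-diagonal eigenvectors $|\wedge^n\psi_\jv\rangle\langle\wedge^m\psi_\iv|$ with $\iv \neq \jv$ and $n, m \in \{0, d\}$ are $|\Omega\rangle\langle F|$ and $|F\rangle\langle\Omega|$ (since $\psi$-wedges over the full or empty index set coincide up to phase with $f$-wedges: $\wedge^d \psi = (\det V) \wedge^d f$ up to normalization, and $\wedge^0 \psi = \Omega$). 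By Lemma~\ref{lem-NSC_range_of_Phi}, $|\Omega\rangle\langle F| = \Phi(|\Omega\rangle\langle F|)$ iff there is a common $\tilde\mu$ with $B^{\tilde\mu}\Omega = \Omega$ and $B^{\tilde\mu}F = F$, i.e. iff $0 = \mu_F$, i.e. iff $F \in \ker T$. In that case, as recorded in Remark~\ref{specvec}, $\Vv\Phi(|\Omega\rangle\langle F|) = (\det V)|\Omega\rangle\langle F|$ and $\Vv\Phi(|F\rangle\langle\Omega|) = (\det V^*)|F\rangle\langle\Omega|$, so $\det V, \det V^* \in \spec(\Vv\Phi)\cap\mathbb{S}^1$, and by {\rm ({\bf SND})} these differ from $1$. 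Conversely if $\mu_F \neq 0$, no such common projector exists and these are excluded. Since $1 \in \spec(\Vv\Phi)\cap\mathbb{S}^1$ always (as $\identity$ is fixed by both $\Vv$ and $\Phi$), assembling the three cases yields exactly the claimed dichotomy. The main obstacle I anticipate is the middle step — rigorously deducing from {\rm ({\bf Diag})} that $B^\mu|_{\Hh^{\wedge n}}$ is the full identity (not just that every basis vector has nonzero overlap with its range), which requires carefully exploiting the interplay between the $\psi$-basis diagonal condition and the {\rm ({\bf MND})}/{\rm ({\bf SND})} hypotheses; the boundary-sector analysis and the reassembly are then routine given Lemma~\ref{lem-NSC_range_of_Phi} and Remark~\ref{specvec}.
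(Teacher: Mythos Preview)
Your setup via Corollary~\ref{prop_eigenvectors_of_Vv_Phi_on_unit_circle} and your treatment of the boundary sectors $n,m\in\{0,d\}$ are correct and match the paper. The gap is in your ``key step.''

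You propose to show that ({\bf Diag}) forces $\range(B^\mu|_{\Hh^{\wedge n}})=\Hh^{\wedge n}$ for every $\mu\in\spec(T|_{\Hh^{\wedge n}})$, i.e.\ that $T|_{\Hh^{\wedge n}}$ is a scalar, and then invoke ({\bf MND}) for the contradiction. But this intermediate claim is simply false: ({\bf Diag}) only says each basis vector $\wedge^n\psi_\kv$ has nonzero \emph{overlap} with every eigenspace of $T|_{\Hh^{\wedge n}}$, which is entirely consistent with $T|_{\Hh^{\wedge n}}$ having several distinct eigenvalues (the example $T_{\rm hop}$ in Section~\ref{sec-results_on_specific_model} satisfies ({\bf Diag}) generically and has three eigenvalues on each intermediate sector). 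You yourself flag the hole --- ``this does not follow merely from each basis vector having nonzero component in $\range B^\mu$'' --- and your patch via a ``generic enough'' change-of-basis matrix cannot work, because the conclusion you are aiming for is not true.

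The paper's argument avoids this detour entirely and is much shorter. Once Corollary~\ref{prop_eigenvectors_of_Vv_Phi_on_unit_circle} hands you a \emph{specific} vector $\wedge^n\psi_\kv\in\range B^\mu$ with $0<n<d$, use ({\bf MND}) to choose some $\nu\neq\mu$ in $\spec(T|_{\Hh^{\wedge n}})$; then
\[
B^\nu(\wedge^n\psi_\kv)=B^\nu B^\mu(\wedge^n\psi_\kv)=0,
\]
so $\langle\wedge^n\psi_\kv\,|\,B^\nu\,\wedge^n\psi_\kv\rangle=0$, contradicting ({\bf Diag}) for that particular $\kv$ and $\nu$. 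No global statement about $B^\mu$ is needed --- the contradiction is local to the single eigenvector you assumed lies in $\range B^\mu$. In short, you should apply ({\bf Diag}) at the index $\nu$ \emph{different} from $\mu$, not at $\mu$ itself.
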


\begin{proof}
{ In the first case,}  by Remark \ref{specvec}, we need to show that any $\E^{\I\theta}\in\spec(\Vv)\setminus\{1,\det V, \det V^\ast \}$ does not belong to
  $\sigma(\Vv\Phi)$.
  Thanks to ({\bf SND}), these eigenvalues  $\E^{\I\theta}$ are simple
  and are associated to eigenvectors of the form $|\wedge^n\psi_\kv \rangle\langle \wedge^{m}\psi_\lv |$ { for some} $(n,m)\not \in  \{(0,d), (d,0), (0,0), (d,d)\}$
  and $\kv\in I_n$, $\lv \in I_m$.
   Let us assume that  $\E^{\I\theta}\in\spec(\Vv \Phi)$. Then, by Corollary~\ref{prop_eigenvectors_of_Vv_Phi_on_unit_circle},
  { $\wedge^n \psi_\kv \in \range B^\mu$ and $ \wedge^m \psi_\lv \in \range B^\mu$}
  for some $\mu \in \sigma (T)$.
{ Given the above condition on $(n,m)$, one has either $0<  n <d$ or $0< m < d$. Let us first assume $0<  n <d$. Since $\wedge^n \psi_\kv = B^\mu(\wedge^n \psi_\kv)$, one has $\mu \in \sigma (T|_{\Hh^{\wedge n}})$ and for any $\nu \in \sigma( T|_{\Hh^{\wedge n}})$, $\nu \neq \mu$, $B^\nu (\wedge^n \psi_\kv) = B^\nu B^\mu (\wedge^n \psi_\kv) =0$. 	The existence of such $\nu \neq \mu$ is ensured by ({\bf MND}). This leads to a contradiction with ({\bf Diag}). A similar contradiction arises when $0<  m <d$, replacing $n$ by $m$ and $\wedge^n \psi_\kv$ by $\wedge^m \psi_\lv$. If $F\not\in \ker (T)$, the same argument holds for $\E^{\I\theta}\in\spec(\Vv)\setminus\{1\}$. 
}	
\end{proof}

\subsection{Invariant subspace of $\Vv \Phi$} \label{sec-invariant_subspace}

Let us consider now $\Pp_1$, the spectral projector associated to the eigenvalue $1$ of $\Vv\Phi$.
By Proposition~\ref{spectrum_of_Vv_Phi_on_unit_circle}, under assumptions ({\bf SND}), { {\rm ({\bf MND})} and ({\bf Diag})} one has, recall \eqref{specprojup},
{ 
\be \Pp_\bigcirc=\left\{
\begin{matrix}
	\Pp_1+\Pp_{\det V}+\Pp_{\det V^{\ast}} & \text{if $F\in \ker T$,}\\
	\Pp_1 \phantom{xxxxxxxxixxxxx}& \text{otherwise.}
\end{matrix}\right.
\ee
In the first case, }we already know that $\range \Pp_{\det V} = \complex \ketbra{\Omega}{F}$ and $\range \Pp_{\det V^\ast} = \complex \ketbra{F}{\Omega}$, see Remark~\ref{specvec}.

According to Lemma~\ref{kerzero} and the spectral non-degeneracy assumption ({\bf SND}), if $X \in \Pp_1 \Bb (\Ff_{-})$ then
\be
X \in \ker ( \Vv - \Identity_{\Bb(\Ff_-)} ) = \Span \big\{ \ketbra{\wedge^n \psi_\kv}{\wedge^n \psi_\kv} \big\}_{0 \leq n \leq d, \kv \in I_n}\;,
\ee
where we have set $\wedge^0\psi_\kv= \Omega$.
{ Moreover, Lemma \ref{structop_2}} shows that the range of $\Pp_1$ is the sum of the invariant subspaces of $\Vv \Phi$ in each $n$-particle sector, thus
\begin{align} \label{eq_RanP_1}
  \range \Pp_1 =\bigoplus_{n=0}^d \Pp_1 \Bb(\Hh^{\wedge n})
  \ \ \mbox{ with } \  \ \Pp_1\Bb(\Hh^{\wedge n}) \ \subset \ 
  \Span \{ |\wedge^n\psi_\kv \rangle\langle \wedge^{n}\psi_\kv |\}_{\kv\in I_n} \ \subset \  \Bb(\Hh^{\wedge n})\;.
\end{align}
We prove in this subsection that, under some additional assumptions on the unitary $V$,
$\Pp_1 \Bb(\Hh^{\wedge n})$ is reduced to $\complex \identity|_{\Hh^{\wedge n}}$. Hence the eigenvalue $1$ of $\Vv \Phi$ is $(d+1)$-fold degenerated.

{ By \eqref{eq_RanP_1}, } the elements  $X \in \Pp_1 \Bb (\Ff_{-})$ are of the form
\begin{align} \label{eq-spectral_decomp_X_basis_psi}
  X = \sum_{n=0}^d X_n \ \ \mbox{ with } \ \
  X_n= \sum_{\kv\in I_n} x_{\kv}^{(n)}  |\wedge^{n}\psi_\kv \rangle\langle \wedge^{n}\psi_\kv | \ \ \in \ \  \Bb ( \Hh^{\wedge n})\;,
 \end{align}
where $x_\kv^{(n)} \in \complex$.
One can in fact assume without loss of generality that $x_\kv^{(n)} \in \real$, \ie that $X_n$ is self-adjoint.
Actually, since  $B^\mu= (B^\mu)^\ast$, one has
$\Vv \Phi ( X^\ast_n) = (\Vv \Phi ( X_n) )^\ast$ for any $X_n \in \Bb (\Hh^{\wedge n})$. Hence $X_n \in \Pp_1 \Bb (\Hh^{\wedge n})$ if and only if
$\re(X_n) = (X_n + X^\ast_n )/2$ and $\im (X_n) = (X_n- X^\ast_n)/(2\I)  \in \Pp_1 \Bb (\Hh^{\wedge n})$.

\medskip

Let $X \in \Pp_1 \Bb (\Ff_{-})$ be self-adjoint. { We now show that }
\be \label{eq-invariant_states}
X = \sum_{n=0}^d x_n \,\identity_{\Hh^{\wedge n}} \quad , \quad x_{n} \in \real\;,
\ee
{ where $\identity_{\Hh^{\wedge n}}$ refers to the orthogonal projector onto $\Hh^{\wedge n}$},
under { the following hypothesis:} \\

\noindent
({\bf OffDiag}) {\it For any $1 \leq n \leq d-1$ and $\kv,\lv \in I_n$, $\kv\not= \lv$, one has $\bra{\wedge^n \psi_\kv} B^\mu \, {\wedge^n \psi_\lv}\rangle \not= 0$ for some $\mu \in \setmu$}.\\

\begin{rmk}\label{OffDiag-MND} { \textnormal{({\bf OffDiag})} implies \textnormal{({\bf MND})}. Indeed, let $0< n < d$. If 
	$T|_{\Hh^{\wedge n}} = c\,\identity |_{\Hh^{\wedge n}}$ for some $c\in {\mathbb R}$, then   
	$B^\mu |_{\Hh^{\wedge n}} = \identity |_{\Hh^{\wedge n}}$ or 
 $B^\mu |_{\Hh^{\wedge n}} =0$ for all $\mu \in \spec(T)$. This implies that for any 
 $\kv,\lv \in I_n$, $\kv\not= \lv$, one has 
 $\bra{\wedge^n \psi_\kv} B^\mu \, {\wedge^n \psi_\lv}\rangle=0$ for all $\mu$, in contradiction with \textnormal{({\bf OffDiag})}.
}
\end{rmk}

\vspace{1mm}

Let $1 \leq n < d$ and $X\in  \Pp_1 \Bb (\Ff_{-})$. To prove that \eqref{eq-invariant_states} holds under Assumption ({\bf OffDiag}), we first note that
by \eqref{eq-spectral_decomp_X_basis_psi},  the $x_\kv^{(n)}$ are eigenvalues of $X_n$ with eigenvectors $\wedge^n \psi_\kv$.
Assume that $X_n$ has two distinct eigenvalues $x_\kv^{(n)} \not= x_\lv^{(n)}$. Thanks to Lemma~\ref{kerzero} one has $\Phi ( X)=X$, thus
$X B^\mu = B^\mu X$ for any $\mu \in \setmu$. Since $B^\mu$ leaves $\Hh^{\wedge n}$ invariant,
$X_n B^\mu = B^\mu X_n$ holds as well. 
It follows that
$B^\mu (\wedge^n \psi_\kv)$ and  $B^\mu (\wedge^n \psi_\lv)$ are eigenvectors of $X$ with eigenvalues $x_\kv^{(n)}$ and $x_\lv^{(n)}$.
Because $X$ is self-adjoint, these two eigenvectors must be orthogonal, \ie
\be
\langle \wedge^n \psi_\kv | B^\mu \wedge^n \psi_\lv \rangle = 0\;,
\ee
for any $\mu \in \setmu$. This contradicts ({\bf OffDiag}). Hence $x_\kv^{(n)} = x_\lv^{(n)}$ for all $\kv,\lv \in I_n$, $\kv \not= \lv$,
implying that $X_n= x_n\,\identity|_{\Hh^{\wedge n}}$.
This equality holds for $n=0$ and $n=d$  as well since $\Hh^{\wedge 0}$ and { $\Hh^{\wedge d}$} are one-dimensional.

From the observation above, it follows that under assumption ({\bf OffDiag}),
all $X= \Pp_1 \Bb (\Ff_{-})$ are diagonal in each $n$-particle sector, being given by  \eqref{eq-invariant_states}
with complex constants $x_n \in \complex$.

\medskip

%
\subsection{Exponential Convergence to a Steady State}\label{sec-conv_equil}

We are now in a position to prove our main result.

\begin{thm} \label{theo-conv_equilibrium}
	Let the initial state $\omega_B$ of the reservoir be a quasi-free state satisfying (\ref{eq-initial_state_boson_action_on_Weyl}) with $K\geq 1$.  { Let the coupling operator $T$ { be} given by (\ref{secqtau}) and (\ref{stillspecdec}) and the free evolution on the sample $V$ be such that Assumptions {\rm ({\bf SND})}, {\rm ({\bf Diag})} and {\rm ({\bf OffDiag})} are satisfied. If $F\in \ker T$,} suppose { also that} the initial state of the sample
	$\omega_S(\cdot )= \tr ( \rho_S \, \cdot )$ is given by a density matrix $\rho_S$ such that 
	$\omega_S ( \ketbra{\Omega}{F}) = \omega_S ( \ketbra{F}{\Omega})=0$.
	 Then 
 there exists $\lambda_0>0$ such that for $|\lambda|>\lambda_0$ and for any $X \in \Bb(\Ff_-)$,
	\begin{equation} \label{eq-convergence_toward_equilibrium}
	\lim_{t \to \infty} \omega_S ( \Ttt_t ( X)) = \omega_S^\infty ( X)
	\end{equation}
	where the asymptotic state $\omega_S^\infty (\cdot )= \tr ( \rho_S^\infty \,  \cdot )$ has a density matrix given by
	\begin{equation}
	\rho_S^\infty = \bigoplus_{n=0}^d \begin{pmatrix} d \\ n \end{pmatrix}^{-1}\tr_{\Hh^{\wedge n}} ( \rho_S |_{\Hh^{\wedge n}})  \,\identity_{\Hh^{\wedge n}}\;.
	\end{equation}
	The convergence in \eqref{eq-convergence_toward_equilibrium} is exponential. 
\end{thm}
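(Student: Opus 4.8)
The plan is to combine the uniform large-coupling estimate of Theorem~\ref{theo-bounding_off_diagonal_contributions} with the spectral description of $\Vv\Phi$ obtained in Sections~\ref{sec-asumption_V_and_T} and~\ref{sec-invariant_subspace}. Fix $|\lambda|>\lambda_0$ as in Theorem~\ref{theo-bounding_off_diagonal_contributions} and, for $X\in\Bb(\Ff_-)$, split
\[
\omega_S(\Ttt_t(X)) - \omega_S^\infty(X) = A_t(X) + B_t(X) + C(X),
\]
where $A_t(X) = \omega_S(\Ttt_t(X)) - \omega_S((\Vv\Phi)^t(X))$, $B_t(X) = \omega_S((\Vv\Phi)^t(X)) - \omega_S(\Pp_1(X))$ and $C(X) = \omega_S(\Pp_1(X)) - \omega_S^\infty(X)$. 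Since $|\omega_S(Y)|\le\|Y\|$, Theorem~\ref{theo-bounding_off_diagonal_contributions} gives $|A_t(X)| \le C_0\|\Pp_<(X)\|\,t\,\E^{-\gamma t/2}\E^{-\Delta\lambda^2/4}$, which tends to $0$ exponentially as $t\to\infty$. It thus remains to show that $B_t(X)\to 0$ exponentially and that $C(X)=0$.

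For $B_t(X)$ I would use that $\Vv\Phi$ is a contraction whose eigenvalues on ${\mathbb S}^1$ carry no eigennilpotent and whose eigenprojectors there are orthogonal, so that, in view of Proposition~\ref{spectrum_of_Vv_Phi_on_unit_circle},
\[
(\Vv\Phi)^t = \Pp_1 + (\det V)^t\,\Pp_{\det V} + (\det V^\ast)^t\,\Pp_{\det V^\ast} + (\Vv\Phi|_{\Bb_<})^t\,\Pp_<,
\]
where the two middle terms are present only in the case $F\in\ker T$, and $\|(\Vv\Phi|_{\Bb_<})^t\|\le C\E^{-\gamma t}$ by~\eqref{expdecvp}. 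By Remark~\ref{specvec}, $\range\Pp_{\det V} = \complex\,\ketbra{\Omega}{F}$ and $\range\Pp_{\det V^\ast} = \complex\,\ketbra{F}{\Omega}$, hence $\Pp_{\det V}(X)$ and $\Pp_{\det V^\ast}(X)$ are scalar multiples of $\ketbra{\Omega}{F}$ and $\ketbra{F}{\Omega}$; applying $\omega_S$ and invoking the hypothesis $\omega_S(\ketbra{\Omega}{F}) = \omega_S(\ketbra{F}{\Omega})=0$ (assumed precisely when $F\in\ker T$) makes them vanish. Therefore $|B_t(X)| = |\omega_S((\Vv\Phi|_{\Bb_<})^t\Pp_<(X))| \le C\,\E^{-\gamma t}\|X\|$.

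For $C(X)$ I would compute $\Pp_1$ explicitly. Section~\ref{sec-invariant_subspace} shows, under~{\rm(\textbf{OffDiag})} (which implies~{\rm(\textbf{MND})} by Remark~\ref{OffDiag-MND}), that every element of $\range\Pp_1$ is of the form $\sum_{n=0}^d x_n\,\identity_{\Hh^{\wedge n}}$; conversely each $\identity_{\Hh^{\wedge n}}$ lies in $\range\Pp_1$, since $\Phi(\identity_{\Hh^{\wedge n}}) = \identity_{\Hh^{\wedge n}}$ and $\Vv(\identity_{\Hh^{\wedge n}}) = \identity_{\Hh^{\wedge n}}$, because the $B^\mu$ and $\Gamma_-(V)$ leave $\Hh^{\wedge n}$ invariant. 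Hence $\range\Pp_1 = \Span\{\identity_{\Hh^{\wedge n}}\}_{n=0}^d$. As $\Pp_1 = \Pp_1^\ast$ is an orthogonal projector for the Hilbert--Schmidt product~\eqref{hsscalprod} and the $\identity_{\Hh^{\wedge n}}$ are pairwise orthogonal with $\langle\identity_{\Hh^{\wedge n}},\identity_{\Hh^{\wedge n}}\rangle = \binom{d}{n}$, one gets
\[
\Pp_1(X) = \sum_{n=0}^d \binom{d}{n}^{-1}\,\tr_{\Hh^{\wedge n}}(X|_{\Hh^{\wedge n}})\,\identity_{\Hh^{\wedge n}},
\]
and then $\omega_S(\Pp_1(X)) = \sum_n \binom{d}{n}^{-1}\tr_{\Hh^{\wedge n}}(X|_{\Hh^{\wedge n}})\,\tr_{\Hh^{\wedge n}}(\rho_S|_{\Hh^{\wedge n}}) = \tr(\rho_S^\infty X) = \omega_S^\infty(X)$, so $C(X)=0$.

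Combining the three pieces yields $|\omega_S(\Ttt_t(X)) - \omega_S^\infty(X)| \le C\|X\|\,t\,\E^{-\gamma t/2}$ for $|\lambda|>\lambda_0$, which establishes both~\eqref{eq-convergence_toward_equilibrium} and its exponential character, the polynomial prefactor $t$ being harmless (it is absorbed by slightly decreasing the rate, as in the remark following Theorem~\ref{theo-bounding_off_diagonal_contributions}). The step I expect to require the most care is the treatment of the unimodular eigenvalues $\det V^{\pm1}$ of $\Vv\Phi$ in the case $F\in\ker T$: they prevent $(\Vv\Phi)^t$ from converging as an operator, and it is exactly the additional hypothesis on $\omega_S$ that kills the corresponding oscillatory contributions once the sample expectation is taken. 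Everything else is bookkeeping on top of facts already established.
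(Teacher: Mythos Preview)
Your proof is correct and follows essentially the same route as the paper: both split off the large-coupling remainder via Theorem~\ref{theo-bounding_off_diagonal_contributions}, decompose $(\Vv\Phi)^t$ along the peripheral spectrum using Proposition~\ref{spectrum_of_Vv_Phi_on_unit_circle}, kill the $\det V^{\pm1}$ contributions with the hypothesis on $\omega_S$, and identify $\omega_S(\Pp_1(X))$ with $\omega_S^\infty(X)$. Your derivation of the explicit formula for $\Pp_1$ from orthogonality of the $\identity_{\Hh^{\wedge n}}$ in the Hilbert--Schmidt inner product is slightly more detailed than the paper's, which instead just invokes self-adjointness of $\Pp_1$ to write $\omega_S(\Pp_1(X))=\tr(\Pp_1(\rho_S)X)$ and reads off $\rho_S^\infty=\Pp_1(\rho_S)$; the content is the same.
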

\begin{rmk} 
i) In case $\omega_{S}$ leaves all $n$-particle subspaces $\Hh^{\wedge n}$ invariant, \ie, 
\begin{align}
	\rho_S &= \bigoplus_{n=0}^d \rho_{S,n} \ \ \text{with} \ \ \rho_{S,n}\in \Bb(\Hh^{\wedge n}), \text{ then }\nonumber\\
	\rho_S^\infty &= \bigoplus_{n=0}^d \, \rho_{S,n}^\infty \ \ \text{with} \ \ \rho_{S,n}^\infty=  \begin{pmatrix} d \\ n \end{pmatrix}^{-1}\tr_{\Hh^{\wedge n}}(\rho_{S,n}) \,\identity_{\Hh^{\wedge n}}\in \Bb(\Hh^{\wedge n}).
\end{align}  \\
ii) The asymptotic state depends on the initial condition, so that, strictly speaking, it does not qualify to be an equilibrium state. However, as point i) shows, within each of the  $n$-particle subspaces $\Hh^{\wedge n}$ the asymptotic state coincides with an infinite-temperature Gibbs state: 
\begin{equation}
	\rho_S=\rho_{S,n} \ \Rightarrow \ \rho_{S}^\infty=\begin{pmatrix} d \\ n \end{pmatrix}^{-1}\identity_{\Hh^{\wedge n}}\; .
\end{equation}
\end{rmk}

\begin{proof}
{ We consider the case $F\in \ker T$, the other one being simpler.}  By Theorem \ref{theo-bounding_off_diagonal_contributions} one has
 \begin{equation}
 \omega_S \big( \Ttt_t (X) \big) = \omega_S \big( (\Vv \Phi)^t (X) \big) + 
 \omega_S \big( \Ttt_t (X) - (\Vv \Phi)^t (X) \big)
 \; \text{ with } \;  \omega_S \big( \Ttt_t (X) - (\Vv \Phi)^t (X) \big) \to 0
\end{equation}
as $t \to \infty$, the convergence being exponential. { By  Proposition~\ref{spectrum_of_Vv_Phi_on_unit_circle} and Remark \ref{OffDiag-MND}, one has
} 
\be
(\Vv \Phi)^t (X) = (\Vv \Phi)^t \Pp_{<} (X) + (\Vv \Phi)^t \Pp_{1} (X) + (\Vv \Phi)^t \Pp_{\det V} (X) 
+ (\Vv \Phi)^t \Pp_{\det V^\ast} (X) \;. 
\ee
{ Keeping in mind that $\range \Pp_{\det V} = \complex \ketbra{\Omega}{F}$ and $\range \Pp_{\det V^\ast} = \complex \ketbra{F}{\Omega}$ (see Remark~\ref{specvec}) and using} the assumption
$\omega_S ( \ketbra{\Omega}{F}) = \omega_S ( \ketbra{F}{\Omega})=0$, we get
\begin{equation}
\omega_S \big( (\Vv \Phi)^t (X)  \big) = \omega_S \big( (\Vv \Phi)^t \Pp_{<}  (X) \big)+ 
\omega_S \big( \Pp_{1}  (X) \big)
\end{equation}
for all $t \in \pinteger$.
By \eqref{expdecvp} the first term in the \RHS can be bounded by a constant times 
$\E^{-\gamma t}  \| \Pp_{<} (X)\|$ and thus vanishes exponentially fast as $t\to \infty$.
Now from \eqref{eq-invariant_states} one obtains 
\begin{equation}
\Pp_1 (X) =\sum_{n=0}^d \begin{pmatrix} d \\ n \end{pmatrix}^{-1} 
\tr_{\Hh^{\wedge n}} ( X|_{\Hh^{\wedge n}}) \,\identity_{\Hh^{\wedge n}}\;.
\end{equation}
The result follows from the self-adjointness of $\Pp_1$ so that $\omega_{S}(\Pp_1 (X))=\tr (\Pp_1(\rho_S)  X)$, 
and thus (\ref{eq-convergence_toward_equilibrium}) holds with 
$\rho_S^\infty=\Pp_1(\rho_S)$.
\end{proof} 


{ We get the following direct consequence when there is initially only one fermion in the sample and the observable $X$ conserves the number of particles:

\begin{cor}\label{cor5.9}
 Under the hypotheses of Theorem~\ref{theo-conv_equilibrium} on $\omega_B$ and $T$, assume that $\rho_S = \rho_{S,1}$, $X = \bigoplus_{n=0}^d \,X_n$ with $X_n \in \Bb ( \Hh^{\wedge n})$ and that
\begin{itemize}
\item[(a)] the map ${\cal V}_1$ defined by ${\cal V}_1 (X) = V^{-1} X V$, $X\in \cal{B} ( \cal{H} )$, has simple spectrum { save for the eigenvalue $1$ which is $d$-fold degenerate;};
\item[(b)] For all $k=1,\ldots,d$ and $\mu \in \spec ( \tau)$, $\langle \psi_k | B^\mu \psi_k \rangle \neq 0$.
\end{itemize}
If $\tau$ has at least one simple eigenvalue, then
%
\be
\lim_{t\rightarrow \infty} \omega_S({\mathcal T}_t(X)) =
\frac{1}{d} \tr_{\cal H}(X)\;.
\ee
\end{cor}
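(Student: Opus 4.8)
The plan is to reduce the statement to the one-fermion sector $\Bb(\Hh^{\wedge 1})=\Bb(\Hh)$ and to re-run there the spectral analysis of Sections~\ref{sec-large_coupling_limit} and \ref{sec-convergence_large_time}, where hypotheses (a) and (b) play the roles of ({\bf SND}) and of ({\bf Diag})--({\bf OffDiag}).

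\emph{Reduction.} By Lemma~\ref{structop_2} the maps $\Vv$, $\Bb^{\mu\nu}$ and $\Phi$ leave every $\Bb(\Hh^{\wedge n})$ invariant, hence so does $\Ttt_t$. Since $X=\bigoplus_n X_n$ and $\rho_S=\rho_{S,1}$ is supported in $\Hh^{\wedge 1}=\Hh$, one gets $\omega_S(\Ttt_t(X))=\tr_\Hh(\rho_{S,1}\,\Ttt_t^{(1)}(X_1))$, where $\Ttt_t^{(1)}:=\Ttt_t|_{\Bb(\Hh)}$ and $X_1:=X|_\Hh$. On $\Bb(\Hh)$ one has $\Vv={\cal V}_1$ and, writing $\beta^\mu:=B^\mu|_\Hh$ (the spectral projector of $\tau=T|_\Hh$ if $\mu\in\spec(\tau)$, and zero otherwise), $\Bb^{\mu\nu}(Y)=\beta^\mu Y\beta^\nu$ and $\Phi|_{\Bb(\Hh)}=\Phi_1$ with $\Phi_1(Y):=\sum_{\mu\in\spec(\tau)}\beta^\mu Y\beta^\mu$, so that \eqref{dectvb} restricts to the analogous expansion of $\Ttt_t^{(1)}$ built from $({\cal V}_1,\{\beta^\mu\})$ in place of $(\Vv,\{B^\mu\})$. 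Since Section~\ref{sec-large_coupling_limit} nowhere uses that the coupling operator is a second quantization, Lemma~\ref{structop} (the $\beta^\mu$ form a resolution of the identity by orthogonal projectors on $\Hh$, $\Phi_1$ is an orthogonal projector on $\Bb(\Hh)$, ${\cal V}_1$ is unitary for the Hilbert--Schmidt product), the spectral decomposition of ${\cal V}_1\Phi_1$, and Theorem~\ref{theo-bounding_off_diagonal_contributions} all transfer verbatim to $\Ttt_t^{(1)}$, with the minimal gap $\Delta$ replaced by $\Delta_\tau=\min_{\mu\neq\nu\in\spec(\tau)}|\mu-\nu|$.

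\emph{Spectrum and invariant subspace of ${\cal V}_1\Phi_1$.} By (a), which is ({\bf SND}) for ${\cal V}_1$, every eigenvalue of ${\cal V}_1$ other than $1$ is simple with eigenvector $\ketbra{\psi_j}{\psi_i}$, $i\neq j$, while $\ker({\cal V}_1-\Identity_{\Bb(\Hh)})=\Span\{\ketbra{\psi_k}{\psi_k}\}_{k=1}^d$. If $\E^{\I\theta}\in\spec({\cal V}_1\Phi_1)\cap\mathbb S^1$ with $\theta\neq 0$, Lemmas~\ref{kerzero} and \ref{lem-NSC_range_of_Phi} give $\tilde\mu\in\spec(\tau)$ with $\beta^{\tilde\mu}\psi_i=\psi_i$; choosing $\nu\in\spec(\tau)$, $\nu\neq\tilde\mu$ (which exists because a simple eigenvalue of $\tau$ together with $d\geq 2$ forces $|\spec(\tau)|\geq 2$) yields $\langle\psi_i|\beta^\nu\psi_i\rangle=0$, contradicting (b); hence $\spec({\cal V}_1\Phi_1)\cap\mathbb S^1=\{1\}$. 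Let $P_1$ be the spectral projector of ${\cal V}_1\Phi_1$ onto $1$. Any self-adjoint $Y\in\range P_1$ satisfies, by Lemma~\ref{kerzero}, $Y=\sum_k y_k\ketbra{\psi_k}{\psi_k}$ and $\Phi_1(Y)=Y$, the latter equivalent to $[Y,\beta^\mu]=0$ for all $\mu$, i.e.\ $[Y,\tau]=0$. Taking a simple eigenvalue $\varepsilon_{j_0}$ of $\tau$ with unit eigenvector $f_{j_0}$, so that $\beta^{\varepsilon_{j_0}}=\ketbra{f_{j_0}}{f_{j_0}}$, the relation $[Y,\tau]=0$ forces $Yf_{j_0}=\xi f_{j_0}$ for some $\xi\in\real$; expanding $f_{j_0}=\sum_k\langle\psi_k|f_{j_0}\rangle\psi_k$ and matching coefficients gives $(y_k-\xi)\langle\psi_k|f_{j_0}\rangle=0$, and since $|\langle\psi_k|f_{j_0}\rangle|^2=\langle\psi_k|\beta^{\varepsilon_{j_0}}\psi_k\rangle\neq 0$ by (b), one gets $y_k=\xi$ for all $k$, i.e.\ $Y=\xi\,\identity_\Hh$. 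Hence $\range P_1=\complex\,\identity_\Hh$ and, $P_1$ being an orthogonal projector, $P_1(Y)=\tfrac1d\tr_\Hh(Y)\,\identity_\Hh$.

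\emph{Conclusion and main obstacle.} By Theorem~\ref{theo-bounding_off_diagonal_contributions} applied in the one-particle sector there is $\lambda_0>0$ such that, for $|\lambda|>\lambda_0$, $\|\Ttt_t^{(1)}(X_1)-({\cal V}_1\Phi_1)^t(X_1)\|\to 0$ exponentially in $t$; since $\spec({\cal V}_1\Phi_1)\cap\mathbb S^1=\{1\}$, writing $({\cal V}_1\Phi_1)^t(X_1)=({\cal V}_1\Phi_1)^t P_<(X_1)+P_1(X_1)$ and invoking \eqref{expdecvp} gives $\Ttt_t^{(1)}(X_1)\to P_1(X_1)=\tfrac1d\tr_\Hh(X_1)\,\identity_\Hh$ exponentially fast. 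Therefore $\omega_S(\Ttt_t(X))=\tr_\Hh(\rho_{S,1}\,\Ttt_t^{(1)}(X_1))\to\tfrac1d\tr_\Hh(X_1)\tr_\Hh(\rho_{S,1})=\tfrac1d\tr_\Hh(X)$, using $\tr_\Hh(\rho_{S,1})=1$ and the shorthand $\tr_\Hh(X):=\tr_\Hh(X|_\Hh)$. The main obstacle is the identification $\range P_1=\complex\,\identity_\Hh$: since (b) controls only the diagonal elements $\langle\psi_k|\beta^\mu\psi_k\rangle$, the ({\bf OffDiag})-type argument of Section~\ref{sec-invariant_subspace} is unavailable, and what replaces it is the observation that commuting with $\tau$ together with the presence of a \emph{simple} eigenvalue of $\tau$ — whose rank-one spectral projector has, by (b), nonzero overlap with every $\psi_k$ — already forces $Y$ to be a multiple of $\identity_\Hh$. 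One must also check, as in the Reduction step, that the structural results of Section~\ref{sec-large_coupling_limit} transfer to the one-particle sector, which is immediate since they only use unitarity of the free evolution and that the $\beta^\mu$ form a resolution of the identity by orthogonal projectors.
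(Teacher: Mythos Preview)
Your proof is correct and follows essentially the same strategy as the paper: reduce to the one-particle sector via Lemma~\ref{structop_2}, observe that $\Vv\Phi|_{\Bb(\Hh)}$ is built from $V$ and the spectral projectors $\beta^\mu$ of $\tau$, and then rerun the spectral analysis with (a) and (b) playing the roles of ({\bf SND}) and ({\bf Diag}). The only noteworthy difference is in the identification of $\range P_1$: the paper verifies ({\bf OffDiag}) for $n=1$ directly (the rank-one projector $\beta^{\varepsilon_{j_0}}=\ketbra{f_{j_0}}{f_{j_0}}$ gives $|\langle\psi_k|\beta^{\varepsilon_{j_0}}\psi_l\rangle|=\|\beta^{\varepsilon_{j_0}}\psi_k\|\,\|\beta^{\varepsilon_{j_0}}\psi_l\|\neq 0$ by (b)) and then invokes the general argument of Section~\ref{sec-invariant_subspace}, whereas you argue directly that $[Y,\tau]=0$ forces $Yf_{j_0}\in\complex f_{j_0}$ and then match coefficients in the $\psi_k$-basis. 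These are two phrasings of the same mechanism, and your version is arguably more transparent in making explicit why a \emph{simple} eigenvalue of $\tau$ is what is needed.
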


\begin{rmk} Hypotheses (a) and (b) correspond to the restriction of Assumptions {\rm ({\bf SND})} et {\rm ({\bf Diag})} to the one-particle subspace $\cal{H}$.
\end{rmk}
}
	
\begin{proof}
	{ 
	Since $\rho_S = \rho_{S,1}$ and $X = \bigoplus_{n=0}^d \,X_n$ with $X_n \in \Bb ( \Hh^{\wedge n})$, we have  by Lemma~\ref{structop_2}
	\begin{equation}  
  \omega_S \big( (\Vv \Phi)^t (X) \big)
   = \tr_\Hh \big( \rho_{S,1} (\Vv \Phi)^t (X) \big|_{\Hh}\big)
  = \omega_S  \big( (\Vv \Phi|_{\Bb ( \Hh)})^t (X_1)\big)\;.
  \end{equation}
 Observe that the restriction of  $\Vv \Phi$ to $\Bb ( \Hh)$ is given by replacing $\Gamma(V)$ by $V$ in \eqref{defvcal} and $B^\mu$ by the spectral projectors of $\tau$, which are given by $B^\mu|_\Hh$ with $\mu \in \spec (\tau)$.
 Doing these substitutions in the proofs of Proposition~\ref{spectrum_of_Vv_Phi_on_unit_circle} and Theorem~\ref{theo-conv_equilibrium}, one sees that assumptions
  ({\bf SND}) and ({\bf Diag}) can be replaced by (a) and (b). 
  Since $\tau$ has a simple eigenvalue $\varepsilon_i$, $B^\mu|_\Hh = \ketbra{f_i}{f_i}$ for $\mu = \varepsilon_i$. Thus 
		$|\langle\psi_k | B^\mu \psi_l \rangle | = \| B^\mu\psi_k\| \| B^\mu \psi_l\| \not=0$ by assumption (b).
Hence assumption {\rm ({\bf OffDiag})} is satisfied for $n=1$.  
One can then use the arguments at the end of Sec.~\ref{sec-invariant_subspace} to show that
$X_1 = \Pp_1 (X_1)$ $ \Rightarrow X_1 = x_1 \identity|_\Hh$ with $x_1 \in \complex$. 
	}
\end{proof}

\subsection{Model on a finite graph with the coupling operator $T_{\rm{hop}}$}
\label{sec-results_on_specific_model}

{ 
In this subsection, we analyse the specific model of fermionic QWs on 
a finite graph $\Lambda=\{1,\ldots, d\}$ discussed at the end of Sec.~\ref{sec-notation}, for which the operator $T$ is given by
\begin{equation} \label{eq-T_transport-model+bis}
T_{\rm{hop}}= \D \Gamma_{-}( \tau_{\rm{hop}}) \quad, \quad \tau_{\rm{hop}} = \E^{\I \varphi} \ketbra{e_2}{e_1} + \E^{-\I \varphi} \ketbra{e_1}{e_2}\;, 
\end{equation}
where $\{ e_l\}_{l=1}^d$ is the canonical basis of the one-particle states localized at site $l \in \Lambda$. Hereafter, we assume $d \geq 3$. The eigenvalues of $\tau_{\rm{hop}}$ are $\varepsilon_1 = 1$, $\varepsilon_2 = -1$ and $\varepsilon_3=0$, the first two eigenvalues being simple and the third one being $(d-2)$-fold degenerate.
A corresponding orthonormal basis of eigenvectors of $\tau_{\rm{hop}}$ is formed by the vectors 
}
\begin{align}\label{newcan}
f_1 &=\frac{1}{\sqrt{2}} \big( e_1+\E^{i\varphi}e_2 \big)\nonumber\\[2mm]
f_2 &=\frac{1}{\sqrt{2}} \big( e_1- \E^{i\varphi} e_2 \big) \\[2mm]
f_j & = e_j\quad, \quad j =2,\ldots, d\;. \nonumber
\end{align}
{ We denote by $a_+ = a ( f_{1} )$, $a_+^\ast = a^\ast ( f_{1} )$ and $n_+ = a_+^\ast a_+$  (respectively $a_- = a ( f_{2} )$, $a_{-}^\ast = a^\ast ( f_{2} )$  and $n_- = a_{-}^\ast a_{-}$) the annihilation, creation and number operators 
associated to the eigenvector $f_1$ (resp. $f_2$). Similarly,  
$a_j = a(e_j)$, $a_j^\ast = a^\ast ( e_j )$ and $n_j=a_j^\ast a_j$ are the  
annihilation, creation and number operators at site $j\in \Lambda$. Note
that our convention for $a^\sharp_1$ and $a^\sharp_2$ in this subsection is different from 
that adopted in Subsection~\ref{sec-asumption_V_and_T}. 

According to \eqref{evT}, the spectrum of $T_{\rm{hop}}$ is
$\spec (T_{\rm{hop}}) = \{ 1, -1, 0\}$. Let us denote by 
$B^+$, $B^{-}$ and $B^0$ the spectral projectors 
of $T_{\rm{hop}}$ associated to the eigenvalues $\mu_+=+1$, $\mu_{-}=-1$ and $\mu_0 = 0$.
The eigenspaces of $T_{\rm{hop}}$ are given \eqref{defiqmu} with
}
\begin{equation}
I_1^{+} = \{ 1\} \quad , \quad I_1^{-} = \{ 2\}\quad, \quad I_1^0 = \{ j\}_{j=3}^d 
\end{equation}
and, for $n \geq 2$,
\begin{align} \label{defiqmu_hop}
\nonumber  
I_n^+ & = \big\{ ( 1,j_2,\dots,j_n) \big\}_{2<j_2<\dots < j_n \leq d}
\\ 
I_n^- & = \big\{ ( 2,j_2,\dots,j_n)  \big\}_{2<j_2<\dots < j_n \leq d}\;.
\\ \nonumber
I_n^0 & = \big\{ ( 1,2,j_3,\dots,j_n)  \big\}_{2<j_3<\dots < j_n \leq d} \cup \big\{ (j_1,j_2,\dots, j_n)  \big\}_{2< j_1 < \dots < j_n \leq d }\;.
\end{align}
More explicitely, 
\begin{align}\label{carkerb}
\range B^+ &= \Span \ \{f_1\}\cup\{f_1\wedge f_{j_2}\wedge  \dots \wedge f_{j_n}\}_{ j_2>2\atop 2\leq n\leq d-1} \nonumber\\
\range B^{-}&=\Span \ \{f_2\}\cup\{f_2\wedge f_{j_2}\wedge  \dots \wedge f_{j_n}\}_{ j_2>2\atop2\leq n\leq d-1} \\
\range B^0 & =\Span \ \{\Omega\}\cup \{ f_j\}_{j>2} \cup \{f_1\wedge f_2\wedge f_{j_3}\wedge \dots \wedge f_{j_n}\}_{j_3>2 \atop 2\leq n\leq d}
\cup\{f_{j_1}\wedge  \cdots \wedge f_{j_n}\}_{ j_1>2\atop1\leq n \leq d-2} \nonumber
\;.
\end{align}
{ Thus, the spectral projectors $B^\mu$ are given by
\begin{equation} \label{eq-spctral_projector_transport_model}
B^\pm = n_{\pm} ( 1 - n_{\mp})\quad , \quad B^0 = \identity - B^+ - B^- = n_{+} n_{-} + (1-n_{+})(1- n_{-})\;. 
\end{equation}
Indeed, one checks that
\begin{align}
 B^+ - B^{-} & = n_{+} - n_{-} =  \frac{1}{2} \big( a_1^\ast + \E^{\I \varphi} a_2^\ast \big) \big( a_1 + \E^{-\I \varphi} a_2 \big) 
 - \frac{1}{2} \big( a_1^\ast - \E^{\I \varphi} a_2^\ast \big) \big( a_1 - \E^{-\I \varphi} a_2 \big) \nonumber \\
&  = \E^{\I \varphi} a_2^\ast a_1 + \E^{-\I \varphi} a_1^\ast a_2  =  T_{\rm{hop}} \;.
\end{align}
Note that the restriction of $T_{\rm{hop}}$ to $\Hh^{\wedge n}$ satisfy ({\bf MND}) and that $F\in \range B^0=\ker T_{\rm{hop}}$.
}
Now, consider the matrix 
\be\label{matcoef}
C = \begin{pmatrix}
	c_{1}(1) & c_{1}(2) & \cdots & c_{1}(d) \\
	c_{2}(1) & c_{2}(2) & \cdots & c_{2}(d) \\
	\vdots   & \vdots   & \ddots & \vdots   \\
	c_{d}(1) & c_{d}(2) & \cdots & c_{d}(d)
\end{pmatrix}
\in M_d(\mathbb C),
\ee
where $c_j(k)= \braket{f_j}{\psi_k}$ are the coefficients of the eigenvectors $\psi_k$ of $V$ in the orthonormal basis $\{f_i\}_{i=1}^d$,
\be
\psi_k=\sum_{j=1}^d c_j(k) f_j\quad , \quad k=1,\dots,d\;.
\ee
Thus $C$ is  the unitary matrix implementing the change from the { $\{ f_j\}$-basis} to the orthonormal basis of eigenvectors of $V$. 
Consequently, for all $0\leq n\leq d$ and $\kv\in I_n$,
\begin{align}\label{expwedgevec}
\wedge^n\psi_\kv =\sum_{\jv \in I_n}c_\jv (\kv) \wedge^n f_\jv \;,
\end{align}
where, thanks to (\ref{scalprodet}), $c_\jv (\kv )$ is the following $n\times n$ minor of the matrix $C$: 
\be
c_\jv (\kv ) =\big\langle \wedge^n f_\jv |  \wedge^n \psi_\kv \big\rangle = 
\begin{vmatrix} c_{j_1}(k_1) &c_{j_1}(k_2)& \cdots &c_{j_1}(k_n) \\
	c_{j_2}(k_1) &c_{j_2}(k_2)& \cdots &c_{j_2}(k_n)\\
	\vdots & \vdots &\ddots& \vdots\\
	c_{j_n}(k_1) &c_{j_n}(k_2)& \cdots &c_{j_n}(k_n)\end{vmatrix}\; . 
\ee
{ Assumptions ({\bf{Diag}}) and ({\bf{OffDiag})} can be rewritten in terms of the minors $c_\jv ( \kv)$ as follows:\\

\noindent
({\bf{Diag}}) {\it for $T=T_{\rm{hop}}$: 
For any $1\leq n \leq d-1$ and any $\kv \in I_n$, there exists 
$\jv^+ \in I_n^+$, $\jv^{-} \in I_n^{-}$ and $\jv^0 \in I_n^0$ such that 
$c_{\jv^{+}} ( \kv)\neq 0$, $c_{\jv^{-}} ( \kv)\neq 0$, and $c_{\jv^{0}} ( \kv)\neq 0$}. \\

\noindent
({\bf{OffDiag}}) {\it for $T=T_{\rm{hop}}$: For any $1 \leq n \leq d-1$ and $\kv,\lv \in I_n, \kv \not= \lv$, one has
\be
 \sum_{\jv \in I_n^\mu} \overline{c_\jv ( \kv )} c_\jv (\lv) \not= 0\quad \text{ for some } \mu \in \setmu\;.
\ee
}

In particular, the condition ({\bf{Diag}}) is satisfied  in the one-particle subspace when all eigenvectors
$\psi_k$ of $V$ are such that $\braket{f_j}{\psi_k} \not=0$ for $j=1$, $j=2$ and at least one $j>2$. 


\medskip \medskip

We now prove the convergence to a steady state for the fermionic QWs on 
a finite graph $\Lambda$ with coupling operator $T_{\rm{hop}}$ under an alternative set of hypotheses.

Let us denote for $n\in \{1,\ldots ,d\}$, $\Gamma_n(V) = \Gamma (V)|_{\Hh^{\wedge n}}= V \otimes \dots \otimes V$ and
 $B^\mu_n = B^\mu|_{\Hh^{\wedge n}}$, where $\mu \in \{ +,-,0\}$.\\

	\noindent \rm{({\bf Cyc})} {\it For any $2 \leq n \leq d-1$, there exists  a triplet $(\mu,\nu,\nu')$ with $\{ \mu, \nu, \nu'\}= \{ +.-,0\}$ such that
the $\ast$-algebra generated by the operators
\be 
B_n^\mu \Gamma_n (V) B_n^\mu \quad , \quad B_n^\mu \Gamma_n (V) B_n^{\nu} \Gamma_n (V) B_n^\mu
\quad \text{and} \quad B_n^\mu \Gamma_n (V) B_n^{\nu'} \Gamma_n (V) B_n^\mu
\ee
is the whole algebra $\Bb ( B_n^\mu \Hh^{\wedge n})$}.

\vspace{1mm}

\begin{prop} \label{theo-conv_equilibrium_hop}
	Let $T=T_{\rm{hop}}$ be given by \eqref{eq-T_transport-model+bis}.
	Let the initial state $\omega_B$ of the reservoir be a quasi-free state satisfying \eqref{eq-initial_state_boson_action_on_Weyl} with $K\geq 1$  and  the initial state of the sample
be such that $\omega_S ( \ketbra{\Omega}{F}) = \omega_S ( \ketbra{F}{\Omega})=0$. 
	Then the conclusion of Theorem~\ref{theo-conv_equilibrium} on the convergence  of the system state at large time holds under assumptions \rm{({\bf SND})}, \rm{({\bf Diag})} and 
	\rm{({\bf Cyc})}.
 \end{prop}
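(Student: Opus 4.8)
The plan is to run the proof of Theorem~\ref{theo-conv_equilibrium} essentially unchanged and to replace only the single place where Assumption ({\bf OffDiag}) enters — the identification of the invariant subspace $\Pp_1\Bb(\Hh^{\wedge n})$ carried out at the end of Section~\ref{sec-invariant_subspace} — by an argument resting on ({\bf Cyc}) and ({\bf Diag}). First I would record that for $T=T_{\rm{hop}}$ one has $F\in\range B^0=\ker T_{\rm{hop}}$ and that $T_{\rm{hop}}|_{\Hh^{\wedge n}}$ has at least two distinct eigenvalues for $0<n<d$, so ({\bf MND}) holds automatically; hence Proposition~\ref{spectrum_of_Vv_Phi_on_unit_circle} applies under ({\bf SND}) and ({\bf Diag}) and yields $\spec(\Vv\Phi)\cap{\mathbb S}^1=\{1,\det V,\det V^\ast\}$ with $\Pp_\bigcirc=\Pp_1+\Pp_{\det V}+\Pp_{\det V^\ast}$. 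Then, exactly as in the proof of Theorem~\ref{theo-conv_equilibrium}: Theorem~\ref{theo-bounding_off_diagonal_contributions} replaces $\omega_S(\Ttt_t(X))$ by $\omega_S((\Vv\Phi)^t(X))$ up to an exponentially small error; the $\Pp_{<}$ part of $(\Vv\Phi)^t(X)$ decays like $\E^{-\gamma t}$ by~\eqref{expdecvp}; and the $\Pp_{\det V}$ and $\Pp_{\det V^\ast}$ parts vanish because $\range\Pp_{\det V}=\complex\ketbra{\Omega}{F}$, $\range\Pp_{\det V^\ast}=\complex\ketbra{F}{\Omega}$ and $\omega_S(\ketbra{\Omega}{F})=\omega_S(\ketbra{F}{\Omega})=0$. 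Granting $\Pp_1\Bb(\Hh^{\wedge n})=\complex\,\identity_{\Hh^{\wedge n}}$ for all $0\le n\le d$, one then has $\Pp_1(X)=\bigoplus_{n}\binom{d}{n}^{-1}\tr_{\Hh^{\wedge n}}(X|_{\Hh^{\wedge n}})\,\identity_{\Hh^{\wedge n}}$ since $\Pp_1=\Pp_1^\ast$, and the conclusion follows verbatim.

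So the real work is to prove $\Pp_1\Bb(\Hh^{\wedge n})=\complex\,\identity_{\Hh^{\wedge n}}$ using ({\bf Cyc}) instead of ({\bf OffDiag}). Let $X\in\Pp_1\Bb(\Ff_-)$; by Lemma~\ref{structop_2} and~\eqref{eq_RanP_1} the block $X_n:=X|_{\Hh^{\wedge n}}$ lies in $\Pp_1\Bb(\Hh^{\wedge n})$, and by Lemma~\ref{kerzero} it is invariant under $\Vv$ and under $\Phi$, i.e. $[X_n,\Gamma_n(V)]=0$ and $[X_n,B_n^\mu]=0$ for all $\mu\in\{+,-,0\}$ (the latter because $\Phi(X_n)=X_n$ is equivalent, via the resolution of the identity $\sum_\mu B_n^\mu=\identity_{\Hh^{\wedge n}}$, to block-diagonality of $X_n$). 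By ({\bf SND}) this forces $X_n$ to be diagonal in the eigenbasis of $\Gamma_n(V)$, $X_n=\sum_{\kv\in I_n}x_\kv^{(n)}\ketbra{\wedge^n\psi_\kv}{\wedge^n\psi_\kv}$, as in~\eqref{eq-spectral_decomp_X_basis_psi}. Now fix $2\le n\le d-1$ and let $(\mu,\nu,\nu')$ be the triplet supplied by ({\bf Cyc}). Using $[X_n,\Gamma_n(V)]=0$ (hence also $[X_n,\Gamma_n(V)^{-1}]=[X_n,\Gamma_n(V)^\ast]=0$) together with $[X_n,B_n^\rho]=0$ for all $\rho$, a direct sandwiching computation shows that $Y:=B_n^\mu X_n B_n^\mu|_{B_n^\mu\Hh^{\wedge n}}$ commutes with each of the three generators $B_n^\mu\Gamma_n(V)B_n^\mu$, $B_n^\mu\Gamma_n(V)B_n^\nu\Gamma_n(V)B_n^\mu$, $B_n^\mu\Gamma_n(V)B_n^{\nu'}\Gamma_n(V)B_n^\mu$ and with their adjoints (the adjoints being obtained by replacing $\Gamma_n(V)$ by $\Gamma_n(V)^\ast$, as $V$ is unitary and the $B_n^\rho$ are self-adjoint). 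Therefore $Y$ lies in the commutant of the $\ast$-algebra generated by these operators, which by ({\bf Cyc}) is all of $\Bb(B_n^\mu\Hh^{\wedge n})$; hence $Y=x_n\,\identity_{B_n^\mu\Hh^{\wedge n}}$ for some $x_n\in\complex$, i.e. $B_n^\mu X_n B_n^\mu=x_n B_n^\mu$.

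To conclude, apply $B_n^\mu$ to $X_n(\wedge^n\psi_\kv)=x_\kv^{(n)}(\wedge^n\psi_\kv)$: using $[X_n,B_n^\mu]=0$ and $B_n^\mu X_n B_n^\mu=x_n B_n^\mu$ one gets $x_\kv^{(n)}\,B_n^\mu(\wedge^n\psi_\kv)=x_n\,B_n^\mu(\wedge^n\psi_\kv)$, and since $B_n^\mu(\wedge^n\psi_\kv)\neq 0$ by ({\bf Diag}) for this $\mu\in\spec(T_{\rm{hop}}|_{\Hh^{\wedge n}})$, this forces $x_\kv^{(n)}=x_n$ for every $\kv\in I_n$, so $X_n=x_n\,\identity_{\Hh^{\wedge n}}$. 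The cases $n=0,d$ are immediate since $\Hh^{\wedge 0}$ and $\Hh^{\wedge d}$ are one-dimensional, and $n=1$ is handled by the same final argument taken with $\mu=+$: there $B^+|_\Hh=\ketbra{f_1}{f_1}$ already has rank one, so $B^+|_\Hh X_1 B^+|_\Hh\in\complex\,\identity_{B^+\Hh}$ automatically and $B^+\psi_k\neq 0$ by ({\bf Diag}). This establishes $\Pp_1\Bb(\Hh^{\wedge n})=\complex\,\identity_{\Hh^{\wedge n}}$ and hence the proposition. The step requiring genuine care will be the commutation of $B_n^\mu X_n B_n^\mu$ with the ({\bf Cyc}) generators \emph{and with their adjoints} — which is precisely why ({\bf Cyc}) is phrased in terms of a $\ast$-algebra and why one needs $[X_n,\Gamma_n(V)^\ast]=0$ — while everything else is either routine sandwiching or a verbatim repeat of the proof of Theorem~\ref{theo-conv_equilibrium}.
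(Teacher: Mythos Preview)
Your proof is correct and follows the same overall strategy as the paper: reduce to identifying $\Pp_1\Bb(\Hh^{\wedge n})$, show $B_n^\mu X_n B_n^\mu$ lies in the commutant of the $\ast$-algebra of ({\bf Cyc}), and then use ({\bf Diag}) to propagate the scalar to all of $\Hh^{\wedge n}$.

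The execution of the key commutation step differs, however. The paper restricts to self-adjoint $X$, writes $\Pi_n X_n\Pi_n$ and $\Pi_n\Gamma_n(V)\Pi_n$ as $2\times 2$ block matrices with respect to $\range B_n^+\oplus\range B_n^-$ (and similarly with $B_n^0$), extracts four intertwining relations from $[\Pi_n X_n\Pi_n,\Pi_n\Gamma_n(V)\Pi_n]=0$, and combines them by left/right multiplication to obtain $[X_n^\mu, A^n_{\mu\nu}A^n_{\nu\mu}]=0$; self-adjointness of $X_n^\mu$ then gives commutation with the adjoints. You instead observe once that $\Phi(X_n)=X_n$ is equivalent to $[X_n,B_n^\rho]=0$ for every $\rho$, and since also $[X_n,\Gamma_n(V)^{\pm1}]=0$, $X_n$ commutes with any word in $\{B_n^\rho,\Gamma_n(V),\Gamma_n(V)^{-1}\}$; sandwiching by $B_n^\mu$ immediately yields commutation of $B_n^\mu X_n B_n^\mu$ with all ({\bf Cyc}) generators and their adjoints, without invoking self-adjointness of $X_n$. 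Your final ({\bf Diag}) step (evaluate $X_n B_n^\mu(\wedge^n\psi_\kv)$ two ways) is likewise more direct than the paper's eigenspace-orthogonality argument. Both routes are equivalent, but yours avoids the block-matrix bookkeeping and the reduction to self-adjoint $X$.
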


}

\begin{proof}
{ 
Since \rm{({\bf MND})} is satisfied for $T = T_{\rm{hop}}$ and $F \in \ker (T_{\rm{hop}})$, the part of the spectrum of $\Vv \Phi$ on the unit circle is given by $\{ 1, \det V, \det V^\ast\}$ by Proposition~\ref{spectrum_of_Vv_Phi_on_unit_circle}. 
Thus we only have to prove that under the assumptions of the Proposition, the invariant subspace $\Pp_1  \Bb (\Ff_{-})$ of $\Vv \Phi$ is reduced to observables of the form \eqref{eq-invariant_states}. From the argument after \eqref{eq-spectral_decomp_X_basis_psi}, one can without loss of generality assume that the observables invariant under $\Vv \Phi$ are self-adjoint.
}

Let $X \in \Pp_1 \Bb (\Ff_{-})$ { be self-adjoint} and  $1 \leq n < d$. Then $X$ is given by \eqref{eq-spectral_decomp_X_basis_psi}. Moreover, since $\Phi(X) = X$, one has
{ $B^\mu_n X_n B^\nu_n = \delta_{\mu,\nu} X_n B^\mu_n$.}
We make the following observation. Suppose that one can show that  
{ 
\be \label{eq-diagonal_projected_X}
B^\mu_n X_n B^\mu_n = X_n B^\mu_n = x_n^\mu B^\mu_n \quad \text{ for some $\mu \in \{ +,-,0\}$ and $x_n^\mu \in \real$.}
\ee
} 
Then one can use assumption { ({\bf Diag})} to prove that 
$X_n = x_n \,\identity|_{\Hh^{\wedge n}}$.
Actually, { \eqref{eq-spectral_decomp_X_basis_psi} and \eqref{eq-diagonal_projected_X} imply that  $\range B_n^\mu   \subset  \Span \{ \wedge^n \psi_\kv \,;\, x_\kv^{(n)} = x_n^\mu\}$.}
If there exists some $x_\jv^{(n)} \not= x_n^\mu$ in the decomposition \eqref{eq-spectral_decomp_X_basis_psi} then,
{ by orthogonality of the eigenspaces of $X_n$ (recall that $X_n$ is self-adjoint) and 
	$\range B^\mu_n \,\bot\,\ker B^\mu_n$, it follows that
 $\wedge^n \psi_\jv \in \ker B_n^\mu$, in contradiction with ({\bf Diag}). Thus  
{ ({\bf Diag})} and \eqref{eq-diagonal_projected_X}
}
imply that $X_n$ is proportional to the identity. 
Note that for $n=1$ the projectors $B^\pm |_{\Hh}$ are of rank one, so that  
{ \eqref{eq-diagonal_projected_X} is always satisfied for $\mu = \pm$ and 
	$x_1^\pm = \tr (X_1 B_1^\pm)$.}
The situation is more involved for $n \geq 2$.

{ Suppose that $n\geq 2$.}
In order to prove { \eqref{eq-diagonal_projected_X}}
we first work in the direct sum of the ranges of $B^+_n$ and  $B^{-}_n$ and introduce  
the projectors
\be
\Pi_n = B^{+}_n+ B^{-}_n 
\quad , \quad
\Pi_n^\bot = B^0_n \;.
\ee
Thus $X_n = \Pi_n X_n \Pi_n+ \Pi_n^\bot X_n \Pi_n^\bot$.
We now express the condition $\Vv(X)=X$, which means that $X$ commutes with $\Gamma (V)$.
By projecting the commutator $[X,\Gamma(V)]$, one is led to
\be \label{eq-commutator_condition}
\big[ \Pi_n X_n \Pi_n \, , \, \Pi_n \Gamma_n ( V) \Pi_n \big] = 0\;.
\ee
Let us write the matrices of the two operators of this commutator in the
basis $\{ \wedge^n f_\jv \}_{\jv \in I_n^+} \cup \{ \wedge^n f_\jv \}_{\jv \in I_n^{-}}$ of $\Pi_n \Hh^{\wedge n}$:
\be
{\rm Mat} (\Pi_n X_n \Pi_n )
= \begin{pmatrix}
	X_n^+ & 0 \\
	0    &   X_n^{-}
\end{pmatrix}
\, , \quad
{\rm Mat} (\Pi_n \Gamma_n (V) \Pi_n )
= \begin{pmatrix}
	A_{++}^n & A_{+-}^n \\
	A_{-+}^n    &  A_{--}^n
\end{pmatrix}
\ee
with $(A_{pq})_{\iv,\jv} = \bra{f_{(3- {\rm sign} (p))/2} \wedge f_{i_2} \wedge \dots \wedge f_{i_n}} \Gamma(V) \, f_{(3- {\rm sign} (q))/2} \wedge f_{j_2} \wedge \dots \wedge f_{j_n}\rangle$ for
$p,q \in \{+,-\}$.
The condition \eqref{eq-commutator_condition} can be rewritten as
\be \label{eq-commutation_cond1}
\left\{
\begin{array}{lcl}
	\big[ X_n^+ \, , \, A_{++}^n \big] & = & 0
	\\[1ex]
	\big[ X_n^- \, , \, A_{--}^n \big]
	& = & 0 
	\\[1ex]
	A_{+-}^n X_n^- - X_n^+ A_{+-}^n & = & 0
	\\[1ex]
	A_{-+}^n X_n^+ - X_n^- A_{-+}^n & = & 0\;.
\end{array}
\right.
\ee
Left-multiplying the third (respectively fourth) equation by $A_{-+}^n$ (resp. $A_{+-}^n$) and right-multiplying the fourth (resp. third)
equation by $A_{+-}^n$ (resp. $A_{-+}^n$), this gives
\be \label{eq-commutation_cond2}
\left\{
\begin{array}{lcl}
	\big[ X_n^+ \, , \, A_{+ -}^n A_{- +}^n \big] & =  & 0
	\\[1ex]
	\big[ X_n^- \, , \, A_{- +}^n  A_{+ -}^n \big] & = & 0\;.
\end{array}
\right.
\ee
The products $A_{+-}^n  A_{-+}^n$ and $A_{-+}^n  A_{+-}^n $ appearing in \eqref{eq-commutation_cond2}
are the matrices of the operators
\be
B^{+}_n \Gamma_n (V) B^{-}_n \Gamma_n (V) B^{+}_n
\quad \text{ and } \quad 
B^{-}_n \Gamma_n (V) B^{+}_n \Gamma_n (V) B^{-}_n \,.
\ee
Similarly, $A_{\pm \pm}^n$ is the matrix of the operator $ B^{\pm}_n \Gamma_n (V) B^{\pm}_n$. Thus, by (\ref{eq-commutation_cond1}) { and \eqref{eq-commutation_cond2}, $X^\pm_n$ commutes with both ${\rm Mat}(B^{\pm}_n \Gamma_n (V) B^{\pm}_n)$ and ${\rm Mat}(B^{\pm}_n \Gamma_n (V) B^{\mp}_n \Gamma_n (V)B^{\pm}_n)$.}

By repeating the argument above in the direct sum of the ranges of $B^{\pm}_n$ and  $B^{0}_n$ and
considering the basis $\{ \wedge^n f_\jv \}_{\jv \in I_n^\pm} \cup \{ \wedge^n f_\jv \}_{\jv \in I_n^{0}}$
of this subspace,
we obtain similarly that 
{ $X_n^0= {\rm Mat}( B_n^0 X_n B_n^0)$ commutes with  ${\rm Mat}( B_n^0 \Gamma_n (V) B_n^0)$ and that 
$X_n^\pm$ and $X_n^0$}
commute respectively with the matrices of the operators
\begin{equation}
	B^{\pm}_n \Gamma_n (V) B^{0}_n \Gamma_n (V) B^{\pm}_n
	\quad \text{ and } \quad
	B^{0}_n \Gamma_n (V) B^{\pm}_n \Gamma_n (V) B^{0}_n\;.
\end{equation}
From the above commutation relations, $X_n^\mu$ commutes with the matrices of
$B_n^\mu \Gamma_n (V) B_n^\mu$ and $B_n^\mu \Gamma_n(V) B_n^\nu \Gamma_n (V) B_n^\mu$ for $\nu \not= \mu$. 
{ Since it is self-adjoint, it also commutes with the adjoints of these operators. Hence $X_n^\mu$ commutes
with the $\ast$-algebra} generated by these operators.
Thus if ({\bf Cyc}) is satisfied 
then $X_n^\mu = x_n^\mu B_n^\mu$ for some { $x_n^\mu \in \real$, with $X_n^\mu = B_n^\mu X_n B_n^\mu$.} 
By the observation above one concludes that { $x_n^\mu$} is independent of $\mu$ and  $X_n = x_n \identity_{\Hh^{\wedge n}}$.
Thus, Assumption ({\bf Cyc}) implies that  $X \in \Pp_1 \Bb (\Ff_{-})$ has the form given in (\ref{eq-invariant_states}).

{ The result follows by using the same arguments as in the proof of Theorem~\ref{theo-conv_equilibrium}.
}
\end{proof}

\begin{rmk}
	If all the operators in {\rm ({\bf Cyc})} have cyclic vectors,
	the above commutation relations imply { for instance that} $B_n^+ X_n B_n^{+}  = p (B_n^{+} \Gamma_n (V) B_n^{+}) = q ( B_n^{+} \Gamma_n (V) B_n^{-} \Gamma_n (V) B_n^{+}) =
	r ( B_n^{+} \Gamma_n (V) B_n^0 \Gamma_n (V) B_n^{+})$, where $p$, $q$ and $r$  
	In such a case, one can conlude under the additional assumption that the aforementioned equality
	implies
	$p = q =r= 1$, instead of relying on assumption {\rm ({\bf Cyc})}.
\end{rmk}


\section{Genericity of assumption {({\bf Diag})}}\label{legec}

Assumption { ({\bf Diag})} stating that { some} coefficients of the eigenvectors of $\Gamma(V)$ in the { $\{ f_j\}$-basis are non zero (see Sec.~\ref{sec-results_on_specific_model})} reflects the fact that these eigenvectors are in arbitrary positions with respect to that basis, excluding accidental symmetries. We show here that Haar distributed random unitary matrices satisfy { ({\bf Diag})} almost surely, which ensures this assumption is generically satisfied. 

\medskip

Let $C=(c_{jk})\in U(d)$ be a Haar distributed random unitary matrix on $\mathbb C^d$. For $\jv, \kv \in I_n$, $1\leq n\leq d$, consider the truncation $C_\jv (\kv )\in M_n(\mathbb C)$ of $C$

\be
C_\jv (\kv ) =
\begin{pmatrix} c_{j_1}(k_1) &c_{j_1}(k_2)& \cdots &c_{j_1}(k_n) \\
	c_{j_2}(k_1) &c_{j_2}(k_2)& \cdots &c_{j_2}(k_n)\\
	\vdots & \vdots &\ddots& \vdots\\
	c_{j_n}(k_1) &c_{j_n}(k_2)& \cdots &c_{j_n}(k_n)\end{pmatrix}\; . 
\ee
The matrix $C_\jv (\kv )$  is neither unitary, nor diagonalizable in general. However, the corresponding minor  $c_\jv (\kv )=\det(C_\jv (\kv ))$ is zero iff at least one of its eigenvalues vanishes, 
and the distribution of the eigenvalues of the bloc $C_\jv (\kv )$ is known explicitly in case $C$ is Haar distributed, see \cite{ZS, PR}.

 This allows us to get
\begin{lem}
Let $C\in U(d)$ be a random Haar distributed unitary matrix. Then, with probability one, all its minors are non-zero.
\end{lem}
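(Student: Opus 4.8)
The plan is to reduce the statement to a single fixed minor and then invoke the known law of the eigenvalues of a truncated Haar unitary. The first observation is that there are only finitely many minors to control: for each $1\le n\le d$ there are $\binom{d}{n}^2$ minors $c_\jv(\kv)$ with $\jv,\kv\in I_n$, so by a union bound over these finitely many events it suffices to prove that, for each fixed $n$ and each fixed pair $\jv,\kv\in I_n$, one has $c_\jv(\kv)\neq 0$ almost surely. The case $n=d$ is trivial, since then $C_\jv(\kv)=C$ is unitary and $\det C\neq 0$; so assume $1\le n\le d-1$.

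Next I would identify the distribution of the block $C_\jv(\kv)$. Since Haar measure on $U(d)$ is invariant under left and right multiplication by permutation matrices, replacing $C$ by $P_1CP_2$ for suitable permutation matrices $P_1,P_2$ that move the rows indexed by $\jv$ and the columns indexed by $\kv$ to the first $n$ positions does not change the law; hence $C_\jv(\kv)$ has the same distribution as the top-left $n\times n$ corner $C^{(n)}$ of a $d\times d$ Haar unitary. Because $c_\jv(\kv)=\det C_\jv(\kv)$ vanishes if and only if $0$ is an eigenvalue of $C_\jv(\kv)$, the claim reduces to showing that $\mathbb{P}(0\in\spec C^{(n)})=0$.

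Finally I would invoke the explicit eigenvalue density of truncated unitaries from \cite{ZS,PR}: since $d-n\ge 1$, the $n$ eigenvalues of $C^{(n)}$ lie almost surely in the open unit disk and their joint law is absolutely continuous with respect to Lebesgue measure on $\mathbb{C}^n$, with a density proportional to $\prod_{1\le i<j\le n}|z_i-z_j|^2\,\prod_{i=1}^n(1-|z_i|^2)^{d-n-1}$. The event that some eigenvalue equals $0$ is contained in a finite union of proper complex-affine subspaces of $\mathbb{C}^n$, hence is Lebesgue-null and has probability zero. Taking the union of these null events over the finitely many triples $(n,\jv,\kv)$ yields the lemma.

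None of these steps is genuinely difficult; the only points needing a little care are checking that $d-n\ge 1$ so that the cited density is the ``strongly non-unitary'' one (and not the degenerate unitary case), and bookkeeping the finitely-many-minors union bound. As an entirely self-contained alternative, one may instead note that $C\mapsto c_\jv(\kv)$ is a polynomial — hence real-analytic — function on the connected real-analytic manifold $U(d)$ which is not identically zero, since it equals $1$ at the permutation matrix $P_\sigma$ associated with the bijection $\sigma(k_i)=j_i$ (for which $C_\jv(\kv)$ is the $n\times n$ identity); its zero set is therefore a proper analytic subset, of measure zero for the Riemannian volume and hence for Haar measure, and one concludes as before by the same union bound.
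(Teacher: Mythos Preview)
Your main argument is correct and matches the paper's proof essentially step for step: reduce to the top-left $n\times n$ block via the bi-invariance of Haar measure under permutation matrices, invoke the explicit eigenvalue density of truncated Haar unitaries from \cite{ZS,PR}, observe that the event $\{0\in\spec C^{(n)}\}$ is Lebesgue-null, and take a union bound over the finitely many minors. Your additional care in separating out the case $n=d$ and checking $d-n\ge 1$ is warranted, and your alternative argument via the zero set of a nonconstant real-analytic function on the connected manifold $U(d)$ is a clean self-contained route that the paper does not mention.
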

\begin{proof}
For $1\leq n\leq d$,  all $n\times n$ truncations $C_\jv (\kv )$ for different $\jv, \kv \in I_n$ can be obtained as the upper left corner truncation, characterized by $\jv=\kv=(1, 2, \cdots , n)$, of the matrix  { $U_1 C U_2$, for well chosen permutation matrices $U_1, U_2\in U(d)$}. Since $C$ and $U_1 C U_2$ have the same distribution under the Haar measure, we can focus on the top left corner truncation. It is proven in  \cite{ZS, PR} that the distribution of the eigenvalues $\{z_1, z_2, \dots, z_n\}$ of that truncation has the explicit continuous joint probability density with respect to the Lebesgue measure on $\mathbb D^n\subset \mathbb C^n$
 \begin{equation}
 \frac{1}{\Nn_{n,d}}\prod_{1\leq i<j\leq n}|z_i-z_j|^2\prod_{1\leq i\leq n}(1-|z_i|^2)^{d-n-1},
 \end{equation}
 with normalization constant ${ \Nn_{n,d}}=\pi^n n! \prod_{j=1}^{n-1} \begin{pmatrix}d-n+j-1 \\ j\end{pmatrix}^{-1}\frac{1}{d-n+j}$.
 Hence the probability for at least one eigenvalue of the top left truncation to vanish is zero, since this is the probability of a zero measure set. Hence, the probability that the top left corner minor vanishes is equal to zero. Since there are finitely many truncations, this yields the result. 
\end{proof}


\end{document}